\newcommand{\R}{\mathbb{R}}
\newcommand{\N}{\mathbb{N}}
\newcommand{\B}{\mathcal{B}}
\newcommand{\C}{\mathcal{C}}
\newcommand{\E}{\mathcal{E}}
\newcommand{\A}{\mathcal{A}}
\newcommand{\QED}{\hfill $\square$}
\newtheorem{assumption}[definition]{Assumption}
\newtheorem{defassumption}[definition]{Definition and assumption}
\numberwithin{equation}{section}
\numberwithin{definition}{section}
\numberwithin{proposition}{section}
\numberwithin{corollary}{section}
\numberwithin{lemma}{section}
\numberwithin{theorem}{section}
\numberwithin{remark}{section}
\title{A general framework for the kinetic modelling of polyatomic gases}
\author{ Thomas Borsoni   \and
          Marzia Bisi\and
        Maria Groppi
}
\institute{Thomas Borsoni \at
               ENS Paris-Saclay, Université Paris-Saclay, Gif-sur-Yvette, France \\
               Università di Parma, Parma, Italia \\
              \email{tborsoni@ens-paris-saclay.fr}           
           \and
         Marzia Bisi \at
             Università di Parma, Parma, Italia \\
             \email{marzia.bisi@unipr.it}
             \and
             Maria Groppi \at
              Università di Parma, Parma, Italia \\
              \email{maria.groppi@unipr.it}
             }
\date{Received: date / Accepted: date}
\begin{document}

\maketitle

\begin{abstract}
A general framework for the kinetic modelling of non-relativistic polyatomic gases is proposed, where each particle is characterized both by its velocity and by its internal state, and the Boltzmann collision operator involves suitably weighted integrals over the space of internal energies. The description of the internal structure of a molecule is kept highly general, and this allows classical and semi-classical models, such as the monoatomic gas description, the continuous internal energy structure, and the description with discrete internal energy levels, to fit our framework. We prove the H-Theorem for the proposed kinetic equation of Boltzmann type in this general setting, and characterize the equilibrium Maxwellian distribution and the thermodynamic number of degrees of freedom. Euler equations are derived, as zero-order approximation in a suitable asymptotic expansion. In addition, within this general framework it is possible to build up new models, highly desirable for physical applications, where rotation and vibration are precisely described. Examples of models for the Hydrogen Fluoride gas are presented.
\end{abstract}

\section*{Acknowledgements}
This study was initiated while T.B. was visiting the Department
of Mathematical, Physical and Computer Sciences,
University of Parma, with the financial support
of the Ecole Normale Superieure Paris-Saclay.
The authors thank the support by the
University of Parma, by the Italian National Group of Mathematical
Physics (GNFM-INdAM), and by the Italian National
Research Project ''{\it Multiscale phenomena in Continuum Mechanics:
singular limits, off-equilibrium and transitions}'' (Prin
2017YBKNCE).

\tableofcontents

\section{Introduction}

The purpose of this paper is to propose a consistent general Boltzmann--type model for a polyatomic gas, able to include the kinetic models already exploited in the pertinent literature, but also to give rise to more accurate descriptions of internal energies of polyatomic molecules. The construction of a reliable mathematical model for polyatomic gaseous particles is highly desirable in view of physical applications, since the main constituents of the atmosphere are polyatomic. A review of possible problems involving monoatomic and polyatomic species, possibly undergoing also chemical reactions, may be found in \cite{Nagnibeda-Kustova,Zhdanov}. For this reason, in last decades the investigation of polyatomic particles has gained interest not only in the frame of kinetic theory, but also in Extended Thermodynamics \cite{Ruggeri-Sugiyama-new} and in the derivation of accurate schemes for fluid-dynamics equations~\cite{Xu-book2013,Toro-book2014}.

In kinetic theory, non--translational degrees of freedom of polyatomic particles are usually described by means of an internal energy variable, that may be assumed discrete or continuous. The basic features of the model with a finite set of discrete energy levels may be found in \cite{groppi1999kinetic,Giovangigli}; the gas is considered as a sort of mixture of monatomic components, each one characterized by its energy level, and particles interact by binary collisions preserving total energy, but with possible exchange of energy between its kinetic and internal (excitation) forms, allowing thus particle transitions from one energy component to another.
The kinetic model based on a continuous internal energy parameter $I$ has been proposed by Borgnakke and Larsen in~\cite{borgnakke1975statistical} and then extensively investigated in \cite{desvillettes1997modele,desvillettes2005kinetic}; the gas distribution function turns out to depend also on this energy variable, and macroscopic fields and Boltzmann operators involve integrations in $\text{d}I$ with an associated integration weight $\varphi(I)$. This measure $\varphi(I)\,\text{d}I$ is a parameter of the model, and different options for it allow to reproduce any desired number of internal degrees of freedom. With the commonly adopted choice $\varphi(I) = I^\alpha$, this number of internal degrees of freedom is independent of the temperature of the gas (describing thus polytropic gases). On the other hand, the set of discrete internal energy levels proposed in \cite{wang1951transport,groppi1999kinetic} allows to obtain a temperature-dependent number of degrees of freedom \cite{Bisi-Spiga-RicMat2017}, with a specific heat at constant volume which resembles the physical laws of statistical mechanics \cite{Landau}.
Both formulations with discrete and continuous energy turn out to be well suited also in presence of simple chemical reactions (see for instance \cite{groppi1999kinetic,desvillettes2005kinetic})  and discrete energy levels are also well adapted to problems involving complex chemistry since real molecules actually have discrete energy levels \cite{Giovangigli}.

Since Boltzmann equations are quite complicated to deal with, simpler kinetic models have been proposed, mainly of BGK or ES-BGK type, both for gases or mixtures with discrete energies \cite{GroSpiPoF2004,BGS-PhysRevE,Bisi-Travaglini} and for the continuous energy description \cite{Struchtrup,Andries-etal-2000,Brull-Schneider2009,Bisi-Monaco-Soares}.
Hydrodynamic limits of kinetic equations have been performed, at Euler or Navier--Stokes accuracy, in order to build up consistent fluid--dynamic equations for the main macroscopic fields, focusing the attention on the transport coefficients, especially on the bulk viscosity or, equivalently, on the so-called dynamical pressure, a correction to the classical scalar pressure due to the polyatomic structure of molecules \cite{Milana-2014,Ruggeri-PhysRevE-2017,Bisi-Ruggeri-Spiga}.
Such kinetic or hydrodynamic equations have been applied to simple physical problems, as the existence of shock-wave profiles \cite{Kosuge-Aoki-2018,Kosuge-Kuo-Aoki,Taniguchi-etal-2016,Pavic-Madjarevic-Simic}, the Poiseuille and thermal creep flows \cite{Funagane-etal,LeiWu-etal}, and suitable boundary conditions have been derived in order to allow the investigation of flow problems with solid boundaries \cite{Hattori-etal}.

The main drawback of these kinetic models is the fact that they wish to describe all internal features of polyatomic molecules by means of a single internal energy parameter (discrete or continuous). In order to improve the applicability of Boltzmann equations to real problems, more accurate models would be needed, able to separate the vibrational degrees of freedom from the rotational ones. More precisely, since the gap between two subsequent discrete
levels is much lower for rotational energy than for vibrational energy \cite{Herzberg}, a reasonable way of modeling could approximate the rotational part by means of a continuous variable, keeping the vibrational part discrete.
This is the main motivation for the present work; indeed, we would like to merge the existing discrete and continuous energy models in a general (abstract) framework, in order to combine their strength and obtain a full description of both rotation and vibration.
Models with discrete vibrational degrees of freedom and continuous rotational states have long been used in the literature in order to analyse vibrational state-to-state non-equilibrium phenomena \cite{Nagnibeda-Kustova,Kal20}. Some attempts to distinguish vibrational from rotational energy have been  performed also at kinetic level: in the discrete energy formulation, this aim could be achieved by choosing discrete energies depending on several discrete indices \cite{Giovangigli}, but also other models have been proposed, of BGK or Fokker-Planck--type \cite{Mathiaud-Mieussens,Dauvois-etal} or in the frame of Extended Thermodynamics~\cite{Ruggeri-vibrot} (involving two continuous energy variables). The idea of introducing two different temperatures (translational and internal) in the kinetic model for a polyatomic gas, investigating then their influence on volume viscosity and other transport coefficients, has been developed in \cite{Nagnibeda-Kustova,BG11,Aoki-Bisi-etal-PRE}. Moreover, a proper rovibrational collision model has been built up in order to study the internal energy excitation and dissociation processes behind a strong shockwave in a nitrogen flow \cite{PMMJ14}.

We propose here a general kinetic framework for the description of a single polyatomic gas, in which each particle is characterized, in addition to its position and velocity, also by a suitable internal state $\zeta$. The space of all admissible internal states $\mathcal{E}$ is endowed with a measure $\mu$.
In other words, instead of considering a continuous or a discrete variable to describe the internal state of a given molecule, we use a single parameter $\zeta$, and integration over the continuous variable or summation over the discrete one are replaced by a single integration over $\zeta$ against the given proper measure $\mu$. Moreover, we clearly distinguish the state of a molecule~$\zeta$ and the energy function $\varepsilon(\zeta)$, which associates to each state a suitable energy value.

The continuous and the discrete energy models may be included into this framework by choosing $\mathcal{E} = \mathbb{R}_+$ and $\mathcal{E} = \mathbb{N}$, respectively. We prove that, even keeping the space $\mathcal{E}$, the measure $\mu$ and the energy function $\varepsilon(\zeta)$ generic (not explicit), the collision rule determining post-collision velocities may be defined, and the corresponding Boltzmann equation fulfills the expected consistency properties. More precisely, the Boltzmann H theorem holds in the generic framework, and collision equilibria turn out to be provided by Maxwellian distributions depending on the internal energy function $\varepsilon(\zeta)$. The technical steps of the proofs rely essentially on the conservation of momentum and of total energy, therefore there is no need of fixing from the beginning the microscopic energy structure of particles.
We remark that our framework describes isotropic gases, where polarization effects due to external fields acting on (non ionized) polyatomic gases may be neglected \cite{MBKK90}.

In more detail, the content of the paper is organized as follows. In section \ref{section:presentationmodel}, we introduce the framework and its assumptions, and we make the collision rules explicit. Our method is inspired by \cite{desvillettes1997modele,desvillettes2005kinetic,bisi2005kinetic}, with some necessary modifications, since the procedure in such references is specifically designed for a continuous or discrete internal variable.  Section \ref{section:boltzmann} focuses then on the Boltzmann collision operator and on some of its main properties, as the weak formulation and the characterisation of collision invariants. In section \ref{section:Maxwellian} we prove the validity of the Boltzmann H-theorem, and we investigate the equilibrium Maxwellian distribution and the thermodynamic number of internal degrees of freedom. In section \ref{section:euler}, the hydrodynamic Euler equations corresponding to our general model are derived. Sections \ref{section:combine} and \ref{section:application} are devoted to modelling: we explain how to recover the monoatomic, continuous and discrete formulations, and we give insights on how to build consistent new models within our framework, including a description able to separate rotation and vibrations of polyatomic particles; examples of possible models are provided and commented on, with reference to the pertinent gas--dynamics literature. In Section 8, we show how is possible to reduce  models fitting the general setting to a one-real-variable description  with a suitable measure, similar to the continuous model with integration weight. We address the problem of computing this measure from the general setting, and show that the combination of a continuous model with weight and a discrete model reduces to a continuous model with a different weight that can be made explicit, allowing an easier treatment for numerical applications. Finally, section \ref{concl} contains some concluding remarks and perspectives.

\section{General setting and collision rules} \label{section:presentationmodel}

\noindent We consider a gas composed of one species of molecules, which mass we denote by $m \in \R_+^*$. Each molecule is microscopically described by its velocity, denoted by $v$, and its internal state, denoted by $\zeta$. The latter is usually related to rotation or vibration that occurs inside the molecule. We assume that there exists a set of all possible internal states $\E$, and that this set is a measure  space. All molecules, being of the same species, are related to the same space $\E$. Secondly, we assume that to each internal state $\zeta \in \E$ corresponds an energy, that is, there exists an internal energy function $\varepsilon : \E \to \R$. The energy of a molecule with velocity $v$ and internal state $\zeta$ is then equal to $\displaystyle \frac{m}{2} |v|^2 + \varepsilon(\zeta)$, the sum of its kinetic and internal parts.

\smallskip

\begin{definition}
We call  {\it space of internal states} a non-empty measure   space $(\E,\A,\mu)$, with $\mu(\E) \neq 0$.
\end{definition}

\noindent To clarify here, $\E$ is the space of internal states, $\A$ is a $\sigma$-algebra on $\E$, and $\mu$ is a (non-negative) measure on $(\E,\A)$. In the following definition, $\text{Bor}(\R)$ is the set of Borelians of $\R$.

\begin{definition}
We call {\it internal energy function} a $(\A,\text{Bor}(\R))$-measurable function
$
\varepsilon : \E \to \R.
$
\end{definition}

\noindent In our framework, we only require two assumptions on $(\E,\A,\mu)$ and $\varepsilon$, which both physically make sense. The first is to assume that $\varepsilon$ admits an infimum.

\begin{defassumption} \label{assumption:minimum}
\emph{We denote by $\varepsilon^{0}$ the essential infimum of $\varepsilon$ in $\Bar{\R}$ on $\E$ under the measure $\mu$, that is}
\begin{equation} \label{eqdef:vareps0}
\varepsilon^{0} := \emph{inf ess}_{\mu} \{ \varepsilon \} = \emph{sup} \, \Big\{ R \in \R, \; \emph{s.t. } \mu(\varepsilon < R) = 0 \Big\},
\end{equation}
\emph{and assume that $\varepsilon^{0} \in \R$}.
\end{defassumption}

\begin{definition}
We define the grounded internal energy function $\Bar{\varepsilon}$ by
\begin{equation} \label{eqdef:groundedenergyfunc}
\Bar{\varepsilon} := \varepsilon - \varepsilon^{0}.
\end{equation}
\end{definition}
Note that by definition of $\varepsilon^0$,  $\text{inf ess}_{\mu} \{ \Bar{\varepsilon}\} = 0$. \smallskip

\begin{remark}
The reader shall remark that in this paper, the value of $\varepsilon^0$ has no importance. Indeed, $\varepsilon^0$ represents the fundamental energy of configuration of the molecule, and it plays a role only when chemical reactions are involved.
\end{remark}

\noindent The second assumption is the well-definedness of the partition function.

\begin{defassumption} \label{assumption:integrability}
\emph{We define the partition function $Z$, for all $\beta > 0$, as}
\begin{equation} \label{eq:partitionfunc}
Z(\beta) := \int_{\E} e^{-\beta \Bar{\varepsilon}(\zeta)} \, \emph{d}\mu(\zeta),
\end{equation}
\emph{and assume that for all $\beta > 0$, $Z(\beta) < \infty$.}
\end{defassumption}

\noindent The choice of $\bar{\varepsilon}$ instead of simply $\varepsilon$ in the definition of $Z$ is explained in Remark \ref{remark:whyisZdefinedwithepsilonbar}. Assumptions \ref{assumption:minimum} and \ref{assumption:integrability} imply that $\forall R \in \R$, $\mu(\varepsilon < R) < \infty$, which itself implies that $\mu$ is $\sigma$-finite.

\begin{proposition} \label{prop:ZisCinfty}
$Z$ is $\mathcal{C}^{\infty}$ on $\R_+^*$ and for all $k \in \N$ and $\beta > 0$ and, denoting its derivatives by $\displaystyle Z^{(k)}(\beta) = \frac{{\rm d}^k Z}{{\rm d} \beta^k}$, we have
\begin{equation}
Z^{(k)}(\beta) =  \int_{\E} (-\Bar{\varepsilon}(\zeta))^k \, e^{-\beta \Bar{\varepsilon}(\zeta)} \, \emph{d}\mu(\zeta).
\end{equation}
\end{proposition}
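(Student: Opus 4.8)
The plan is to establish the formula by repeated differentiation under the integral sign, justified at each stage by the dominated-convergence form of Leibniz's rule. The underlying observation is that, by the Definition of the grounded energy function following Assumption \ref{assumption:minimum}, one has $\text{inf ess}_{\mu}\{\Bar{\varepsilon}\} = 0$, so $\Bar{\varepsilon} \geq 0$ holds $\mu$-almost everywhere; this sign information is precisely what makes the weight $e^{-\beta\Bar{\varepsilon}}$ monotone in $\beta$ and will let me build integrable dominating functions on one-sided neighbourhoods. I would therefore run an induction on $k$ with the statement: $Z$ is $k$ times differentiable on $\R_+^*$ with the claimed formula, and the integral on the right-hand side is finite for every $\beta > 0$.

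Before invoking Leibniz's rule I first need to know that all the candidate integrals converge. For this the key ingredient is the elementary inequality that, for every $k \in \N$ and every $s > 0$, there is a constant $C_{k,s}$ with $t^k e^{-s t} \leq C_{k,s}\, e^{-(s/2)t}$ for all $t \geq 0$, expressing that polynomial growth is absorbed by exponential decay. Applying it pointwise with $t = \Bar{\varepsilon}(\zeta) \geq 0$ and $s = \beta$ gives $\Bar{\varepsilon}^k e^{-\beta\Bar{\varepsilon}} \leq C_{k,\beta}\, e^{-(\beta/2)\Bar{\varepsilon}}$ $\mu$-a.e., and integrating yields $\int_{\E} \Bar{\varepsilon}^k e^{-\beta\Bar{\varepsilon}}\,\mathrm{d}\mu \leq C_{k,\beta}\, Z(\beta/2) < \infty$ by Assumption \ref{assumption:integrability}. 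Hence every integral appearing in the statement is finite.

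For the inductive step I would fix $\beta_0 > 0$, choose $\beta_1$ with $0 < \beta_1 < \beta_0$, and work on the neighbourhood $(\beta_1, \infty)$. On this set, since $\Bar{\varepsilon} \geq 0$ a.e., the partial derivative $\partial_\beta\big[(-\Bar{\varepsilon})^k e^{-\beta\Bar{\varepsilon}}\big] = (-\Bar{\varepsilon})^{k+1} e^{-\beta\Bar{\varepsilon}}$ is bounded in absolute value by $\Bar{\varepsilon}^{k+1} e^{-\beta_1\Bar{\varepsilon}}$, uniformly in $\beta \geq \beta_1$, and by the previous paragraph this bound is $\mu$-integrable. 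The dominated-convergence version of Leibniz's rule then permits differentiating under the integral at $\beta_0$, producing exactly $Z^{(k+1)}(\beta_0) = \int_{\E} (-\Bar{\varepsilon})^{k+1} e^{-\beta_0\Bar{\varepsilon}}\,\mathrm{d}\mu$. As $\beta_0 > 0$ was arbitrary, this advances the induction, and carrying it through all $k$ delivers both smoothness on $\R_+^*$ and the stated derivative formula.

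The only genuinely delicate point is the uniform integrable domination near the endpoint $\beta = 0$: the domination must hold on a full neighbourhood of each $\beta_0$, yet the available bound $Z(\beta/2)$ could a priori degenerate as $\beta \to 0^+$. The remedy, already built into the argument above, is to dominate only on one-sided neighbourhoods $(\beta_1, \infty)$ with $\beta_1$ strictly below $\beta_0$, which suffices for local differentiability at every interior point while never requiring any control at $\beta = 0$ itself. I expect the remaining verifications to be routine, relying solely on the sign of $\Bar{\varepsilon}$, the elementary polynomial-exponential bound, and Assumption \ref{assumption:integrability}.
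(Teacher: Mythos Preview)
Your proposal is correct and follows essentially the same approach as the paper: both rely on the elementary bound $t^k e^{-\beta t} \leq C_{k,\beta}\, e^{-(\beta/2)t}$ for $t \geq 0$ (the paper writes it as $x^k e^{-x} \leq 2^k k!\, e^{-x/2}$), combined with the non-negativity of $\Bar{\varepsilon}$ and Assumption~\ref{assumption:integrability}, to justify differentiation under the integral via dominated convergence. Your write-up is simply more explicit than the paper's terse sketch about the inductive structure and the one-sided domination on $(\beta_1,\infty)$.
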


\begin{proof}
Remarking that for all $x \in \R_+$ and $k \in \N$, $e^{\frac{x}{2}} \geq \frac{\left( \frac{x}{2} \right)^k}{k!}$, we get the inequality $x^k \, e^{-x} \leq 2^k \, k! \, e^{-\frac{x}{2}}$. Since $\Bar{\varepsilon} \geq 0$ $\mu$-a.e., we get for all $k \in \N$  and $\beta > 0$, $\int_{\E} \Bar{\varepsilon}(\zeta)^k \, e^{-\beta \Bar{\varepsilon}(\zeta)} \, \text{d}\mu(\zeta) < +\infty$ and the result comes by dominated convergence.
\QED
\end{proof}


\noindent We consider a binary collision and define the collision rule. The states (velocity and internal state) of the pre-collision molecules are denoted by $(v,\zeta)$ and $(v_*,\zeta_*)$, and the states of the post-collision molecules are denoted by $(v',\zeta')$ and $(v'_*,\zeta'_*)$. In each collision momentum and total energy are conserved

\begin{equation} \label{eq:conservation}
    \begin{cases}
   \hspace{83.5pt}  mv +  mv_* =   mv' +  mv'_* \\
    \displaystyle \frac{m}{2} |v|^2 + \varepsilon(\zeta) + \frac{m}{2} |v_*|^2 + \varepsilon(\zeta_*) = \frac{m}{2} |v'|^2 + \varepsilon(\zeta')  + \frac{m}{2} |v'_*|^2 + \varepsilon(\zeta'_*).
    \end{cases}
\end{equation}

\noindent The parallelogram identity yields

\begin{align*}
    |v-v_*|^2 &= 2 \left( |v|^2 + |v_*|^2 \right) - |v + v_*|^2 \\
    &= |v' - v'_*|^2 + \frac{4}{m} (\varepsilon(\zeta') + \varepsilon(\zeta'_*) - \varepsilon(\zeta) - \varepsilon(\zeta_*)).
\end{align*}

\noindent We set
\begin{equation} \label{eq:def:delta}
\Delta \left(v,v_*,\zeta,\zeta_*,\zeta',\zeta'_* \right) := \frac14 |v-v_*|^2 + \frac{1}{m}( \varepsilon(\zeta) + \varepsilon(\zeta_*) - \varepsilon(\zeta') - \varepsilon(\zeta'_*)),
\end{equation}
we then have
\begin{equation}  \label{eq:epsilon}
\frac14 |v'-v'_*|^2 = \Delta \left(v,v_*,\zeta,\zeta_*,\zeta',\zeta'_* \right).
\end{equation}

\noindent We deduce that the collision can occur only \emph{when $\Delta$ is non-negative}, and in this case, by a classical argument, there exists $\omega \in \mathbb{S}^2$ (two-dimensional unit sphere) such that

\begin{equation} \label{eq:outvelocities}
\begin{cases}
v' &= \displaystyle \frac{v + v_*}{2} + \sqrt{ \Delta} \, T_{\omega}\left[ \frac{v - v_*}{|v-v_*|} \right] \smallskip\\
v'_* &= \displaystyle \frac{v + v_*}{2} - \sqrt{ \Delta} \,T_{\omega}\left[ \frac{v - v_*}{|v-v_*|} \right]
\end{cases}
\end{equation}
where $T_{\omega}$ is the symmetry with respect to $(\R \omega)^{\perp}$,
$$
\forall \, V \in \mathbb{S}^2, \quad T_{\omega} \left[ V \right] := V - 2 (\omega \cdot V) \, \omega \; \in \mathbb{S}^2.
$$
It is not required to extend formula \eqref{eq:outvelocities} to the set $\big\{(v,v)  \; \text{s.t. } v \in \R^3 \big\}$, since it is negligible.

\smallskip

\noindent From the previous computation, it is clear that not any collision is possible. If, before collision, two molecules are in low energy internal states with close velocities, they cannot reach too much high energy internal states  after collision. For this reason, we  fix at first the pre- and post-collision internal states, and then we consider the space of pre-collision velocity pairs that make the collision possible. In the following definition, $E[\zeta,\zeta_*,\zeta',\zeta'_*]$ is the set of pre-collision velocity pairs $(v,v_*)$ so that a collision $(v,\zeta),(v_*,\zeta_*) \rightarrow (\cdot,\zeta'),(\cdot,\zeta'_*)$ is allowed.

\begin{definition}
For all $\zeta,\zeta_*,\zeta',\zeta'_* \in \E$, we define
\begin{equation} \label{eq:setofcoll}
E[\zeta,\zeta_*,\zeta',\zeta'_*] := \Big\{ (v,v_*) \in \R^3 \times \R^3 \text{ s.t. } \Delta \left(v,v_*,\zeta,\zeta_*,\zeta',\zeta'_* \right) \geq 0 \Big\},
\end{equation}
where $\Delta$ is defined by \eqref{eq:def:delta}.
\end{definition}
It can be proved that for all  $\zeta,\zeta_*,\zeta',\zeta'_* \in \E$, the interior of $E[\zeta,\zeta_*,\zeta',\zeta'_*]$ is non-empty, unbounded and arcwise connected.

\smallskip

\noindent Now we can define the transformation linking the pre and post-collision velocities.

\begin{definition}
We define for all $\zeta,\zeta_*,\zeta',\zeta'_* \in \E$ and $\omega \in \mathbb{S}^2$,
\begin{equation} \label{eq:transformoutvelocities}
 S_{\omega}[\zeta,\zeta_*,\zeta',\zeta'_*] :
\begin{cases}
E[\zeta,\zeta_*,\zeta',\zeta'_*] &\to \hspace{100pt} E[\zeta',\zeta'_*,\zeta,\zeta_*] \\
\displaystyle \quad (v,v_* ) &\mapsto \quad \displaystyle  \left(\frac{v + v_*}{2} + \sqrt{ \Delta} \, T_{\omega}\left[ \frac{v - v_*}{|v-v_*|} \right],  \frac{v + v_*}{2} - \sqrt{ \Delta} \,T_{\omega}\left[ \frac{v - v_*}{|v-v_*|} \right]\right).
\end{cases}
\end{equation}

\end{definition}

\begin{lemma} \label{lemmabijection} $S_{\omega}[\zeta,\zeta_*,\zeta',\zeta'_*]$ is well-defined and a bijection, with
$$
\left(S_{\omega}[\zeta,\zeta_*,\zeta',\zeta'_*]\right)^{-1} = S_{\omega}[\zeta',\zeta'_*,\zeta,\zeta_*].
$$
\end{lemma}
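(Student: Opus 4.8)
The plan is to extract three elementary but decisive facts about the map in \eqref{eq:transformoutvelocities} and then read off both well-definedness and the inverse from them. Writing $(v',v'_*) := S_\omega[\zeta,\zeta_*,\zeta',\zeta'_*](v,v_*)$ and abbreviating $\Delta := \Delta(v,v_*,\zeta,\zeta_*,\zeta',\zeta'_*)$, I would first record that the centre of mass is preserved, $v'+v'_* = v+v_*$, since the two $\pm\sqrt{\Delta}\,T_\omega[\cdots]$ terms cancel when added. Next, because $T_\omega$ is a reflection it preserves the Euclidean norm, so $\big|T_\omega[(v-v_*)/|v-v_*|]\big| = 1$ and therefore $|v'-v'_*|^2 = 4\Delta$, which is exactly \eqref{eq:epsilon}.

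These two facts already give well-definedness. Plugging $(v',v'_*)$ into the definition \eqref{eq:def:delta} with the swapped internal states $(\zeta',\zeta'_*,\zeta,\zeta_*)$ and using $\tfrac14|v'-v'_*|^2 = \Delta$ together with the definition of $\Delta$, the internal-energy contributions telescope and I expect to obtain
\[
\Delta\big(v',v'_*,\zeta',\zeta'_*,\zeta,\zeta_*\big) = \tfrac14 |v-v_*|^2 \ \ge\ 0,
\]
so that $(v',v'_*)$ indeed lies in $E[\zeta',\zeta'_*,\zeta,\zeta_*]$ as required by \eqref{eq:transformoutvelocities}.

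For the bijection I would verify directly that $S_\omega[\zeta',\zeta'_*,\zeta,\zeta_*]$ inverts $S_\omega[\zeta,\zeta_*,\zeta',\zeta'_*]$. Applying the former to $(v',v'_*)$: its centre of mass is again $\tfrac{v+v_*}{2}$; by the displayed identity its relevant $\Delta$ equals $\tfrac14|v-v_*|^2$, whose square root is $\tfrac12|v-v_*|$; and its direction vector is $\tfrac{v'-v'_*}{|v'-v'_*|} = T_\omega\big[(v-v_*)/|v-v_*|\big]$. The third ingredient is that $T_\omega$ is an involution, $T_\omega\circ T_\omega = \mathrm{Id}$ (a reflection is its own inverse), so applying $T_\omega$ once more returns $(v-v_*)/|v-v_*|$. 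Substituting, the image is $\big(\tfrac{v+v_*}{2} + \tfrac12(v-v_*),\ \tfrac{v+v_*}{2} - \tfrac12(v-v_*)\big) = (v,v_*)$. Exchanging the roles of the primed and unprimed states gives the reverse composition for free, establishing the claimed inverse.

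The one point requiring care — and the only genuine obstacle — is the normalisation $(v-v_*)/|v-v_*|$, which is undefined on the diagonal $\{v=v_*\}$, and symmetrically the inverse direction is undefined exactly where $|v'-v'_*|^2 = 4\Delta = 0$, i.e.\ on the boundary $\{\Delta = 0\}$ of $E[\zeta,\zeta_*,\zeta',\zeta'_*]$. Both are negligible lower-dimensional sets, in line with the remark following \eqref{eq:outvelocities}; the bijection is therefore understood off these sets (equivalently, between the interiors with the diagonal removed), which is all that the subsequent change of variables in the weak formulation needs. I would state this restriction explicitly at the start of the proof so that every occurrence of $\sqrt{\Delta}$ and of the normalised direction is legitimate.
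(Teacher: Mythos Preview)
Your proposal is correct and follows essentially the same approach as the paper: both use the conservation of the centre of mass, the identity $\tfrac14|v'-v'_*|^2=\Delta$, the swap computation $\Delta(v',v'_*,\zeta',\zeta'_*,\zeta,\zeta_*)=\tfrac14|v-v_*|^2$, and the involutivity of $T_\omega$ to verify that the two maps compose to the identity. Your explicit discussion of the negligible sets where the normalised direction is undefined is a useful addition that the paper handles only implicitly via the remark after \eqref{eq:outvelocities}.
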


\begin{proof} Let $\zeta,\zeta_*,\zeta',\zeta'_* \in \E$, $\omega \in \mathbb{S}^2$. First, we show that the transformation is well-defined. We fix $(v,v_*) \in E[\zeta,\zeta_*,\zeta',\zeta'_*]$. We have that $\Delta(v,v_*,\zeta,\zeta_*,\zeta',\zeta'_*) \geq 0$. We define $(v',v'_*) = S_{\omega}[\zeta,\zeta_*,\zeta',\zeta'_*](v,v_*)$. Then

\begin{align*}
\Delta(v',v'_*,\zeta',\zeta'_*,\zeta,\zeta_*) &= \frac14 |v'-v'_*|^2 + \frac{1}{m}(\varepsilon(\zeta') + \varepsilon(\zeta'_*) - \varepsilon(\zeta) - \varepsilon(\zeta_*)) \\
&= \frac14 \left(2 \sqrt{\Delta(v,v_*,\zeta,\zeta_*,\zeta',\zeta'_*)} \right)^2 + \frac{1}{m}(\varepsilon(\zeta) + \varepsilon(\zeta_*) - \varepsilon(\zeta) - \varepsilon(\zeta_*))\\
&= \Delta(v,v_*,\zeta,\zeta_*,\zeta',\zeta'_*) + \frac{1}{m}(\varepsilon(\zeta') + \varepsilon(\zeta'_*) - \varepsilon(\zeta) - \varepsilon(\zeta_*)) \\
&= \frac14 |v - v_*|^2 \geq 0.
\end{align*}
Thus $(v',v'_*) \in E[\zeta',\zeta'_*,\zeta,\zeta_*]$. The application is well-defined.

\bigskip

\noindent Let us now prove that the transformation $S_{\omega}[\zeta,\zeta_*,\zeta',\zeta'_*]$ is a bijection. We fix $(v',v'_*) \in E[\zeta',\zeta'_*,\zeta,\zeta_*]$ and define  $(v,v_*) = S_{\omega}[\zeta',\zeta'_*,\zeta,\zeta_*](v',v'_*)$. First, we just proved (with inverted notations) that $(v,v_*) \in E[\zeta,\zeta_*,\zeta',\zeta'_*]$ and $\displaystyle \Delta(v,v_*,\zeta,\zeta_*,\zeta',\zeta'_*) = \frac14|v'-v'_*|^2$.  Now we also remark that
$$
v+v_* = v'+v'_*,
$$
and
$$
T_{\omega} \left[ \frac{v-v_*}{|v-v_*|} \right] = T_{\omega} \left[  T_{\omega} \left[ \frac{v'-v'_*}{|v'-v'_*|} \right] \right] = \frac{v'-v'_*}{|v'-v'_*|},
$$
because $T_{\omega}$ is a symmetry. Since $\displaystyle \Delta(v,v_*,\zeta,\zeta_*,\zeta',\zeta'_*) = \frac14|v'-v'_*|^2$, we deduce that
\begin{align*}
\frac{v+v_*}{2} + \sqrt{\Delta(v,v_*,\zeta,\zeta_*,\zeta',\zeta'_*)} \, T_{\omega} \left[ \frac{v-v_*}{|v-v_*|} \right] = \frac{v'+v'_*}{2} +  \frac12 |v'-v'_*| \frac{v'-v'_*}{|v'-v'_*|} = v'.
\end{align*}
Analogously,
$$
\frac{v+v_*}{2} - \sqrt{\Delta(v,v_*,\zeta,\zeta_*,\zeta',\zeta'_*)} \, T_{\omega} \left[ \frac{v-v_*}{|v-v_*|} \right] = \frac{v'+v'_*}{2} -  \frac12 |v'-v'_*| \frac{v'-v'_*}{|v'-v'_*|} = v'_*,
$$
so that $S_{\omega}[\zeta,\zeta_*,\zeta',\zeta'_*](v,v_*) = (v',v'_*)$, which ends the proof.

\QED
\end{proof}

\begin{lemma} \label{lemmajacobian} The Jacobian of the transformation $S_{\omega}[\zeta,\zeta_*,\zeta',\zeta'_*]$ is given by the formula

\begin{equation}\label{eq:jacobian}
J\left[ S_{\omega}[\zeta,\zeta_*,\zeta',\zeta'_*] \right](v,v_*) = \frac{|v'-v'_*|}{|v-v_*|},
\end{equation}
where $(v',v'_*) = S_{\omega}[\zeta,\zeta_*,\zeta',\zeta'_*](v,v_*)$.
\end{lemma}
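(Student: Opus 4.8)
The plan is to reduce the computation of the full $6\times 6$ Jacobian to a one-dimensional calculation by passing to center-of-mass and relative velocities. First I would introduce $G := \frac{v+v_*}{2}$ and $w := v - v_*$, together with the analogous $G' := \frac{v'+v'_*}{2}$ and $w' := v' - v'_*$ for the image point. The linear change of variables $(v,v_*) \mapsto (G,w)$ acts identically on each of the three spatial components and its Jacobian has absolute value $1$; the same holds for $(G',w') \mapsto (v',v'_*)$. Hence, writing $S_{\omega}[\zeta,\zeta_*,\zeta',\zeta'_*]$ as the composition of these two volume-preserving linear maps with the transformation expressed in the $(G,w)$ variables, the chain rule gives that $J\left[ S_{\omega}[\zeta,\zeta_*,\zeta',\zeta'_*] \right]$ equals the absolute value of the Jacobian of that middle map.

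Next I would read off from \eqref{eq:transformoutvelocities} that in these coordinates the transformation takes the simple form $G' = G$ and $w' = \phi(w)$, where, setting $c := \frac1m(\varepsilon(\zeta) + \varepsilon(\zeta_*) - \varepsilon(\zeta') - \varepsilon(\zeta'_*))$ (a constant with respect to the velocity variables), one has $\Delta = \frac14|w|^2 + c$ and therefore
\begin{equation*}
\phi(w) = 2\sqrt{\Delta}\, T_{\omega}\!\left[\frac{w}{|w|}\right] = \sqrt{|w|^2 + 4c}\; T_{\omega}\!\left[\frac{w}{|w|}\right].
\end{equation*}
Since $G'$ depends only on $G$ and $w'$ only on $w$, the Jacobian factorizes and collapses to $|\det D\phi(w)|$, a purely $3$-dimensional quantity. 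The decisive observation here is that $\Delta$ depends on the velocities only through $|w|^2$, which is exactly what makes $\phi$ a radial map.

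The heart of the argument is thus the evaluation of $|\det D\phi|$. Using that $T_{\omega}$ is linear, I would factor $\phi = T_{\omega} \circ \tilde\psi$ with the radial stretch $\tilde\psi(w) := \frac{\sqrt{|w|^2+4c}}{|w|}\, w$. As $T_{\omega}$ is orthogonal (a reflection), $|\det DT_{\omega}| = 1$, so it remains to compute $\det D\tilde\psi$. Writing $\tilde\psi(w) = g(|w|)\,w$ with $g(r) = \sqrt{r^2+4c}/r$, the differential is the rank-one perturbation $D\tilde\psi = g(r)\,\mathrm{Id} + \frac{g'(r)}{r}\, w\,w^{\top}$, whose determinant is $g(r)^2\,\big(g(r) + r\,g'(r)\big)$. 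Since $g(r)^2 = (r^2+4c)/r^2$ and $g(r)+r\,g'(r) = \frac{\mathrm{d}}{\mathrm{d}r}\big(r\,g(r)\big) = \frac{\mathrm{d}}{\mathrm{d}r}\sqrt{r^2+4c} = r/\sqrt{r^2+4c}$, this yields $\det D\tilde\psi = \sqrt{r^2+4c}/r = |w'|/|w|$, using that $|w'| = |\phi(w)| = \sqrt{|w|^2+4c}$ because $T_{\omega}$ preserves norms. Equivalently, one may pass to spherical coordinates $w = r\sigma$: the volume element $r^2\,\mathrm{d}r\,\mathrm{d}\sigma$ is carried by $r\mapsto \rho := \sqrt{r^2+4c}$ and $\sigma \mapsto T_{\omega}\sigma$ (an isometry of $\mathbb{S}^2$, so $\mathrm{d}\sigma$ is preserved) into $\rho^2\,\mathrm{d}\rho\,\mathrm{d}\sigma = \rho\, r\,\mathrm{d}r\,\mathrm{d}\sigma$, again giving the factor $\rho/r$.

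Combining the two steps gives $J = |w'|/|w| = |v'-v'_*|/|v-v_*|$, as claimed. I do not expect a serious obstacle: the only real subtlety is the radial-map observation above, and one should merely note that the formula is understood on the interior of $E[\zeta,\zeta_*,\zeta',\zeta'_*]$, where $\Delta > 0$ and $|v-v_*| \neq 0$, so that $\phi$ is smooth and all denominators are nonzero.
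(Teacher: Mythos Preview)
Your proof is correct and follows essentially the same strategy as the paper: pass to center-of-mass/relative coordinates (Jacobian $1$), use $G'=G$ to reduce to the $3$-dimensional map $w\mapsto w'$, and exploit that this map is a radial stretch composed with the isometry $T_\omega$. The paper carries out the last step via spherical coordinates (which you also sketch as an alternative), while your primary route computes $\det D\tilde\psi$ directly from the rank-one formula $g(r)\mathrm{Id}+\frac{g'(r)}{r}ww^{\top}$; this is a cosmetic difference, and both arrive at $|w'|/|w|$.
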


\begin{proof}
This proof is based on that one of Lemma 1 in \cite{desvillettes2005kinetic}. We decompose the transformation in a chain of elementary changes of variable. For the sake of clarity, we fix $\zeta,\zeta_*,\zeta',\zeta'_* \in \E$ and $\omega \in \mathbb{S}^2$, and denote the transformation $ S_{\omega}[\zeta,\zeta_*,\zeta',\zeta'_*]$ by simply writing
$$
(v,v_*) \mapsto (v',v'_*).
$$

\smallskip

\noindent First, we pass to the center-of-mass reference frame. The transformations
$A_1 : (v,v_*) \mapsto (g,G)$
where $g = v-v_*$, $G = (v + v_*)/2$, and
$B_1 : (g',G') \mapsto (v',v'_*)$
have both Jacobian equal to 1. Since $G = G'$, we are led to study the transformation
$C_1 : g \mapsto g'.$
Now, we pass to spherical coordinates for $g$ and $g'$. We perform the transformations
$A_2 : g  \mapsto (|g|, g/|g|)$
and
$B_2 : (|g'|,g'/|g'|) \mapsto g'$,
and finally we study the transformation
$C_2 : (|g|,g/|g|) \mapsto (|g'|,g'/|g'|)$
taking into account the determinant $|g'|^2 / |g|^2$ of the  Jacobian coming from $A_2$ and $B_2$.
Since $T_{\omega}$ is a symmetry, we have

\begin{align*}
&\hspace{13.5pt} J \left[  \left( |g|,\frac{g}{|g|} \right) \mapsto \left( 2 \sqrt{\frac14 |g|^2 + \frac{1}{m}( \varepsilon(\zeta) + \varepsilon(\zeta_*) - \varepsilon(\zeta') - \varepsilon(\zeta'_*) ) }, T_{\omega}\left[ \frac{g}{|g|} \right] \right) \right]\\
&= J \left[ |g| \mapsto  2 \sqrt{\frac14 |g|^2 + \frac{1}{m}(\varepsilon(\zeta) + \varepsilon(\zeta_*) - \varepsilon(\zeta') - \varepsilon(\zeta'_*))}  \right] \\ &= \frac{\frac12 |g|}{\sqrt{\frac14 |g|^2 + \frac{1}{m}(\varepsilon(\zeta) + \varepsilon(\zeta_*) - \varepsilon(\zeta') - \varepsilon(\zeta'_*))}} = \frac{|g|}{|g'|}.
\end{align*}
Multiplying by $|g'|^2/|g|^2$, we finally obtain

\begin{equation*}
J\left[ S_{\omega}[\zeta,\zeta_*,\zeta',\zeta'_*] \right](v,v_*) = \frac{|g'|}{|g|} = \frac{|v' - v'_*|}{|v-v_*|},
\end{equation*}
where $(v',v'_*) = S_{\omega}[\zeta,\zeta_*,\zeta',\zeta'_*](v,v_*)$.
\QED
\end{proof}

\section{Boltzmann model} \label{section:boltzmann}

We consider a distribution function $f$ that represents the density of molecules of the studied gas. This density depends on 4 parameters:

\begin{itemize}
    \item 2 macroscopic parameters: the time $t \in \R_+$ and the position in space $x$, typically $x \in \R^3$ or $\mathbb{T}^3$
    \item 2 microscopic parameters: the velocity $v \in \R^3$ and the internal state $\zeta \in \E$
\end{itemize}
As a distribution function, $f$ is non-negative, so that we study $f(t,x,v,\zeta) \in \R_+$. Moreover, we assume that for all $(t,x)$, $f(t,x,\cdot,\cdot) \in L^1(\R^3 \times \E, \, \text{d}v \, \text{d}\mu(\zeta))$. The evolution of this distribution function is governed by the Boltzmann equation

\begin{equation} \label{eq:boltzmann}
\partial_t f(t,x,v,\zeta) + v \cdot \nabla_x f(t,x,v,\zeta) = \B(f,f)(t,x,v,\zeta),
\end{equation}
where $\B$ is the Boltzmann collision operator we define in a following subsection.

\medskip

\noindent If $\phi$ defined on $\R^3 \times \E$ is a microscopic extensive quantity (test function, often called molecular property), then the associated macroscopic quantity is
$$
 \int_{\E} \int_{\R^3}\phi(v,\zeta) \, f(t,x,v,\zeta) \, \text{d}v \, \text{d}\mu(\zeta).
$$
Typical choice for the molecular properties are $\phi= m$, $\phi =  mv$ and $\displaystyle \phi = \frac{m}{2} |v|^2 + \varepsilon (\zeta)$, giving  the mass density $\rho$, velocity $u$ and total energy density $e$, respectively, namely
\begin{align*}
&\rho(t,x) :=  \int_{\E} \int_{\R^3}  m \, f(t,x,v,\zeta) \, \text{d}v \, \text{d}\mu(\zeta) \\
& u(t,x) := \frac{1}{\rho(t,x)} \,  \int_{\E} \int_{\R^3}  m \, v \, f(t,x,v,\zeta) \, \text{d}v \, \text{d}\mu(\zeta) \\
&e(t,x) :=   \frac{1}{\rho(t,x)} \,  \int_{\E}  \int_{\R^3}\left( \frac{m}{2} \,  |v|^2 + \varepsilon (\zeta) \right) \, f(t,x,v,\zeta) \, \text{d}v \, \text{d}\mu(\zeta).
\end{align*}
We can also define the temperature by
\begin{equation} \label{eqdef:temperature}
T(t,x) :=  \Theta^{-1} \left( \frac{1}{\rho(t,x)} \,  \int_{\E}  \int_{\R^3}\left( \frac{m}{2} \, |v- u|^2 + \Bar{\varepsilon}(\zeta) \right) \, f(t,x,v,\zeta) \, \text{d}v \, \text{d}\mu(\zeta) \right),
\end{equation}
where we recall that by definition (equation \eqref{eqdef:groundedenergyfunc}) $\Bar{\varepsilon} = \varepsilon - \varepsilon^0$, while $\Theta : \R_+ \rightarrow \R_+$ is a suitable function related to the integral of the specific heat at constant volume (the precise definition will be given later, in equation~\eqref{eqdef:theta} in Section \ref{section:Maxwellian}).

\subsection{Boltzmann collision operator} \label{subsection:boltzmann}

The Boltzmann operator, representing the effect of collisions on the change of velocities and internal states of each molecule,
is composed by a loss  and a gain term
$$
\B = \B_+ - \B_-.
$$
%
%
We focus on the microscopic aspect and put temporarily aside the variables $t$ and $x$. When two molecules collide, not all possible post-collision states are equiprobable. This idea is translated in the concept of the collision (or scattering) kernel $b$,  which encapsulates the information on the interaction potential (for example a hard sphere or Lennard-Jones one). As well known, the collision kernel is related to the differential cross section $\sigma$ through the formula $b=|v-v_*|\sigma$.

\noindent In this paper, for a purpose of generality, following the line of \cite{groppi2004two} we make as few assumptions as possible on the collision kernel $b$.

\begin{definition}
The collision kernel $b$ is a measurable function
\begin{equation} \label{eq:defcollisionkernel}
b : \begin{cases}
\R^3 \times \R^3 \times \E \times \E \times \E \times \E \times \mathbb{S}^2 \to \R_+ \\
(v,v_*,\zeta,\zeta_*,\zeta',\zeta'_*,\omega) \mapsto b(v,v_*,\zeta,\zeta_*,\zeta',\zeta'_*,\omega).
\end{cases}
\end{equation}
\end{definition}

\smallskip

\begin{assumption}
\emph{\textbf{Symmetry}. For a.e. $v,v_* \in \R^3$, for $\mu$-a.e. $\zeta,\zeta_*,\zeta',\zeta'_* \in \E$ and a.e. $\omega \in \mathbb{S}^2$,}
\begin{equation} \label{eq:symmetry}
b(v,v_*,\zeta,\zeta_*,\zeta',\zeta'_*,\omega) = b(v,v_*,\zeta,\zeta_*,\zeta'_*,\zeta',\omega) = b(v_*,v,\zeta_*,\zeta,\zeta'_*,\zeta',\omega).
\end{equation}
\end{assumption}

\begin{assumption} \label{assumption:microreversibility}
\emph{\textbf{Micro-reversibility}. For a.e. $v,v_* \in \R^3$, for $\mu$-a.e. $\zeta,\zeta_*,\zeta',\zeta'_* \in \E$ and a.e. $\omega \in \mathbb{S}^2$,}
\begin{equation} \label{eq:micro-reversibility}
 |v-v_*| \, b(v,v_*,\zeta,\zeta_*,\zeta',\zeta'_*,\omega) = |v'-v'_*| \, b(v',v'_*,\zeta',\zeta'_*,\zeta,\zeta_*,\omega).
\end{equation}
\emph{where} $(v',v'_*) = S_{\omega}[\zeta,\zeta_*,\zeta',\zeta'_*](v,v_*)$.
\end{assumption}

\smallskip

\begin{defassumption} \label{assumption:positivity}
\emph{We denote by $\mathbb{S}^2[v,v_*,\zeta,\zeta_*,\zeta',\zeta'_*]$ the subset of $\mathbb{S}^2$ such that for all $\omega \in \mathbb{S}^2$,
$$
\omega \in \mathbb{S}^2[v,v_*,\zeta,\zeta_*,\zeta',\zeta'_*] \iff b(v,v_*,\zeta,\zeta_*,\zeta',\zeta'_*,\omega) > 0.
$$
We assume that for a.e. $v,v_* \in \R^3$ and $\mu$-a.e. $\zeta,\zeta_*,\zeta',\zeta'_* \in \E$,}
\begin{align} \label{eq:positivity}
(v,v_*) \in \, &E[\zeta,\zeta_*,\zeta',\zeta'_*] \iff \quad \mathbb{S}^2[v,v_*,\zeta,\zeta_*,\zeta',\zeta'_*] \emph{ has positive measure}, \nonumber \\
&\mathbb{S}^2 \setminus \mathbb{S}^2[v,v_*,\zeta,\zeta,\zeta,\zeta] \emph{ has zero measure}.
\end{align}
\end{defassumption}
This assumption \ref{assumption:positivity} has the following meaning: if a collision is possible, then there exists a non-negligible set of angles $\omega$ for which the kernel is positive; if, on the other hand, a collision is not allowed, then the corresponding kernel is zero. Finally, if the collision is elastic, then the set of angles $\omega$ for which the kernel is positive is the whole sphere $\mathbb{S}^2$ (minus a negligible set). We thus allow, for inelastic collisions, the kernel to be zero for a non-negligible set of angles. This is an important feature, already present for example in \cite{groppi2004two} for the  case of reacting spheres.

\noindent Note that the symmetry and micro-reversibility assumptions are not in contradiction with the positivity assumption, due to the fact the the function $\mathbf{1}_{\Delta \geq 0}$ also has symmetry and reversibility properties. Moreover, setting $(v',v'_*) = S_{\omega}[\zeta,\zeta_*,\zeta',\zeta'_*](v,v_*)$, the symmetry and micro-reversibility assumptions imply
$$
\mathbb{S}^2[v_*,v,\zeta_*,\zeta,\zeta'_*,\zeta'] = \mathbb{S}^2[v,v_*,\zeta,\zeta_*,\zeta',\zeta'_*] = \mathbb{S}^2[v',v'_*,\zeta',\zeta'_*,\zeta,\zeta_*].
$$

\noindent We are now ready to define the Boltzmann collision operator. The loss term of the collision operator expresses the fact that a collision involving a molecule with state $(v,\zeta)$ causes a change of the state. Thus for all states $(v,\zeta) \in \R^3 \times \E$, and recalling that $b$ is null for not admissible collisions, the loss term reads as

\begin{equation} \label{eq:lossterm}
\B_-(f,f)(v,\zeta) := \iiint_{\E^3} \int_{\R^3}  \int_{\mathbb{S}^2}  f(v,\zeta)f(v_*,\zeta_*) \, b(v,v_*,\zeta,\zeta_*,\zeta',\zeta'_*,\omega) \, \text{d}\omega  \,  \text{d}v_*  \, \text{d}\mu^{\otimes 3}(\zeta_*,\zeta',\zeta'_*).
\end{equation}


\noindent The gain term of the collision operator expresses the production of the state $(v,\zeta)$ by collisions. Thus the gain term is obtained by considering \emph{all collisions} for which $(v,\zeta)$ is a post-collision state.
We have already proved in Lemma \ref{lemmabijection} that if $(v',v'_*) = S_{\omega}[\zeta,\zeta_*,\zeta',\zeta'_*](v,v_*)$ then $(v,v_*) = S_{\omega}[\zeta',\zeta'_*,\zeta,\zeta_*](v',v'_*)$, therefore the gain term will be symmetric to the loss one as expected. In more detail,
denoting by $\chi$ the Dirac mass on $\R^3$, the gain term is defined by
\begin{align*}
\B_+(f,f)(v,\zeta) := \iiint_{\E^3} \iint_{E[\zeta',\zeta'_*,\zeta,\zeta_*]}  \int_{\mathbb{S}^2}  \, \mathcal{\chi}(\Tilde{v} = v)  \, f(v',\zeta')f(v'_*,\zeta'_*) \, b(v',v'_*,\zeta',\zeta'_*,\zeta,\zeta_*,\omega) \\ \text{d}\omega  \, \text{d}v' \, \text{d}v'_*  \, \text{d}\mu^{\otimes 3}(\zeta',\zeta'_*,\zeta_*),
\end{align*}
where just for convenience we define $(\tilde{v},\Tilde{v}_*) = S_{\omega}[\zeta',\zeta'_*,\zeta,\zeta_*](v',v'_*)$, in order to be able to perform the change of variables $(\Tilde{v},\Tilde{v}_*) = S_{\omega}[\zeta',\zeta'_*,\zeta,\zeta_*](v',v'_*)$ (recalling that $S_{\omega}[\zeta',\zeta'_*,\zeta,\zeta_*]$ is a $\mathcal{C}^1$-diffeomorphism on $E[\zeta',\zeta'_*,\zeta,\zeta_*]$, except on a negligible set) leading to
\begin{align*}
\B_+(f,f)(v,\zeta) = \iiint_{\E^3} \iint_{E[\zeta,\zeta_*,\zeta',\zeta'_*]}  \int_{\mathbb{S}^2}  \, \mathcal{\chi}(\Tilde{v} = v)  \, f(v',\zeta')f(v'_*,\zeta'_*) \, J \left[(S_{\omega}[\zeta',\zeta'_*,\zeta,\zeta_*])^{-1} \right] \, b(v',v'_*,\zeta',\zeta'_*,\zeta,\zeta_*,\omega) \\ \text{d}\omega  \, \text{d}\Tilde{v} \, \text{d}\Tilde{v}_*  \, \text{d}\mu^{\otimes 3}(\zeta',\zeta'_*,\zeta_*)\\
 = \iiint_{\E^3} \int_{\R^3}  \int_{\mathbb{S}^2} f(v',\zeta')f(v'_*,\zeta'_*) \, J \left[(S_{\omega}[\zeta',\zeta'_*,\zeta,\zeta_*])^{-1} \right] \, b(v',v'_*,\zeta',\zeta'_*,\zeta,\zeta_*,\omega)\, \text{d}\omega  \, \text{d}v_*  \, \text{d}\mu^{\otimes 3}(\zeta',\zeta'_*,\zeta_*),
\end{align*}
where in last line we went back to the original notation, changing $\Tilde{v}_*$ to $v_*$.
The Jacobian of $(S_{\omega}[\zeta',\zeta'_*,\zeta,\zeta_*])^{-1} = S_{\omega}[\zeta,\zeta_*,\zeta',\zeta'_*]$ is given by equation \eqref{eq:jacobian}, and it provides
\begin{align*}
\B_+(f,f)(v,\zeta) = \iiint_{\E^3} \int_{\R^3}  \int_{\mathbb{S}^2} f(v',\zeta')f(v'_*,\zeta'_*) \, \frac{|v'-v'_*|}{|v - v_*|} \, b(v',v'_*,\zeta',\zeta'_*,\zeta,\zeta_*,\omega) \text{d}\omega  \, \text{d}v_* \, \text{d}\mu^{\otimes 3}(\zeta',\zeta'_*,\zeta_*).
\end{align*}
Using now the micro-reversibility condition on $b$, Assumption \ref{assumption:microreversibility}, we finally get
\begin{equation} \label{eq:gainterm}
\B_+(f,f)(v,\zeta) = \iiint_{\E^3}  \int_{\R^3}  \int_{\mathbb{S}^2} \, f(v',\zeta')f(v'_*,\zeta'_*) \, b(v,v_*,\zeta,\zeta_*,\zeta',\zeta'_*,\omega) \, \text{d}\omega \, \text{d}v_*   \, \text{d}\mu^{\otimes 3}(\zeta_*,\zeta',\zeta'_*),
\end{equation}
where we recall that $(v',v'_*) = S_{\omega}[\zeta,\zeta_*,\zeta',\zeta'_*](v,v_*)$. Since $f$ and $b$ are non-negative measurable functions, the gain and loss terms are always defined. In order to define the full collision term, we impose the following assumption on $f$.

\begin{assumption}
\emph{For a.e. $v \in \R^3$ and for $\mu$-a.e. $\zeta \in \E$,}
$$
\B_-(f,f)(v,\zeta)  + \B_+(f,f)(v,\zeta) < \infty.
$$
\end{assumption}

\smallskip

\noindent Writing $b(\cdot) \equiv b(v,v_*,\zeta,\zeta_*,\zeta',\zeta'_*,\omega)$ to lighten the notations, the full Boltzmann collision operator writes, for all $v \in \R^3$ and $\zeta \in \E$,

\begin{equation}\label{eq:fullterm}
\B(f,f)(v,\zeta) = \iiint_{\E^3}  \int_{\R^3}  \int_{\mathbb{S}^2} \, \Big(f(v',\zeta')f(v'_*,\zeta'_*) - f(v,\zeta)f(v_*,\zeta_*) \Big) \; b(\cdot)  \;  \text{d}\omega \, \text{d}v_*  \, \text{d}\mu^{\otimes 3}(\zeta_*,\zeta',\zeta'_*),
\end{equation}
where $(v',v'_*) = S_{\omega}[\zeta,\zeta_*,\zeta',\zeta'_*](v,v_*)$.

\begin{remark}
Alternatively, one can define the Boltzmann collision operator using transition probabilities instead of $b$, see \cite{Giovangigli}.
\end{remark}

\subsection{Note about degeneracy} \label{subsection:degeneracy}
The concept of degeneracy appears when the internal description is done through the direct consideration of energy levels. For instance, in a quantum description, different proper states can sometimes have the same energy. The number of proper states sharing the same energy level is called the \emph{degeneracy} of this energy level. When degeneracy is considered in kinetic theory, it usually appears in the micro-reversibility condition of the collision kernel (see for example Section 4.2.4 in Giovangigli \cite{giovangigli2012multicomponent}, equations (4.2.7) and (4.2.8)), and ultimately in the formulation of the Boltzmann integral. We show in this subsection that, without loss of generality, it is possible to assume no degeneracy in the micro-reversibility condition and take this into account through the measure $\mu$. Note also that if the internal description is done through the proper states instead of directly through the energy levels, there is no degeneracy to consider.

\smallskip

\noindent The degeneracy is a measurable positive function on the space of internal states
$$
a : \E \to \R_+^*.
$$
The micro-reversibility condition taking degeneracy into account writes (see Section 4.2.4 in Giovangigli \cite{giovangigli2012multicomponent}, equation (4.2.8))
\begin{equation} \label{deg-symmetry}
|v-v_*| \, b(v,v_*,\zeta,\zeta_*,\zeta',\zeta'_*,\omega) \, a(\zeta) a(\zeta_*)= |v'-v'_*| \,  b(v',v'_*,\zeta',\zeta'_*,\zeta,\zeta_*,\omega) \, a(\zeta') a(\zeta'_*).
\end{equation}
Classical symmetry condition for cross-sections should be valid only if the molecular states are non-degenerate \cite{FK72}. However, Waldmann \cite{Wa58} has shown that semi-classical Boltzmann modelling is indeed admissible for all molecules, even with degeneracy properties, provided the quantum mechanical cross-sections are replaced by suitably degeneracy averaged cross-sections \cite{Wa58,EG94}. Symmetry property (\ref{deg-symmetry}) may be obtained owing to the invariance of the Hamiltonian under the combined operation of space inversion and time reversal \cite{Wa58,MS64}. We remark that in the classical model with a continuous internal energy variable $I$ \cite{borgnakke1975statistical,desvillettes1997modele,desvillettes2005kinetic}, the weight function $\varphi(I)$ in the Boltzmann kernel plays the role of the degeneracy function $a(\cdot)$.

\noindent With the same reasoning as in the previous subsection \ref{subsection:boltzmann}, the collision term writes in this case
$$
\B(f,f)(v,\zeta) = \iiint_{\E^3} \int_{\R^3}  \int_{\mathbb{S}^2}  \, \left(\frac{a(\zeta) a(\zeta_*)}{a(\zeta') a(\zeta'_*)}  f(v',\zeta')f(v'_*,\zeta'_*) - f(v,\zeta)f(v_*,\zeta_*) \right) \; b(\cdot)  \;  \text{d}\omega \, \text{d}v_*  \, \text{d}\mu^{\otimes 3}(\zeta_*,\zeta',\zeta'_*).
$$
Let us now define
$$
\Tilde{b}(v,v_*,\zeta,\zeta_*,\zeta',\zeta'_*,\omega) = \frac{b(v,v_*,\zeta,\zeta_*,\zeta',\zeta'_*,\omega)}{a(\zeta') a(\zeta'_*)}, \quad \quad  \text{d}\Tilde{\mu}(\zeta) = a(\zeta) \, \text{d}\mu(\zeta), \quad \quad \Tilde{f}(v,\zeta) =  \frac{f(v,\zeta)}{a(\zeta)}.
$$
Then $\Tilde{b}$ verifies the assumptions of our framework. Now let us set
$$
\Tilde{\B}(\Tilde{f},\Tilde{f})(v,\zeta) = \iiint_{\E^3}  \int_{\R^3}  \int_{\mathbb{S}^2} \, \Big( \Tilde{f}(v',\zeta')\Tilde{f}(v'_*,\zeta'_*) - \Tilde{f}(v,\zeta)\Tilde{f}(v_*,\zeta_*) \Big) \; \Tilde{b}(\cdot)  \; \text{d}\omega \, \text{d}v_*   \, \text{d}\Tilde{\mu}^{\otimes 3}(\zeta_*,\zeta',\zeta'_*).
$$
Then
\begin{align*}
\Tilde{\B}(\Tilde{f},\Tilde{f})(v,\zeta) &= \frac{1}{a(\zeta)} \iiint_{\E^3}  \int_{\R^3}  \int_{\mathbb{S}^2} \, \left( \frac{f(v',\zeta')f(v'_*,\zeta'_*)}{a(\zeta') a(\zeta'_*)}  - \frac{f(v,\zeta)f(v_*,\zeta_*)}{a(\zeta) a(\zeta_*)} \right) \; b(\cdot) \, a(\zeta) a(\zeta_*) \; \text{d}\omega  \, \text{d} v_* \,  \text{d}\mu^{\otimes 3}(\zeta_*,\zeta',\zeta'_*) \\
&= \frac{1}{a(\zeta)} \iiint_{\E^3} \int_{\R^3}  \int_{\mathbb{S}^2}  \, \left(\frac{a(\zeta) a(\zeta_*)}{a(\zeta') a(\zeta'_*)}  f(v',\zeta')f(v'_*,\zeta'_*) - f(v,\zeta)f(v_*,\zeta_*) \right) \; b(\cdot)  \; \text{d}\omega  \, \text{d}v_* \, \text{d}\mu^{\otimes 3}(\zeta_*,\zeta',\zeta'_*)\\
&= \frac{1}{a(\zeta)} \B(f,f)(v,\zeta).
\end{align*}
It follows that
$$
\Tilde{\B}(\Tilde{f},\Tilde{f})(v,\zeta) \text{d}\Tilde{\mu}(\zeta) =  \frac{1}{a(\zeta)} \B(f,f)(v,\zeta) a(\zeta) \text{d}\mu(\zeta) = \B(f,f)(v,\zeta)  \text{d}\mu(\zeta)
$$
and
$$
\Tilde{f}(v,\zeta)  \text{d}\Tilde{\mu}(\zeta) =  \frac{1}{a(\zeta)} f(v,\zeta) a(\zeta) \text{d}\mu(\zeta) = f(v,\zeta)  \text{d}\mu(\zeta).
$$
Thus it is equivalent to study the model $(\E,\A,\mu)$ with the kernel $b$ and degeneracy $a$, and the model $(\E,\A,\Tilde{\mu})$ with the kernel $\Tilde{b}$ and no degeneracy. In our framework, degeneracy can be simply included in the measure on the space of internal states, and does not appear in the micro-reversibility condition.

\subsection{Weak formulation}
In this section, we study the weak formulation of the collision integral that will later enable us to deduce the conservation laws and collision invariants.

\smallskip

\begin{proposition} \label{prop:weak}
Let $\psi : \R^3 \times \E \to \R$ be a measurable function such that
$$
\int_{\E} \int_{\R^3} \B_-(f,f)(v,\zeta) \left|\psi(v,\zeta) \right| \, \emph{d}v \, \emph{d}\mu(\zeta) + \int_{\E} \int_{\R^3} \B_+(f,f)(v,\zeta) \left|\psi(v,\zeta) \right| \, \emph{d}v \, \emph{d}\mu(\zeta) < \infty.
$$
Then

\begin{align*}
\int_{\E} \int_{\R^3} \B(f,f)(v,\zeta) \psi(v,\zeta) \, \emph{d}v \, \emph{d}\mu(\zeta) = - \frac14 \iiiint_{\E^4} \iint_{(\R^3)^2}  \int_{\mathbb{S}^2} \Big( f(v',\zeta')f(v'_*,\zeta'_*) - f(v,\zeta)f(v_*,\zeta_*) \Big)  \\
\times \Big[\psi(v',\zeta') + \psi(v'_*,\zeta'_*) - \psi(v,\zeta) - \psi(v_*,\zeta_*)  \Big]  b(\cdot)  \, \emph{d}\omega \, \emph{d}v \,  \emph{d}v_* \, \emph{d}\mu^{\otimes 4}(\zeta,\zeta_*,\zeta',\zeta'_*)
\end{align*}
where $(v',v'_*) = S_{\omega}[\zeta,\zeta_*,\zeta',\zeta'_*](v,v_*)$.
\end{proposition}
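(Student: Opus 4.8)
The plan is to start from the expression \eqref{eq:fullterm} for the full operator, multiply by $\psi(v,\zeta)$ and integrate against $\text{d}v\,\text{d}\mu(\zeta)$. The hypothesis on $\psi$ guarantees that $\B_-(f,f)\,|\psi|$ and $\B_+(f,f)\,|\psi|$ are both integrable, so Tonelli--Fubini applies and all the rearrangements below are legitimate on the level of absolutely convergent integrals rather than merely formal. Writing throughout $f'=f(v',\zeta')$, $ff_*=f(v,\zeta)f(v_*,\zeta_*)$ and $(v',v'_*)=S_{\omega}[\zeta,\zeta_*,\zeta',\zeta'_*](v,v_*)$, this first gives
\begin{equation*}
\int_{\E}\int_{\R^3}\B(f,f)\,\psi\,\text{d}v\,\text{d}\mu(\zeta)=\iiiint_{\E^4}\iint_{(\R^3)^2}\int_{\mathbb{S}^2}\big(f'f'_*-ff_*\big)\,\psi(v,\zeta)\,b(\cdot)\,\text{d}\omega\,\text{d}v\,\text{d}v_*\,\text{d}\mu^{\otimes4}.
\end{equation*}

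The first step I would carry out is the symmetrization in the two colliding molecules. Relabelling the dummy variables by the simultaneous swap $v\leftrightarrow v_*$, $\zeta\leftrightarrow\zeta_*$, $\zeta'\leftrightarrow\zeta'_*$ leaves the domain $\E^4\times(\R^3)^2\times\mathbb{S}^2$ and its measure invariant. Since $\Delta$ and $(v+v_*)/2$ are invariant while $(v-v_*)/|v-v_*|$ changes sign, formula \eqref{eq:outvelocities} shows this swap exchanges $v'\leftrightarrow v'_*$; hence $f'f'_*-ff_*$ is unchanged, $b(\cdot)$ is invariant by the symmetry Assumption \eqref{eq:symmetry}, and $\psi(v,\zeta)$ is replaced by $\psi(v_*,\zeta_*)$. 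Averaging the original expression and the relabelled one yields
\begin{equation*}
\int_{\E}\int_{\R^3}\B(f,f)\,\psi\,\text{d}v\,\text{d}\mu(\zeta)=\frac12\iiiint_{\E^4}\iint_{(\R^3)^2}\int_{\mathbb{S}^2}\big(f'f'_*-ff_*\big)\big[\psi(v,\zeta)+\psi(v_*,\zeta_*)\big]\,b(\cdot)\,\text{d}\omega\,\text{d}v\,\text{d}v_*\,\text{d}\mu^{\otimes4}.
\end{equation*}

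The second step exchanges pre- and post-collision configurations inside the double integral just obtained. I would relabel the internal dummies by $\zeta\leftrightarrow\zeta'$, $\zeta_*\leftrightarrow\zeta'_*$ and then, at fixed internal states and $\omega$, perform the velocity change of variables $(v,v_*)\mapsto S_{\omega}[\zeta',\zeta'_*,\zeta,\zeta_*](v,v_*)$. By the positivity Assumption \eqref{eq:positivity} the kernel vanishes outside the collision set, so the substitution may be run on $E[\zeta',\zeta'_*,\zeta,\zeta_*]$, where Lemma \ref{lemmabijection} provides a bijection onto $E[\zeta,\zeta_*,\zeta',\zeta'_*]$ with inverse $S_{\omega}[\zeta,\zeta_*,\zeta',\zeta'_*]$. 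The Jacobian factor $|v'-v'_*|/|v-v_*|$ of Lemma \ref{lemmajacobian} combines with the micro-reversibility Assumption \eqref{eq:micro-reversibility} to convert $b(v',v'_*,\zeta',\zeta'_*,\zeta,\zeta_*,\omega)\,\text{d}v'\,\text{d}v'_*$ back into $b(v,v_*,\zeta,\zeta_*,\zeta',\zeta'_*,\omega)\,\text{d}v\,\text{d}v_*$, while $f'f'_*-ff_*$ changes sign and the test-function bracket becomes $\psi(v',\zeta')+\psi(v'_*,\zeta'_*)$. This produces a second expression for the molecule-symmetrized integral, equal to minus the one above with $\psi(v,\zeta)+\psi(v_*,\zeta_*)$ replaced by $\psi(v',\zeta')+\psi(v'_*,\zeta'_*)$.

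Finally I would add these two expressions for the molecule-symmetrized integral and divide by two; the bracket becomes $-\big[\psi(v',\zeta')+\psi(v'_*,\zeta'_*)-\psi(v,\zeta)-\psi(v_*,\zeta_*)\big]$, and together with the $\tfrac12$ already carried from the first symmetrization the overall prefactor $\tfrac12\cdot\tfrac12=\tfrac14$ reproduces exactly the claimed identity. The step I expect to be the main obstacle is the rigorous justification of the velocity substitution in the second step: because $S_{\omega}[\zeta',\zeta'_*,\zeta,\zeta_*]$ is only a $\C^1$-diffeomorphism on the interior of the collision set, one must use the positivity Assumption \eqref{eq:positivity} to confine the integral to the support of $b$, together with the fact that the degenerate set $\{v=v_*\}$ and the boundary $\{\Delta=0\}$ are Lebesgue-negligible, so that the Jacobian formula \eqref{eq:jacobian} genuinely applies; the absolute-integrability hypothesis on $\psi$ is what makes every reindexing and substitution valid rather than purely formal.
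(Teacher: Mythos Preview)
Your proposal is correct and uses the same underlying ingredients as the paper's proof: the symmetry assumption \eqref{eq:symmetry}, the bijection of Lemma \ref{lemmabijection}, the Jacobian formula \eqref{eq:jacobian}, and the micro-reversibility condition \eqref{eq:micro-reversibility}. The only difference is organizational: the paper treats $\int\B_-(f,f)\,\psi$ and $\int\B_+(f,f)\,\psi$ separately, writing each as an average of four terms before subtracting, whereas you work directly with the full operator $\B$ and apply the two classical symmetrizations (swap of colliding particles, then pre/post exchange) in sequence. Your route is somewhat more streamlined; the paper's is more explicit about how each term in the final identity arises from the loss or the gain part, which can be helpful when one later needs sharper control on $\B_\pm$ individually. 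Both arguments rely on exactly the same change-of-variables and positivity considerations you flag at the end, so there is no substantive gap.
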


\begin{proof}

\noindent First we have, thanks to the symmetry conditions on  $b$ and Fubini's Theorem,
\begin{align*}
&\int_{\E} \int_{\R^3} \B_-(f,f)(v,\zeta) \psi(v,\zeta) \, \text{d}v \, \text{d}\mu(\zeta) \\
&= \iiiint_{\E^4} \iint_{(\R^3)^2}  \int_{\mathbb{S}^2}  \, f(v,\zeta)f(v_*,\zeta_*) \, \psi(v,\zeta) b(v,v_*,\zeta,\zeta_*,\zeta',\zeta'_*,\omega) \, \text{d}\omega \,  \text{d}v  \,  \text{d}v_*  \, \text{d}\mu^{\otimes 4}(\zeta,\zeta_*,\zeta',\zeta'_*)\\
&=  \iiiint_{\E^4}  \iint_{(\R^3)^2}  \int_{\mathbb{S}^2}  \, f(v_*,\zeta_*)f(v,\zeta) \, \psi(v_*,\zeta_*) b(v_*,v,\zeta_*,\zeta,\zeta'_*,\zeta',\omega) \, \text{d}\omega \,  \text{d}v_* \, \text{d}v \,  \text{d}\mu^{\otimes 4}(\zeta_*,\zeta,\zeta'_*,\zeta')  \\
&= \iiiint_{\E^4} \iint_{(\R^3)^2}  \int_{\mathbb{S}^2}  \, f(v,\zeta)f(v_*,\zeta_*) \, \psi(v_*,\zeta_*) b(v,v_*,\zeta,\zeta_*,\zeta',\zeta'_*,\omega) \, \text{d}\omega \,  \text{d}v  \,  \text{d}v_*  \, \text{d}\mu^{\otimes 4}(\zeta,\zeta_*,\zeta',\zeta'_*),
\end{align*}
thus
\begin{align*}
&\int_{\E} \int_{\R^3} \B_-(f,f)(v,\zeta) \psi(v,\zeta) \, \text{d}v \, \text{d}\mu(\zeta)\\ & = \frac12 \iiiint_{\E^4}  \iint_{(\R^3)^2} \int_{\mathbb{S}^2}  \, f(v,\zeta)f(v_*,\zeta_*) \, [\psi(v,\zeta) + \psi(v_*,\zeta_*)] \, b(\cdot) \, \text{d}\omega \,  \text{d}v  \,  \text{d}v_*  \, \text{d}\mu^{\otimes 4}(\zeta,\zeta_*,\zeta',\zeta'_*).
\end{align*}
Now by renaming all the variables, we get
\begin{align*}
&\int_{\E} \int_{\R^3} \B_-(f,f)(v,\zeta) \psi(v,\zeta) \, \text{d}v \, \text{d}\mu(\zeta)\\ & = \frac12 \iiiint_{\E^4}  \iint_{(\R^3)^2}  \int_{\mathbb{S}^2} f(v',\zeta')f(v'_*,\zeta'_*) \, [\psi(v',\zeta') + \psi(v'_*,\zeta'_*)]  b(v',v'_*,\zeta',\zeta'_*,\zeta,\zeta_*,\omega) \, \text{d}\omega \, \text{d}v'   \text{d}v'_* \,  \text{d}\mu^{\otimes 4}(\zeta',\zeta'_*,\zeta,\zeta_*).
\end{align*}
We now perform the change of variables $(v,v_*) = S_{\omega}[\zeta',\zeta'_*,\zeta,\zeta_*](v',v'_*)$, use the formula for the Jacobian \eqref{eq:jacobian} and the micro-reversibility condition and obtain
\begin{align*}
&\int_{\E} \int_{\R^3} \B_-(f,f)(v,\zeta) \psi(v,\zeta) \, \text{d}v \, \text{d}\mu(\zeta)\\ & = \frac12 \iiiint_{\E^4}  \iint_{(\R^3)^2}  \int_{\mathbb{S}^2} \, f(v',\zeta')f(v'_*,\zeta'_*) \, [\psi(v',\zeta') + \psi(v'_*,\zeta'_*)] \, b(v,v_*,\zeta,\zeta_*,\zeta',\zeta'_*,\omega) \, \text{d}\omega \,  \text{d}v  \,  \text{d}v_*  \, \text{d}\mu^{\otimes 4}(\zeta,\zeta_*,\zeta',\zeta'_*).
\end{align*}
where $(v',v'_*) = S_{\omega}[\zeta,\zeta_*,\zeta',\zeta'_*](v,v_*)$. We deduce that
\begin{align*}
\int_{\E} \int_{\R^3} \B_-(f,f)(v,\zeta) \psi(v,\zeta) \, \text{d}v \, \text{d}\mu(\zeta)  \hspace{260pt} \\  = \frac14 \iiiint_{\E^4}  \iint_{(\R^3)^2}  \int_{\mathbb{S}^2} \, \Big(f(v',\zeta')f(v'_*,\zeta'_*) \, [\psi(v',\zeta') + \psi(v'_*,\zeta'_*)] + f(v,\zeta)f(v_*,\zeta_*) \, [\psi(v,\zeta) + \psi(v_*,\zeta_*)] \Big) \\ b(\cdot) \, \text{d}\omega \,  \text{d}v  \,  \text{d}v_*  \, \text{d}\mu^{\otimes 4}(\zeta,\zeta_*,\zeta',\zeta'_*).
\end{align*}
where $(v',v'_*) = S_{\omega}[\zeta,\zeta_*,\zeta',\zeta'_*](v,v_*)$.

\bigskip
\noindent On the other hand, since
$$
\B_+(f,f)(v,\zeta) = \iiint_{\E^3}  \int_{\R^3}  \int_{\mathbb{S}^2} \, f(v',\zeta')f(v'_*,\zeta'_*) \, b(v,v_*,\zeta,\zeta_*,\zeta',\zeta'_*,\omega) \, \text{d}\omega \, \text{d}v_*   \, \text{d}\mu^{\otimes 3}(\zeta_*,\zeta',\zeta'_*),
$$
where $(v',v'_*) = S_{\omega}[\zeta,\zeta_*,\zeta',\zeta'_*](v,v_*)$, we have
\begin{align*}
&\int_{\E} \int_{\R^3} \B_+(f,f)(v,\zeta) \psi(v,\zeta) \, \text{d}v \, \text{d}\mu(\zeta)\\ & = \iiiint_{\E^4}   \iint_{(\R^3)^2} \int_{\mathbb{S}^2} \,  f(v',\zeta')f(v'_*,\zeta'_*) \, \psi(v,\zeta) \, b(v,v_*,\zeta,\zeta_*,\zeta',\zeta'_*,\omega) \, \text{d}\omega \, \text{d}v \, \text{d}v_* \, \text{d}\mu^{\otimes 4}(\zeta,\zeta_*,\zeta',\zeta'_*).
\end{align*}
Applying the change of variables $(v',v'_*) = S_{\omega}[\zeta,\zeta_*,\zeta',\zeta'_*](v,v_*)$ using the formula for the Jacobian, equation \eqref{eq:jacobian}, together with the micro-reversibility condition on $b$, Assumption \ref{assumption:microreversibility}, we obtain
\begin{align*}
&\int_{\E} \int_{\R^3} \B_+(f,f)(v,\zeta) \psi(v,\zeta) \, \text{d}v \, \text{d}\mu(\zeta)\\ & = \iiiint_{\E^4}   \iint_{(\R^3)^2} \int_{\mathbb{S}^2} \,  f(v',\zeta')f(v'_*,\zeta'_*) \, \psi(v,\zeta) \, b(v',v'_*,\zeta',\zeta'_*,\zeta,\zeta_*,\omega) \, \text{d}\omega \, \text{d}v' \, \text{d}v'_* \, \text{d}\mu^{\otimes 4}(\zeta',\zeta'_*,\zeta,\zeta_*)
\end{align*}
where $v$ in the last integral is defined by $(v,v_*) = S_{\omega}[\zeta',\zeta'_*,\zeta,\zeta_*] (v',v'_*)$. By simply renaming the variables (exchange of prime and non-prime), we get
\begin{align*}
&\int_{\E} \int_{\R^3} \B_+(f,f)(v,\zeta) \psi(v,\zeta) \, \text{d}v \, \text{d}\mu(\zeta)\\ & = \iiiint_{\E^4}  \iint_{(\R^3)^2} \int_{\mathbb{S}^2}  \,  f(v,\zeta)f(v_*,\zeta_*) \, \psi(v',\zeta') \, b(v,v_*,\zeta,\zeta_*,\zeta',\zeta'_*,\omega) \, \text{d}\omega \, \text{d}v \, \text{d}v_* \, \text{d}\mu^{\otimes 4}(\zeta,\zeta_*,\zeta',\zeta'_*)
\end{align*}
where $v'$ is defined by $(v',v'_*) = S_{\omega}[\zeta,\zeta_*,\zeta',\zeta'_*] (v,v_*)$. By using the same arguments as previously, we get
\begin{align*}
\int_{\E} \int_{\R^3} \B_+(f,f)(v,\zeta) \psi(v,\zeta) \, \text{d}v \, \text{d}\mu(\zeta) \hspace{260pt} \\  = \frac14 \iiiint_{\E^4}  \iint_{(\R^3)^2}  \int_{\mathbb{S}^2} \, \Big(f(v',\zeta')f(v'_*,\zeta'_*) \, [\psi(v,\zeta) + \psi(v_*,\zeta_*)] + f(v,\zeta)f(v_*,\zeta_*) \, [\psi(v',\zeta') + \psi(v'_*,\zeta'_*)] \Big) \\ b(\cdot) \, \text{d}\omega \,  \text{d}v  \,  \text{d}v_*  \, \text{d}\mu^{\otimes 4}(\zeta,\zeta_*,\zeta',\zeta'_*).
\end{align*}
where $(v',v'_*) = S_{\omega}[\zeta,\zeta_*,\zeta',\zeta'_*](v,v_*)$. Putting the previous results together yields
\begin{align*}
\int_{\E} \int_{\R^3} \B(f,f)(v,\zeta) \psi(v,\zeta) \, \text{d}v \, \text{d}\mu(\zeta) = - \frac14 \iiiint_{\E^4} \iint_{(\R^3)^2}  \int_{\mathbb{S}^2}  \Big( f(v',\zeta')f(v'_*,\zeta'_*) - f(v,\zeta)f(v_*,\zeta_*) \Big)  \\
\times \Big[\psi(v',\zeta') + \psi(v'_*,\zeta'_*) - \psi(v,\zeta) - \psi(v_*,\zeta_*)  \Big]  b(\cdot)  \, \text{d}\omega \, \text{d}v \,  \text{d}v_* \, \text{d}\mu^{\otimes 4}(\zeta,\zeta_*,\zeta',\zeta'_*).
\end{align*}
\QED
\end{proof}

\subsection{Collision invariants}

\begin{definition}
$\psi$ is a \emph{collision invariant} if and only if $\text{for $\mu$-a.e. } \zeta,\zeta_*,\zeta',\zeta'_* \in \E, \text{ a.e. } (v,v_*) \in E[\zeta,\zeta_*,\zeta',\zeta'_*]$ and a.e. $\omega \in \mathbb{S}^2[v,v_*,\zeta,\zeta_*,\zeta',\zeta'_*]$,
\begin{equation} \label{eq:collisioninvariant}
\psi(v',\zeta') + \psi(v'_*,\zeta'_*) = \psi(v,\zeta) + \psi(v_*,\zeta_*),
\end{equation}
where $(v',v'_*) = S_{\omega}[\zeta,\zeta_*,\zeta',\zeta'_*] (v,v_*)$.
\end{definition}

\noindent First, we immediately deduce from the conservation laws \eqref{eq:conservation} that if there exists $\alpha \in \R, \; \beta \in \R^3$ and $\gamma \in \R$ such that for a.e. $v \in \R^3$ and $\mu$-a.e. $\zeta \in \E$,
$$
\psi(v,\zeta) = \alpha +  \beta \cdot v + \gamma \left( \frac{m}{2} \,|v|^2 + \varepsilon(\zeta) \right),
$$
then $\psi$ is a collision invariant. In this subsection, we prove that all collision invariants verifying a certain integrability assumption have this form.

\bigskip

\begin{lemma} \label{lemma:meanS}
For all $\zeta \in \E$ and $\omega \in \mathbb{S}^2$,
$$
S_{\omega}[\zeta,\zeta,\zeta,\zeta] \equiv \Bar{S}_{\omega} :
\begin{cases}
\R^3 \times \R^3 \hspace{20pt} \longrightarrow \hspace{50pt} \R^3 \times \R^3 \\
\displaystyle (v,v_*) \mapsto \left( \frac{v + v_*}{2} + \frac{|v-v_*|}{2} T_{\omega} \left[ \frac{v-v_*}{|v-v_*|} \right], \frac{v + v_*}{2} - \frac{|v-v_*|}{2} T_{\omega} \left[ \frac{v-v_*}{|v-v_*|} \right] \right)
\end{cases}
$$
\end{lemma}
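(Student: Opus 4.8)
The plan is to observe that this lemma is a direct specialization of the transformation $S_\omega[\zeta_1,\zeta_2,\zeta_3,\zeta_4]$ already introduced, evaluated at the diagonal $\zeta_1=\zeta_2=\zeta_3=\zeta_4=\zeta$. The entire content reduces to computing the quantity $\Delta$ in this degenerate case and identifying the domain. There is no genuine obstacle here; the work is purely a substitution.

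First I would compute $\Delta(v,v_*,\zeta,\zeta,\zeta,\zeta)$ using its definition \eqref{eq:def:delta}. Since all four internal states coincide, the internal-energy contribution is
$$
\frac{1}{m}\big(\varepsilon(\zeta)+\varepsilon(\zeta)-\varepsilon(\zeta)-\varepsilon(\zeta)\big)=0,
$$
so that $\Delta(v,v_*,\zeta,\zeta,\zeta,\zeta)=\tfrac14|v-v_*|^2$ and hence $\sqrt{\Delta}=\tfrac12|v-v_*|$. Note this value is independent of the particular $\zeta$, which is exactly why the resulting map deserves the uniform notation $\Bar{S}_\omega$.

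Next I would identify the domain. By the defining formula \eqref{eq:setofcoll},
$$
E[\zeta,\zeta,\zeta,\zeta]=\Big\{(v,v_*)\in\R^3\times\R^3 \ \text{s.t. } \tfrac14|v-v_*|^2\geq 0\Big\}=\R^3\times\R^3,
$$
so $S_\omega[\zeta,\zeta,\zeta,\zeta]$ is indeed globally defined on $\R^3\times\R^3$, matching the stated domain of $\Bar{S}_\omega$.

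Finally I would substitute $\sqrt{\Delta}=\tfrac12|v-v_*|$ into the definition \eqref{eq:transformoutvelocities} of $S_\omega[\zeta,\zeta,\zeta,\zeta]$, obtaining
$$
(v,v_*)\mapsto\Big(\tfrac{v+v_*}{2}+\tfrac{|v-v_*|}{2}\,T_\omega\big[\tfrac{v-v_*}{|v-v_*|}\big],\ \tfrac{v+v_*}{2}-\tfrac{|v-v_*|}{2}\,T_\omega\big[\tfrac{v-v_*}{|v-v_*|}\big]\Big),
$$
which is precisely the expression defining $\Bar{S}_\omega$. This establishes the identity $S_\omega[\zeta,\zeta,\zeta,\zeta]\equiv\Bar{S}_\omega$ and completes the proof.
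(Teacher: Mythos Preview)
Your proposal is correct and follows essentially the same approach as the paper: compute $\Delta$ on the diagonal to get $\tfrac14|v-v_*|^2$, deduce that the domain $E[\zeta,\zeta,\zeta,\zeta]$ is all of $\R^3\times\R^3$, and substitute $\sqrt{\Delta}=\tfrac12|v-v_*|$ into the defining formula \eqref{eq:transformoutvelocities}. The paper's proof does exactly these three steps in the same order.
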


\begin{proof}
For any $v,v_* \in \R^3$,
$$
\Delta(v,v_*,\zeta,\zeta,\zeta,\zeta) = \frac14 |v-v_*|^2 + \frac{1}{m} (\varepsilon(\zeta) + \varepsilon(\zeta) - \varepsilon(\zeta) - \varepsilon(\zeta)) =  \frac14 |v-v_*|^2 \geq 0
$$
so that
$$
E[\zeta,\zeta,\zeta,\zeta] = \Big\{ (v,v_*) \in \R^3 \times \R^3 \text{ s.t. } \Delta(v,v_*,\zeta,\zeta,\zeta,\zeta) \geq 0 \Big\} = \R^3 \times \R^3.
$$
Now let $(v,v_*) \in \R^3 \times \R^3$. We set $(v',v'_*) = S_{\omega}[\zeta,\zeta,\zeta,\zeta](v,v_*)$. Then
\begin{align*}
v' &= \frac{v + v_*}{2} + \sqrt{ \Delta(\cdot) } \, T_{\omega} \left[ \frac{v-v_*}{|v-v_*|} \right]\\
&= \frac{v + v_*}{2} + \frac{|v-v_*|}{2} \, T_{\omega} \left[ \frac{v-v_*}{|v-v_*|} \right],
\end{align*}
and analogously
$$
v'_* = \frac{v + v_*}{2} - \frac{|v-v_*|}{2} \, T_{\omega} \left[ \frac{v-v_*}{|v-v_*|} \right].
$$
Thus we indeed have
$$
S_{\omega}[\zeta,\zeta,\zeta,\zeta] :
\begin{cases}
\displaystyle \R^3 \times \R^3 \hspace{20pt} \longrightarrow \hspace{50pt} \R^3 \times \R^3 \\
\displaystyle (v,v_*) \mapsto \left( \frac{v + v_*}{2} + \frac{|v-v_*|}{2} \, T_{\omega} \left[ \frac{v-v_*}{|v-v_*|} \right], \frac{v + v_*}{2} - \frac{|v-v_*|}{2} \, T_{\omega} \left[ \frac{v-v_*}{|v-v_*|} \right] \right).
\end{cases}
$$
\QED

\end{proof}

\smallskip
\noindent The following Theorem \ref{theoremcollisioninvariants} is the well-known characterization of collision invariants in the elastic case.

\begin{theorem} \label{theoremcollisioninvariants}
Let $r_0 > 0$ and $\phi \in L^1 \left(\R^3, \, e^{-|v|^2/r_0} \, \emph{d}v \right)$ such that for a.e. $(v,v_*) \in \R^3 \times \R^3 \text{ and } \omega \in \mathbb{S}^2$,
$$
\phi(v'(v,v_*,\omega)) + \phi(v'_*(v,v_*,\omega)) = \phi(v) + \phi(v_*).
$$
with $(v'(v,v_*,\omega),v'_*(v,v_*,\omega)) = \Bar{S}_{\omega}(v,v_*)$. Then there exists $(\alpha, \beta, \gamma) \in \R \times \R^3 \times \R$ such that for a.e. $v \in \R^3$,
$$
\phi(v) = \alpha + \beta \cdot v + \frac{\gamma}{2} |v|^2.
$$
\end{theorem}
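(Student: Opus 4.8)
The plan is to reduce to the case of a smooth $\phi$ by convolution, then exploit the geometry of $\Bar{S}_{\omega}$ to pin down the Hessian of $\phi$ pointwise, and finally pass to the limit. The starting observation is that elastic collisions are translation invariant: if $(v',v'_*) = \Bar{S}_{\omega}(v,v_*)$ then $(v'-a,v'_*-a) = \Bar{S}_{\omega}(v-a,v_*-a)$ for every $a \in \R^3$, since $\Bar{S}_{\omega}$ depends only on the midpoint $\tfrac{v+v_*}{2}$ (shifted by $a$) and on $v-v_*$ (unchanged). Hence, taking a smooth nonnegative mollifier $\rho_\delta$ with unit mass and writing $\phi_\delta := \phi * \rho_\delta$, I would show that $\phi_\delta$ again satisfies the invariance relation: writing $\phi_\delta(v') = \int \phi(v'-a)\,\rho_\delta(a)\,\mathrm{d}a$ and applying the hypothesis to the translated pair $(v-a,v_*-a)$, a Fubini argument does the job — the null set of bad triples $(v,v_*,\omega)$ pulls back to a null set under the measure-preserving shift $(v,v_*,a)\mapsto(v-a,v_*-a,a)$ at fixed $\omega$, so for a.e.\ $(v,v_*,\omega)$ the integrand vanishes for a.e.\ $a$. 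Because $\phi\, e^{-|v|^2/r_0}\in L^1$ forces $\phi\in L^1_{\mathrm{loc}}$, each $\phi_\delta$ is $\C^\infty$, and the identity, holding a.e.\ for a continuous function, then holds for all $(v,v_*,\omega)$.

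For smooth $\phi_\delta$ I would set $m = \tfrac{v+v_*}{2}$, $r = \tfrac{|v-v_*|}{2}$ and $\hat g = \tfrac{v-v_*}{|v-v_*|}$, so that $\Bar{S}_{\omega}(v,v_*) = \bigl(m + r\,T_{\omega}[\hat g],\, m - r\,T_{\omega}[\hat g]\bigr)$. As $\omega$ ranges over $\mathbb{S}^2$ the Householder reflection $T_{\omega}[\hat g]$ ranges over all of $\mathbb{S}^2$ (given a target $\sigma\neq\hat g$, take $\omega$ parallel to $\hat g-\sigma$; for $\sigma=\hat g$ take any $\omega\perp\hat g$). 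Therefore the invariance relation says precisely that $\phi_\delta(m+r\sigma)+\phi_\delta(m-r\sigma)$ is independent of $\sigma\in\mathbb{S}^2$, for every $m\in\R^3$ and $r\geq 0$. Comparing the second-order Taylor coefficients in $r$ at $r=0$ (odd orders cancel) for two directions $\sigma,\tau$ yields $\sigma^\top D^2\phi_\delta(m)\,\sigma = \tau^\top D^2\phi_\delta(m)\,\tau$, so the quadratic form associated with the Hessian is constant on the unit sphere. This forces $D^2\phi_\delta(m) = \lambda_\delta(m)\,\mathrm{Id}$, i.e.\ $\partial_{ij}\phi_\delta=0$ for $i\neq j$ and $\partial_{ii}\phi_\delta=\lambda_\delta$ for all $i$.

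It remains to see that $\lambda_\delta$ is constant: for any index $k$ pick $i\neq k$ (possible since the dimension is $3$); then $\partial_k\lambda_\delta = \partial_k\partial_{ii}\phi_\delta = \partial_i\partial_{ik}\phi_\delta = 0$ because $\partial_{ik}\phi_\delta\equiv 0$. Hence $\lambda_\delta\equiv\gamma_\delta$ for a constant $\gamma_\delta$, so $D^2\bigl(\phi_\delta - \tfrac{\gamma_\delta}{2}|v|^2\bigr)\equiv 0$ and $\phi_\delta(v) = \alpha_\delta + \beta_\delta\cdot v + \tfrac{\gamma_\delta}{2}|v|^2$. Finally $\phi_\delta\to\phi$ in $L^1_{\mathrm{loc}}$; since the functions of this form constitute a fixed finite-dimensional space, $L^1_{\mathrm{loc}}$-convergence forces the coefficients $(\alpha_\delta,\beta_\delta,\gamma_\delta)$ to converge, and the limit gives $\phi(v) = \alpha + \beta\cdot v + \tfrac{\gamma}{2}|v|^2$ for a.e.\ $v$.

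I expect the genuine difficulty to lie entirely in the regularization step: verifying that convolution preserves the \emph{almost everywhere} functional equation and upgrading ``a.e.'' to ``everywhere'' for the smooth approximations, which is where the translation invariance of $\Bar{S}_{\omega}$ together with the Fubini argument is essential. Once $\phi$ is known to be smooth, the determination of the Hessian and the passage to the limit are short and elementary.
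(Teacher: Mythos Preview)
Your argument is correct. The paper itself does not prove this theorem but defers to Bouchut--Golse \cite{bouchut2000kinetic}, Chapter~2; your proposal follows the same architecture found there (mollify, analyze the smooth case, pass to the limit in a finite-dimensional subspace) but with two tactical differences worth noting.

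First, for the regularization step, the Bouchut--Golse argument convolves with a \emph{Gaussian} and exploits the identity $e^{-|v'-w'|^2/r}e^{-|v'_*-w'_*|^2/r} = e^{-|v-w|^2/r}e^{-|v_*-w_*|^2/r}$, which follows from the conservation laws for $\Bar{S}_\omega$. You instead use an arbitrary mollifier and the translation invariance of $\Bar{S}_\omega$; this is more elementary and makes the mechanism (the a.e.\ functional equation is preserved under convolution because the collision map commutes with translations) more transparent. Second, for the smooth case, Bouchut--Golse differentiate the invariance relation with respect to $\omega$ to obtain $(v-v_*)\times(\nabla\phi(v)-\nabla\phi(v_*))=0$, and then differentiate again in $v$ to reach the eigenvector condition on the Hessian. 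Your route---observing that $\phi_\delta(m+r\sigma)+\phi_\delta(m-r\sigma)$ is independent of $\sigma$ and reading off the $r^2$ Taylor coefficient---is shorter and avoids the somewhat delicate $\omega$-differentiation. Both approaches land on $D^2\phi_\delta = \lambda_\delta\,\mathrm{Id}$ and finish identically.
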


\begin{proof}
For the proof, we refer the reader to Bouchut and Golse \cite{bouchut2000kinetic}, Chapter 2.
\QED
\end{proof}
\smallskip

\noindent The following Theorem \ref{theoreminvariantgeneral} characterises collision invariants in the general setting.
\begin{theorem} \label{theoreminvariantgeneral}
Let $\psi$ be a collision invariant, such that for $\mu$-a.e. $\zeta \in \E$ there exists $r_{\zeta}>0$ such that $\psi(\cdot,\zeta) \in L^1 \left(\R^3, \, e^{-|v|^2/r_{\zeta}} \, \emph{d}v \right)$. Then there exists $(\alpha, \beta, \gamma) \in \R \times \R^3 \times \R$ such that for $\mu$-a.e. $\zeta \in \E$ and a.e. $v \in \R^3$,

$$
\psi(v,\zeta) = \alpha  + \beta \cdot  v + \gamma \left( \frac{m}{2} \, |v|^2 + \varepsilon(\zeta) \right).
$$
\end{theorem}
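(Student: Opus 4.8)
The plan is to first show that for $\mu$-a.e.\ $\zeta$ the function $v \mapsto \psi(v,\zeta)$ is an affine-plus-quadratic polynomial whose linear and quadratic coefficients do \emph{not} depend on $\zeta$, i.e.\ $\psi(v,\zeta) = a(\zeta) + \beta\cdot v + \gamma\,\tfrac{m}{2}|v|^2$, and only then to pin down the $\zeta$-dependence of the constant term $a(\zeta)$ using energy conservation. The delicate point, which I expect to be the main obstacle, is that the collision-invariant identity \eqref{eq:collisioninvariant} holds only for $\mu^{\otimes 4}$-a.e.\ quadruple $(\zeta,\zeta_*,\zeta',\zeta'_*)$: the ``purely elastic'' quadruples $(\zeta,\zeta,\zeta,\zeta)$, to which Theorem \ref{theoremcollisioninvariants} and Lemma \ref{lemma:meanS} would apply directly, form a $\mu^{\otimes 4}$-null set in the continuous case and are therefore \emph{not} accessible from an a.e.\ statement. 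One must instead extract the polynomial structure from generic \emph{off-diagonal} quadruples, which forces a two-species version of the elastic argument.

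\textbf{Parametrisation and the two-species relation.} Fix a generic quadruple and set $G = (v+v_*)/2$. By \eqref{eq:outvelocities} we have $v' = G + w$, $v'_* = G - w$ with $w = \sqrt{\Delta}\,T_\omega\big[\tfrac{v-v_*}{|v-v_*|}\big]$; as $\omega$ ranges over $\mathbb{S}^2$, the vector $T_\omega[\cdot]$ ranges over the whole unit sphere, so $w$ ranges over the full sphere of radius $\sqrt{\Delta}$, while the right-hand side $\psi(v,\zeta)+\psi(v_*,\zeta_*)$ of \eqref{eq:collisioninvariant} is independent of $\omega$. Hence, for $\mu^{\otimes 4}$-a.e.\ quadruple and a.e.\ $(v,v_*)\in E[\zeta,\zeta_*,\zeta',\zeta'_*]$, the quantity $\psi(G+w,\zeta') + \psi(G-w,\zeta'_*)$ depends only on $(G,|w|)$. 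By Fubini, for $\mu^{\otimes 2}$-a.e.\ pair $(\zeta',\zeta'_*)$ almost every $(\zeta,\zeta_*)$ is admissible; since $\operatorname{ess\,inf}_\mu \bar\varepsilon = 0$, the admissible set meets the positive-measure set where $\bar\varepsilon$ is small, so I may choose $(\zeta,\zeta_*)$ with $\varepsilon(\zeta)+\varepsilon(\zeta_*)\le \varepsilon(\zeta')+\varepsilon(\zeta'_*)$, which makes $\Delta = \tfrac14|v-v_*|^2 + \tfrac1m(\varepsilon(\zeta)+\varepsilon(\zeta_*)-\varepsilon(\zeta')-\varepsilon(\zeta'_*))$ sweep all of $[0,\infty)$. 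Thus, for $\mu^{\otimes 2}$-a.e.\ pair $(\zeta',\zeta'_*)$, the couple $\big(\psi(\cdot,\zeta'),\psi(\cdot,\zeta'_*)\big)$ satisfies the two-species elastic collision-invariant relation: $\psi(G+w,\zeta')+\psi(G-w,\zeta'_*)$ depends only on $(G,|w|)$ for all $G\in\R^3$ and all $w\in\R^3$.

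\textbf{Reduction to the single-species theorem.} I next reduce this to Theorem \ref{theoremcollisioninvariants}. Replacing $w$ by $-w$ leaves $(G,|w|)$ unchanged, so $\psi(G+w,\zeta')+\psi(G-w,\zeta'_*) = \psi(G-w,\zeta')+\psi(G+w,\zeta'_*)$, i.e.\ the difference $D:=\psi(\cdot,\zeta')-\psi(\cdot,\zeta'_*)$ satisfies $D(G+w)=D(G-w)$ for all $G,w$; a connectedness argument (or the mollification used for the elastic case) then gives that $D$ is a.e.\ equal to a constant $k$. Substituting $\psi(\cdot,\zeta')=\psi(\cdot,\zeta'_*)+k$ shows that $\psi(G+w,\zeta'_*)+\psi(G-w,\zeta'_*)$ depends only on $(G,|w|)$, which is exactly the single-species elastic collision-invariant relation for $\psi(\cdot,\zeta'_*)$. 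Since $\psi(\cdot,\zeta'_*)\in L^1\!\big(\R^3,e^{-|v|^2/r_{\zeta'_*}}\mathrm{d}v\big)$ by hypothesis, Theorem \ref{theoremcollisioninvariants} yields constants $\beta_{\zeta'_*}\in\R^3$ and $\gamma_{\zeta'_*}\in\R$ with $\psi(v,\zeta'_*) = a(\zeta'_*) + \beta_{\zeta'_*}\cdot v + \gamma_{\zeta'_*}\tfrac{m}{2}|v|^2$, and likewise for $\zeta'$ with the \emph{same} $\beta,\gamma$. Because the coefficients of a polynomial are uniquely determined by the function, $\beta_{\zeta}$ and $\gamma_{\zeta}$ are well defined for $\mu$-a.e.\ $\zeta$, and the fact that they coincide across $\mu^{\otimes 2}$-a.e.\ pair forces them to be $\mu$-a.e.\ constant: $\psi(v,\zeta)=a(\zeta)+\beta\cdot v+\gamma\,\tfrac{m}{2}|v|^2$.

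\textbf{Fixing the constant term via energy conservation.} Finally, I substitute this form into \eqref{eq:collisioninvariant}. Momentum conservation $v'+v'_* = v+v_*$ cancels the $\beta$-terms, and the energy conservation in \eqref{eq:conservation} gives $\tfrac{m}{2}(|v'|^2+|v'_*|^2) = \tfrac{m}{2}(|v|^2+|v_*|^2) + \varepsilon(\zeta)+\varepsilon(\zeta_*)-\varepsilon(\zeta')-\varepsilon(\zeta'_*)$, so the quadratic terms combine with the internal energies. Setting $h(\zeta):=a(\zeta)-\gamma\,\varepsilon(\zeta)$, the identity collapses to $h(\zeta')+h(\zeta'_*) = h(\zeta)+h(\zeta_*)$ for $\mu^{\otimes 4}$-a.e.\ quadruple (every quadruple admits a non-negligible set of velocities, since $E[\cdot]$ has nonempty interior and Assumption \ref{assumption:positivity} holds). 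Freezing good values of $(\zeta_*,\zeta'_*)$ by Fubini gives $h(\zeta')-h(\zeta)=c$ for $\mu^{\otimes 2}$-a.e.\ $(\zeta,\zeta')$, and the antisymmetry $\zeta\leftrightarrow\zeta'$ forces $c=0$; hence $h\equiv\alpha$ is constant and $a(\zeta)=\alpha+\gamma\,\varepsilon(\zeta)$. This yields $\psi(v,\zeta)=\alpha+\beta\cdot v+\gamma\big(\tfrac{m}{2}|v|^2+\varepsilon(\zeta)\big)$, as claimed.
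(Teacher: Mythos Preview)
Your route is genuinely different from the paper's. The paper does not work with off-diagonal quadruples: it immediately specialises to the purely elastic quadruple $(\zeta,\zeta,\zeta,\zeta)$ (where Assumption~\ref{assumption:positivity} guarantees the full sphere of angles, so Lemma~\ref{lemma:meanS} and Theorem~\ref{theoremcollisioninvariants} apply directly) to obtain $\psi(v,\zeta)=\alpha_\zeta+\beta_\zeta\cdot v+\tfrac12\gamma_\zeta|v|^2$ for each fixed $\zeta$, and then specialises to quadruples of the shape $(\zeta,\zeta,\zeta',\zeta')$, substitutes, and compares polynomial coefficients to force $\beta_\zeta$, $\gamma_\zeta$ and $\alpha_\zeta-\gamma_\zeta\varepsilon(\zeta)/m$ to be $\zeta$-independent. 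This is shorter precisely because in the elastic case the whole sphere of $\omega$ is available by assumption, so no two-species detour is needed. Your worry that these diagonal quadruples form a $\mu^{\otimes 4}$-null set when $\mu$ is non-atomic is well taken, and the paper's construction of a common full-measure set $\tilde\E\subset\E$ on which the identity holds for \emph{all} quadruples glosses over exactly this point; in that sense your off-diagonal strategy is the more honest one.

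That said, your argument has a gap of its own at the angular step. You write that ``as $\omega$ ranges over $\mathbb{S}^2$, $w$ ranges over the full sphere of radius $\sqrt{\Delta}$'', but the collision-invariant identity is only assumed for a.e.\ $\omega\in\mathbb{S}^2[v,v_*,\zeta,\zeta_*,\zeta',\zeta'_*]$, and for inelastic quadruples Assumption~\ref{assumption:positivity} only guarantees that this set has \emph{positive}, not full, measure. Hence at fixed $(v,v_*)$ you only learn that $\psi(G+w,\zeta')+\psi(G-w,\zeta'_*)$ equals the constant $\psi(v,\zeta)+\psi(v_*,\zeta_*)$ for $w$ in some positive-measure patch of the sphere $\{|w|=\sqrt{\Delta}\}$; and since that constant depends on the incoming direction $V=(v-v_*)/|v-v_*|$, varying $V$ does not automatically glue these patches into the full two-species relation you claim. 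The subsequent $w\mapsto -w$ step and the reduction to Theorem~\ref{theoremcollisioninvariants} both require the relation for almost all directions, so this is where the proof as written stops short. A cleaner repair is to use the kernel symmetry $b(\dots,\zeta',\zeta'_*,\omega)=b(\dots,\zeta'_*,\zeta',\omega)$ (so $\mathbb{S}^2[\cdot]$ is unchanged under $\zeta'\leftrightarrow\zeta'_*$, and so are $v',v'_*$) to obtain $D(v')=D(v'_*)$ with $D=\psi(\cdot,\zeta')-\psi(\cdot,\zeta'_*)$ \emph{directly} on the available $\omega$-set, and then to argue that as $(v,v_*,\omega)$ ranges over its full-measure domain the resulting pairs $(v',v'_*)$ are rich enough to force $D$ constant; but this last step needs an explicit argument that you have not supplied.
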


\begin{proof}
\noindent Let us denote by $\Tilde{\E}$ the set such that $\mu(\E \setminus \Tilde{\E}) = 0$, and for all $\zeta \in \tilde{\E}$, there exists $r_{\zeta}>0$ such that $\psi(\cdot,\zeta) \in L^1 \left(\R^3, \, e^{-|v|^2/r_{\zeta}} \text{d}v \right)$ and for all $\zeta_*,\zeta',\zeta'_* \in \tilde{\E}$, a.e. $(v,v_*) \in E[\zeta,\zeta_*,\zeta',\zeta'_*]$ and a.e. $\omega \in \mathbb{S}^2[v,v_*,\zeta,\zeta_*,\zeta',\zeta'_*]$,
$$
\psi(v',\zeta') + \psi(v'_*,\zeta'_*) = \psi(v,\zeta) + \psi(v_*,\zeta_*),
$$
where $(v',v'_*)$ are defined by $(v',v'_*) = S_{\omega}[\zeta,\zeta_*,\zeta',\zeta'_*] (v,v_*)$.

\smallskip

\noindent First, we consider the equality for $\zeta=\zeta_*=\zeta'=\zeta'_* \in \tilde{\E}$. From Lemma \ref{lemma:meanS}, Assumption \ref{assumption:positivity} (in our framework, all angles are allowed in elastic collisions) and Theorem \ref{theoremcollisioninvariants}, there exists $(\alpha_{\zeta}, \beta_{\zeta}, \gamma_{\zeta}) \in \R \times \R^3 \times \R$ such that for a.e. $v \in \R^3$,

\begin{equation} \label{eq:psizeta}
\psi(v,\zeta) = \alpha_{\zeta}  + \beta_{\zeta} \cdot v + \gamma_{\zeta} \left( \frac12  |v|^2 \right).
\end{equation}

\noindent Let us now consider again the equation \eqref{eq:collisioninvariant}, using formula \eqref{eq:psizeta}, $\zeta = \zeta_* \in \tilde{\E}$ and $\zeta' = \zeta'_* \in \Tilde{\E}$. It writes, for a.e. $(v,v_*) \in E[\zeta,\zeta,\zeta',\zeta']$ and a.e. $\omega \in \mathbb{S}^2[v,v_*,\zeta,\zeta,\zeta',\zeta']$,
$$
2 \alpha_{\zeta} + \beta_{\zeta} \cdot (v+v_*) + \gamma_{\zeta} \left( \frac12 |v|^2 + \frac12 |v_*|^2 \right) = 2 \alpha_{\zeta'} + \beta_{\zeta'} \cdot (v'+v'_*) + \gamma_{\zeta'} \left( \frac12 |v'|^2 + \frac12 |v'_*|^2 \right)
$$
where $(v',v'_*) = S_{\omega}[\zeta,\zeta,\zeta',\zeta'](v,v_*)$. By assumption \ref{assumption:positivity}, at fixed $v,v_*,\zeta,\zeta'$, the set of angles $\omega$ for which the above equation holds, $\mathbb{S}^2[v,v_*,\zeta,\zeta,\zeta',\zeta']$, is not negligible. Using the conservation equations \eqref{eq:conservation}, we thus get for a.e. $v,v_*$ such that $\displaystyle \frac14|v-v_*|^2 \geq \frac{2}{m} \,(\varepsilon(\zeta') - \varepsilon(\zeta))$,
$$
2 \left(\alpha_{\zeta} - \alpha_{\zeta'} - \frac{\gamma_{\zeta'}}{m}  \left(\varepsilon(\zeta') - \varepsilon(\zeta) \right) \right) + (\beta_{\zeta} - \beta_{\zeta'}) \cdot (v+v_*) + (\gamma_{\zeta} - \gamma_{\zeta'}) \left( \frac12 |v|^2 + \frac12 |v_*|^2 \right) = 0.
$$
A polynomial that vanishes on a set of infinite cardinal must have all its coefficients equal to zero, hence
\begin{align*}
    \gamma_{\zeta} &= \gamma_{\zeta'}\\
    \beta_{\zeta} &= \beta_{\zeta'}\\
    \alpha_{\zeta} -  \gamma_{\zeta'} \frac{\varepsilon(\zeta)}{m}  &= \alpha_{\zeta'} - \gamma_{\zeta'} \frac{\varepsilon(\zeta')}{m}  .
\end{align*}
We conclude that there exists $(\alpha, \beta, \gamma) \in \R \times \R^3 \times \R$ such that for a.e. $v \in \R^3$ and for all $\zeta \in \Tilde{\E}$ (thus $\mu$-a.e. $\zeta \in \E$),
$$
\psi(v,\zeta) = \alpha + \beta \cdot v + \gamma \left( \frac{m}{2} |v|^2 + \varepsilon(\zeta)\right).
$$
\QED
\end{proof}

\section{H Theorem and collision equilibria} \label{section:Maxwellian}

\begin{theorem} \textbf{H Theorem.} \label{htheorem} Let $f : \R^3 \times \E \to \R_+$ be a measurable function such that for a.e. $v \in \R^3$ and $\mu$-a.e. $\zeta \in \E$, $f(v,\zeta) > 0$. We assume that
$$
\int_{\E} \int_{\R^3} \B_-(f,f)(v,\zeta) \left|\emph{ln}(f)(v,\zeta) \right| \, \emph{d}v \, \emph{d}\mu(\zeta) + \int_{\E} \int_{\R^3} \B_+(f,f)(v,\zeta) \left|\emph{ln}(f)(v,\zeta) \right| \, \emph{d}v \, \emph{d}\mu(\zeta) < \infty.
$$
Then
\begin{equation}\label{eq:htheorem}
\int_{\E} \int_{\R^3}  \B(f,f)(v,\zeta) \, \emph{ln}(f)(v,\zeta) \, \emph{d}v \, \emph{d}\mu(\zeta) \leq 0,
\end{equation}
and

\begin{align*}
\int_{\E} \int_{\R^3}  \B(f,f)(v,\zeta) \, \emph{ln}&(f)(v,\zeta) \, \emph{d}v \, \emph{d}\mu(\zeta) = 0 \\
&\Updownarrow \\ \B(f,f)(v,\zeta) = 0 \text{ for a.e. } &v \in \R^3 \text{ and $\mu$-a.e. } \zeta \in \E \\
&\Updownarrow \\ \emph{ln}(f) \text{ is a } & \text{collision invariant.}
\end{align*}
\end{theorem}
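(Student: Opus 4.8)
The plan is to apply the weak formulation of Proposition \ref{prop:weak} with the test function $\psi = \ln(f)$. The integrability hypothesis assumed in the theorem is precisely the admissibility condition required to invoke that proposition, so I would first record that $\psi = \ln(f)$ is an admissible test function. Substituting, the bracket $\psi(v',\zeta') + \psi(v'_*,\zeta'_*) - \psi(v,\zeta) - \psi(v_*,\zeta_*)$ becomes $\ln\!\big(f(v',\zeta')f(v'_*,\zeta'_*)\big) - \ln\!\big(f(v,\zeta)f(v_*,\zeta_*)\big)$, so that, with the standard shorthand $f = f(v,\zeta)$, $f_* = f(v_*,\zeta_*)$, $f' = f(v',\zeta')$, $f'_* = f(v'_*,\zeta'_*)$, the entropy production takes the form
$$
\int_{\E}\int_{\R^3} \B(f,f)\,\ln(f)\,\text{d}v\,\text{d}\mu = -\frac14 \iiiint_{\E^4}\iint_{(\R^3)^2}\int_{\mathbb{S}^2} \big(f'f'_* - ff_*\big)\,\ln\!\Big(\tfrac{f'f'_*}{ff_*}\Big)\,b(\cdot)\,\text{d}\omega\,\text{d}v\,\text{d}v_*\,\text{d}\mu^{\otimes 4}.
$$

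The sign of \eqref{eq:htheorem} is then settled by the elementary inequality $(x - y)\ln(x/y) \ge 0$ for all $x,y > 0$, with equality if and only if $x = y$. Since $f > 0$ a.e. and $b \ge 0$, the integrand is nonnegative pointwise, and the prefactor $-\tfrac14$ yields $\int_{\E}\int_{\R^3} \B(f,f)\,\ln(f)\,\text{d}v\,\text{d}\mu \le 0$.

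For the chain of equivalences I would close the cycle (entropy production $=0$) $\Rightarrow$ ($\ln f$ is a collision invariant) $\Rightarrow$ ($\B(f,f) = 0$ a.e.) $\Rightarrow$ (entropy production $=0$), the last implication being immediate since a vanishing collision operator makes the integrand zero. If the entropy production vanishes, then the nonnegative integrand above must vanish for a.e. $(v,v_*,\zeta,\zeta_*,\zeta',\zeta'_*,\omega)$; by the equality case of the elementary inequality, this forces $f'f'_* = ff_*$ at every point where $b(\cdot) > 0$. Invoking the positivity Assumption \ref{assumption:positivity}, which guarantees that on an allowed collision the set $\mathbb{S}^2[v,v_*,\zeta,\zeta_*,\zeta',\zeta'_*]$ on which $b > 0$ has positive measure, this is exactly condition \eqref{eq:collisioninvariant}, i.e. $\ln f$ is a collision invariant. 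Conversely, if $\ln f$ is a collision invariant then $f'f'_* = ff_*$ wherever $b > 0$, while $b$ vanishes (for a.e. $\omega$) outside allowed collisions by the same assumption; hence the integrand $(f'f'_* - ff_*)\,b(\cdot)$ appearing in \eqref{eq:fullterm} is identically zero and $\B(f,f) = 0$ a.e.

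The main obstacle I anticipate is the measure-theoretic bookkeeping needed to turn ``the integrand vanishes a.e.'' into the precise pointwise statement \eqref{eq:collisioninvariant}: one must use Assumption \ref{assumption:positivity} to pass between the support of $b$ and the admissibility set $E[\zeta,\zeta_*,\zeta',\zeta'_*]$, and carefully track null sets across the product variables so that $f'f'_* = ff_*$ holds for $\mu$-a.e. internal states, a.e. admissible velocity pairs, and a.e. $\omega \in \mathbb{S}^2[v,v_*,\zeta,\zeta_*,\zeta',\zeta'_*]$, exactly as the definition of a collision invariant demands. The analytic content, by contrast, is elementary once the weak formulation is in hand.
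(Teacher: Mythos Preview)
Your proposal is correct and follows essentially the same route as the paper: apply the weak formulation (Proposition~\ref{prop:weak}) with $\psi=\ln f$, use the elementary inequality $(x-y)\ln(x/y)\ge 0$ (the paper writes it as $ff_*\,(x-1)\ln x$ with $x=f'f'_*/(ff_*)$, which is the same thing), and then handle the equality case by invoking Assumption~\ref{assumption:positivity} to pass from ``integrand vanishes a.e.'' to the collision-invariant condition~\eqref{eq:collisioninvariant}. The cycle of implications you outline matches the paper's, and your identification of the measure-theoretic bookkeeping as the only delicate point is accurate.
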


\begin{proof}
This proof is highly inspired by a proof which can be found in \cite{bouchut2000kinetic}, Chapter 2. First we define, for $\mu$-a.e. $\zeta,\zeta_*,\zeta',\zeta'_* \in \E$, a.e. $(v,v_*) \in E[\zeta,\zeta_*,\zeta',\zeta'_*]$ and a.e. $\omega \in \mathbb{S}^2$,
$$
\mathcal{P}[f](v,v_*,\zeta,\zeta_*,\zeta',\zeta'_*,\omega) := \frac14 (f(v',\zeta') f(v'_*,\zeta'_*) - f(v,\zeta) f(v_*,\zeta_*)) \text{ln} \left( \frac{f(v',\zeta') f(v'_*,\zeta'_*)}{f(v,\zeta) f(v_*,\zeta_*)} \right),
$$
where $(v',v'_*) = S_{\omega}[\zeta,\zeta_*,\zeta',\zeta'_*] (v,v_*)$. Now by remarking that
$$
\forall x > 0, \quad (x - 1) \text{ln}(x) \geq 0,
$$
we have, since $f > 0$ and considering $\displaystyle x = \frac{f(v',\zeta') f(v'_*,\zeta'_*)}{f(v,\zeta) f(v_*,\zeta_*)} > 0$,
$$
\mathcal{P}[f](v,v_*,\zeta,\zeta_*,\zeta',\zeta'_*,\omega) = \frac14 \, f(v,\zeta) f(v_*,\zeta_*) \, (x - 1)\text{ln}(x) \geq 0.
$$
Using Proposition \ref{prop:weak} with $\psi = \text{ln}(f)$,
\begin{align*}
&\int_{\E} \int_{\R^3}  \B(f,f)(v,\zeta) \, \text{ln}(f)(v,\zeta) \, \text{d}v \, \text{d}\mu(\zeta) \\ &= -  \iiiint_{\E^4} \iint_{(\R^3)^2}  \int_{\mathbb{S}^2}   \mathcal{P}[f](v,v_*,\zeta,\zeta_*,\zeta',\zeta'_*,\omega) \, b(\cdot) \, \text{d}\omega \, \text{d}v  \,  \text{d}v_* \,  \text{d}\mu^{\otimes 4}(\zeta,\zeta_*,\zeta',\zeta'_*).
\end{align*}
Now from $b \geq 0$ and $\mathcal{P}[f] \geq 0$,
$$
\int_{\E} \int_{\R^3}  \B(f,f)(v,\zeta) \, \text{ln}(f)(v,\zeta) \, \text{d}v \, \text{d}\mu(\zeta) \leq 0.
$$

\bigskip

\noindent Let us now focus on the case of equality. Since the integrand $\mathcal{P}[f] \, b$ is non-negative, it must be equal to zero.

\begin{align*}
&\hspace{30pt} \int_{\E} \int_{\R^3}  \B(f,f)(v,\zeta) \, \text{ln}(f)(v,\zeta) \, \text{d}v \, \text{d}\mu(\zeta) = 0\\
&\implies \iiiint_{\E^4}   \iint_{E[\zeta,\zeta_*,\zeta',\zeta'_*]} \int_{\mathbb{S}^2} \mathcal{P}[f](v,v_*,\zeta,\zeta_*,\zeta',\zeta'_*,\omega) \, b(\cdot) \, \text{d}\omega \, \text{d}v  \,  \text{d}v_*   \, \text{d}\mu^{\otimes 4}(\zeta,\zeta_*,\zeta',\zeta'_*) = 0 \\
&\implies  \mathcal{P}[f](v,v_*,\zeta,\zeta_*,\zeta',\zeta'_*,\omega) \, b(v,v_*,\zeta,\zeta_*,\zeta',\zeta'_*,\omega) = 0 \quad \text{for a.e.  }v,v_*,\zeta,\zeta_*,\zeta',\zeta'_*,\omega \\
&\implies  \mathcal{P}[f](v,v_*,\zeta,\zeta_*,\zeta',\zeta'_*,\omega) = 0 \quad \text{for $\mu$-a.e. } \zeta,\zeta_*,\zeta',\zeta'_* \in \E, \text{  a.e. } (v,v_*) \in E[\zeta,\zeta_*,\zeta',\zeta'_*] \\
&\hspace{230pt}\text{ and a.e. }\omega \in \mathbb{S}^2[v,v_*,\zeta,\zeta_*,\zeta',\zeta'_*],
\end{align*}
where we recall that the set $\mathbb{S}^2[v,v_*,\zeta,\zeta_*,\zeta',\zeta'_*]$ is defined by Definition \ref{assumption:positivity}. Now since $f > 0$ and
$$
\forall x > 0, \quad (x - 1) \text{ln}(x) = 0 \implies x = 1,
$$
we finally obtain, writing $\mathbb{S}^2[\cdot] \equiv \mathbb{S}^2[v,v_*,\zeta,\zeta_*,\zeta',\zeta'_*]$ to lighten the notations,
\begin{align*}
&\hspace{30pt} \mathcal{P}[f](v,v_*,\zeta,\zeta_*,\zeta',\zeta'_*,\omega) = 0 \quad \text{for $\mu$-a.e. } \zeta,\zeta_*,\zeta',\zeta'_* \in \E, \text{   a.e. } (v,v_*) \in E[\zeta,\zeta_*,\zeta',\zeta'_*]  \text{ and a.e. }\omega \in \mathbb{S}^2[\cdot] \\
&\implies f(v',\zeta') f(v'_*,\zeta'_*) = f(v,\zeta) f(v_*,\zeta_*) \; \, \text{for $\mu$-a.e. } \zeta,\zeta_*,\zeta',\zeta'_* \in \E, \text{  a.e. } (v,v_*) \in E[\zeta,\zeta_*,\zeta',\zeta'_*],  \text{ a.e. }\omega \in \mathbb{S}^2[\cdot]\\
&\implies \text{ln}(f) \text{ is a collision invariant}.
\end{align*}
Note also that
\begin{align*}
&\hspace{30pt} \text{ln}(f) \text{ is a collision invariant} \\
&\implies f(v',\zeta') f(v'_*,\zeta'_*) = f(v,\zeta) f(v_*,\zeta_*) \; \, \text{for $\mu$-a.e. } \zeta,\zeta_*,\zeta',\zeta'_* \in \E, \text{  a.e. } (v,v_*) \in E[\zeta,\zeta_*,\zeta',\zeta'_*],  \text{ a.e. }\omega \in \mathbb{S}^2[\cdot]\\
&\implies \B(f,f)(v,\zeta) = 0 \quad \text{for a.e. } v \in \R^3 \text{ and $\mu$-a.e. } \zeta \in \E\\
&\implies \int_{\E} \int_{\R^3}  \B(f,f)(v,\zeta) \, \text{ln}(f)(v,\zeta) \, \text{d}v \, \text{d}\mu(\zeta) = 0,
\end{align*}
which ends the proof.
\QED
\end{proof}


\noindent Let us assume that there exists $f \in L^1(\R^3 \times \E, \, \text{d}v \, \text{d}\mu(\zeta))$ such that $f$ verifies assumptions of Theorem \ref{htheorem} and $\text{ln}(f)$ verifies the hypotheses of Theorem \ref{theoreminvariantgeneral}. From Theorem \ref{theoreminvariantgeneral}, $\exists \, (\alpha, \beta, \gamma) \in \R \times \R^3 \times \R$ such that for a.e. $v \in \R^3$ and $\mu$-a.e. $\zeta \in \E$,
$$
\text{ln}(f)(v,\zeta) = \alpha + \beta \cdot v + \gamma \left( \frac{m}{2} |v|^2 + \varepsilon(\zeta)  \right).
$$
Since $f \in L^1(\R^3 \times \E, \, \text{d}v \, \text{d}\mu(\zeta))$, we have $\gamma < 0$. Let us now set
$$
T = - \frac{1}{k_B \gamma}, \qquad u = - \frac{\beta}{ m \gamma}, \qquad A = \text{exp} \left(\alpha  - \frac{|\beta|^2}{2  m \gamma} \right),
$$
where $k_B$ is the Boltzmann constant. Then
$$
f(v,\zeta) = A \, \text{exp} \left( -\frac{ m|v- u|^2}{2 k_B T}  - \frac{\varepsilon(\zeta)}{k_B T}  \right).
$$
and it can be easily checked that $u$ is the average velocity of $f$, while $T$ will provide its temperature (see below).
Note that indeed $f \in L^1(\R^3 \times \E, \, \text{d}v \, \text{d}\mu(\zeta))$ thanks to Assumption \ref{assumption:integrability}. We define the density $\rho$
$$
\rho := \int_{\E} \int_{\R^3}  m \, f(v,\zeta) \, \text{d}v \, \text{d}\mu(\zeta).
$$
Then, almost everywhere,
$$
f = \mathcal{M}[\rho, u,T],
$$
where, for all $(v,\zeta) \in \R^3 \times  \E$,
\begin{equation*}
\mathcal{M}[\rho,u,T](v,\zeta) := \frac{\rho}{m} \; \frac{\displaystyle \text{exp} \left( -\frac{ m|v-u|^2}{2 k_B T}  - \frac{\varepsilon(\zeta)}{k_B T}  \right)}{\displaystyle \left( \frac{2 \pi k_B T}{ m} \right)^{3/2} \, \int_{\E} \text{exp} \left(-\frac{\varepsilon(\zeta_*)}{k_B T} \right) \text{d}\mu(\zeta_*)} ,
\end{equation*}
which we can rewrite
\begin{equation} \label{eq:generalmaxwell}
\mathcal{M}[\rho,u,T](v,\zeta) = \rho \; m^{1/2} \, (2 \pi k_B T)^{-3/2}  \, Z \left( \frac{1}{k_B T} \right)^{-1} \; \text{exp} \left( -\frac{ m|v-u|^2}{2 k_B T}  - \frac{\Bar{\varepsilon}(\zeta)}{k_B T}  \right),
\end{equation}
where $\Bar{\varepsilon}$ and $Z$ are defined by equations \eqref{eqdef:groundedenergyfunc} and \eqref{eq:partitionfunc}.


First, note that by definition of $\rho$,
$$
\int_{\E} \int_{\R^3}  m \, \mathcal{M}[\rho,u,T](v,\zeta) \, \text{d}v \, \text{d}\mu(\zeta) = \rho > 0.
$$
Also,
\begin{equation*}
\frac{1}{\rho} \int_{\E} \int_{\R^3}  m \, v \, \mathcal{M}[\rho,u,T](v,\zeta)  \, \text{d}v \, \text{d}\mu(\zeta) = m^{3/2} \, (2 \pi k_B T)^{-3/2} \; \int_{\R^3} v \, \text{exp} \left( -\frac{ m |v-u|^2}{2 k_B T} \right) \, \text{d}v = u.
\end{equation*}

\bigskip

\noindent The total energy density at collision equilibrium writes
\begin{align*}
e^{eq}[u,T] &:=  \frac{ m}{\rho} \,  \int_{\E} \int_{\R^3} \left( \frac{ m}{2} |v|^2 + \varepsilon(\zeta) \right) \, \mathcal{M}[\rho,u,T](v,\zeta) \, \text{d}v \, \text{d}\mu(\zeta) \\
&= m^{3/2} (2 \pi k_B T)^{-3/2} \; \int_{\R^3} \frac{m}{2} |v|^2 \, \text{exp} \left( -\frac{ m|v-u|^2}{2 k_B T} \right) \, \text{d}v + \frac{\displaystyle\int_{\E} \varepsilon(\zeta) \, \text{exp} \left(- \frac{\Bar{\varepsilon}(\zeta)}{k_B T} \right) \text{d}\mu(\zeta)}{\displaystyle\int_{\E} \text{exp} \left(- \frac{\Bar{\varepsilon}(\zeta)}{k_B T} \right) \text{d}\mu(\zeta)} ,
\end{align*}
and is finite thanks to Assumption \ref{assumption:integrability} and Proposition \ref{prop:ZisCinfty}. The term  associated with the velocity is well-known, and a classical computation gives
$$
m^{3/2} (2 \pi k_B T)^{-3/2} \; \int_{\R^3} \frac{m}{2} |v|^2 \, \text{exp} \left( -\frac{ m |v-u|^2}{2 k_B T} \right) \, \text{d}v = \frac{ m}{2} |u|^2 + \frac32 k_B T.
$$

\noindent Remark that
\begin{equation*}
\frac{\displaystyle \int_{\E} \varepsilon(\zeta) \, \text{exp} \left(- \frac{\Bar{\varepsilon}(\zeta)}{k_B T} \right) \text{d}\mu(\zeta)}{ \displaystyle \int_{\E} \text{exp} \left(- \frac{\Bar{\varepsilon}(\zeta)}{k_B T} \right) \text{d}\mu(\zeta)}  = \varepsilon^{0} + \frac{\displaystyle \int_{\E} \Bar{\varepsilon}(\zeta) \, \text{exp} \left(- \frac{\Bar{\varepsilon}(\zeta)}{k_B T} \right) \text{d}\mu(\zeta)}{\displaystyle \int_{\E} \text{exp} \left(- \frac{\Bar{\varepsilon}(\zeta)}{k_B T} \right) \text{d}\mu(\zeta)},
\end{equation*}
where we recall that $\varepsilon^0 = \text{inf ess}_{\mu} \{ \varepsilon \}$ and $\Bar{\varepsilon} = \varepsilon - \varepsilon^0$, see equation \eqref{eqdef:groundedenergyfunc}.

\smallskip

\begin{definition}
We define the (thermodynamic) number of internal degrees of freedom $\delta$ by
\begin{equation} \label{eq:averagefreedom}
\delta(T) := \frac{2}{k_B T} \left( \frac{\displaystyle\int_{\E} \Bar{\varepsilon}(\zeta) \, \text{exp} \left(- \frac{\Bar{\varepsilon}(\zeta)}{k_B T} \right) \text{d}\mu(\zeta)}{\displaystyle \int_{\E} \text{exp} \left(- \frac{\Bar{\varepsilon}(\zeta)}{k_B T} \right) \text{d}\mu(\zeta)} \right).
\end{equation}
\end{definition}
We notice that, since  $\Bar{\varepsilon} \geq 0$ $\mu$-a.e., $\delta \geq 0$.
\smallskip

\noindent The total energy density at collision equilibrium finally writes
$$
e^{eq}[u,T] =  \varepsilon^{0} +  \frac{m}{2} |u|^2 + \frac{3 + \delta(T)}{2} \, k_B T.
$$

\noindent $3 + \delta(T)$ is the \emph{total} (thermodynamic) number of degrees of freedom at temperature $T$, provided by the sum of the 3 translational degrees of freedom and of the number of internal ones $\delta(T)$.

\smallskip

\begin{definition}
We define the specific heat at constant volume $c_V$, for all $T > 0$,
\begin{equation}
c_V(T) := \frac{3 + D(T)}{2},
\end{equation}
where
$$
D(T) := \frac{\text{d} \big( T \delta(T) \big)}{\text{d}T}.
$$
\end{definition}

\smallskip

\begin{remark} \label{remark:deltaZ}
We recall that, from Proposition \ref{prop:ZisCinfty}, the partition function $Z$ defined by equation (\ref{eq:partitionfunc}) is $\mathcal{C}^{\infty}$. Remarking that
$$
T \delta(T) = - \frac{2}{k_B} \, \big( \text{ln} ( Z ) \big)' \left( \frac{1}{k_B \, T} \right),
$$
then the function $T \mapsto T \delta (T)$ is $\mathcal{C}^{\infty}$ on $\R_+^*$ and $D$ is thus well-defined.
\end{remark}

\smallskip
\noindent
In the following propositions we show various mathematical properties of the function $\delta(T)$.

\begin{proposition} \label{proposition:zeroenergy}
If $\mu(\varepsilon = \varepsilon^{0}) > 0$, then
$$
\delta(T) \underset{T \to 0^+}{\longrightarrow} 0 \quad \text{and} \quad D(T) \underset{T \to 0^+}{\longrightarrow} 0.
$$
\end{proposition}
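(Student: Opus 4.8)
The plan is to pass to the variable $\beta = 1/(k_B T)$, so that $T \to 0^+$ corresponds to $\beta \to +\infty$, and to express both $\delta$ and $D$ through the partition function $Z$ and its derivatives, which are controlled by Proposition \ref{prop:ZisCinfty}. Recalling from that proposition $Z'(\beta) = -\int_\E \bar\varepsilon\, e^{-\beta\bar\varepsilon}\,\text{d}\mu$, the definition \eqref{eq:averagefreedom} reads
$$
\delta(T) = 2\beta\,\frac{-Z'(\beta)}{Z(\beta)}.
$$
For $D$, I would start from Remark \ref{remark:deltaZ}, namely $T\delta(T) = -\tfrac{2}{k_B}(\ln Z)'(\beta)$, and differentiate in $T$ through $\beta$ (using $\text{d}\beta/\text{d}T = -k_B\beta^2$) to obtain
$$
D(T) = 2\beta^2\,(\ln Z)''(\beta) = \frac{2\beta^2 Z''(\beta)}{Z(\beta)} - 2\left(\frac{\beta Z'(\beta)}{Z(\beta)}\right)^2,
$$
where $Z''(\beta) = \int_\E \bar\varepsilon^2\, e^{-\beta\bar\varepsilon}\,\text{d}\mu$ by Proposition \ref{prop:ZisCinfty}. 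Thus everything reduces to the asymptotics of the quantities $\beta^k \int_\E \bar\varepsilon^k e^{-\beta\bar\varepsilon}\,\text{d}\mu$ for $k = 0,1,2$ as $\beta \to +\infty$.

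The central estimate I would establish is that, for every $k \ge 1$,
$$
\beta^k \int_\E \bar\varepsilon(\zeta)^k\, e^{-\beta\bar\varepsilon(\zeta)}\,\text{d}\mu(\zeta) \longrightarrow 0
\quad\text{and}\quad
Z(\beta) \longrightarrow \mu(\bar\varepsilon = 0) = \mu(\varepsilon = \varepsilon^0) > 0
\qquad (\beta\to+\infty).
$$
For the first limit I write the integrand as $(\beta\bar\varepsilon)^k e^{-\beta\bar\varepsilon}$; since $\bar\varepsilon \ge 0$ $\mu$-a.e., it tends to $0$ $\mu$-a.e. (it vanishes identically on $\{\bar\varepsilon = 0\}$ and decays to $0$ on $\{\bar\varepsilon > 0\}$). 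To dominate it, I reuse the elementary inequality $x^k e^{-x} \le 2^k k!\, e^{-x/2}$ from the proof of Proposition \ref{prop:ZisCinfty} with $x = \beta\bar\varepsilon$: fixing any $\beta_0 > 0$, for $\beta \ge 2\beta_0$ one gets $(\beta\bar\varepsilon)^k e^{-\beta\bar\varepsilon} \le 2^k k!\, e^{-\beta\bar\varepsilon/2} \le 2^k k!\, e^{-\beta_0\bar\varepsilon}$, and $e^{-\beta_0\bar\varepsilon}$ is $\mu$-integrable by Assumption \ref{assumption:integrability}; dominated convergence then yields the first limit. For $Z(\beta)$, the integrand $e^{-\beta\bar\varepsilon}$ decreases to $\mathbf{1}_{\{\bar\varepsilon = 0\}}$ as $\beta \to +\infty$, and the same integrable dominating function $e^{-\beta_0\bar\varepsilon}$ (for $\beta \ge \beta_0$) gives $Z(\beta) \to \mu(\bar\varepsilon = 0)$. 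The hypothesis $\mu(\varepsilon = \varepsilon^0) > 0$ is precisely what makes this limit strictly positive.

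With these two facts the conclusion is immediate. Since $Z(\beta)$ converges to a positive limit, $\delta(T) = 2\,\dfrac{\beta\int_\E \bar\varepsilon\, e^{-\beta\bar\varepsilon}\,\text{d}\mu}{Z(\beta)} \to 2\cdot 0/\mu(\varepsilon=\varepsilon^0) = 0$; likewise each of the two terms defining $D(T)$ tends to $0$, the first directly from the $k=2$ case and the second as the square of the ($k=1$) quantity divided by $Z(\beta)$, so $D(T) \to 0$. The only real obstacle is conceptual rather than technical: the prefactors $\beta^k \to \infty$ could a priori make these ratios blow up, and the point is that the mass of $e^{-\beta\bar\varepsilon}\,\text{d}\mu$ concentrates on the set $\{\bar\varepsilon = 0\}$, exactly where the weights $\bar\varepsilon^k$ vanish. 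The positivity hypothesis $\mu(\varepsilon = \varepsilon^0) > 0$ guarantees that this concentration occurs on a set of positive measure, keeping the denominator $Z(\beta)$ bounded away from $0$ and forcing the whole expression to $0$.
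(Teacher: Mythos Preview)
Your proof is correct and follows essentially the same approach as the paper's: both exploit the inequality $x^k e^{-x}\le 2^k k!\,e^{-x/2}$ to control the numerators and use $\mu(\bar\varepsilon=0)>0$ to keep the denominator $Z(\beta)$ bounded away from zero. Your packaging is arguably slightly cleaner, since you derive the formula for $D(T)$ directly by chain-rule differentiation of Remark~\ref{remark:deltaZ} rather than forward-referencing the variance identity from Proposition~\ref{proposition:nbdegreesfreedom}, and you use a single dominated-convergence argument on all of $\E$ instead of splitting off $\{\bar\varepsilon>0\}$ and invoking monotone convergence.
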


\begin{proof}
First, note that $\mu(\varepsilon = \varepsilon^{0}) > 0 \implies \mu(\Bar{\varepsilon} = 0) > 0$. Since $\mu$-a.e., $\Bar{\varepsilon} \geq 0$, we then have for $k \in \N$
$$
\int_{\E}\, \left(\frac{\Bar{\varepsilon}}{k_B T} \right)^k \,  \text{exp}  \left(- \frac{\Bar{\varepsilon}(\zeta)}{k_B T} \right) \text{d}\mu(\zeta) = \mathbf{1}_{k=0} \times \mu(\Bar{\varepsilon} = 0) + \int_{\Bar{\varepsilon} > 0} \, \left(\frac{\Bar{\varepsilon}}{k_B T} \right)^k \, \text{exp} \left(- \frac{\Bar{\varepsilon}(\zeta)}{k_B T} \right) \text{d}\mu(\zeta).
$$
Now remark that for all $y > 0$, $e^{y/2} \geq 1 + \frac{y}{2} + \frac{y^2}{8}$. Thus
$$
y e^{-y} \leq 2 e^{-y / 2}  \quad \text{and} \quad y^2 e^{-y} \leq 8 e^{-y / 2}.
$$
Thus it comes
\begin{align*}
\delta(T) &= 2 \, \frac{\displaystyle\int_{\E}\, \frac{\Bar{\varepsilon}}{k_B T} \,  \text{exp}  \left(- \frac{\Bar{\varepsilon}(\zeta)}{k_B T} \right) \text{d}\mu(\zeta)}{\displaystyle \int_{\E} \text{exp} \left(- \frac{\Bar{\varepsilon}(\zeta)}{k_B T} \right) \text{d}\mu(\zeta)} \leq 2 \, \frac{ \displaystyle \int_{\Bar{\varepsilon} > 0}\, \frac{\Bar{\varepsilon}}{k_B T} \,  \text{exp}  \left(- \frac{\Bar{\varepsilon}(\zeta)}{k_B T} \right) \text{d}\mu(\zeta)}{\mu(\Bar{\varepsilon} = 0)} \\
&\leq \frac{4}{\mu(\Bar{\varepsilon} = 0)} \,  \int_{\Bar{\varepsilon} > 0}  \text{exp}  \left(- \frac{\Bar{\varepsilon}(\zeta)}{2 k_B T} \right) \text{d}\mu(\zeta).
\end{align*}
We can prove that (see proof of Proposition \ref{proposition:nbdegreesfreedom}) $0\leq D(T) \leq 2 \, \frac{\int_{\E}\, \left(\frac{\Bar{\varepsilon}}{k_B T} \right)^2 \,  \text{exp}  \left(- \frac{\Bar{\varepsilon}(\zeta)}{k_B T} \right) \text{d}\mu(\zeta)}{\int_{\E} \text{exp} \left(- \frac{\Bar{\varepsilon}(\zeta)}{k_B T} \right) \text{d}\mu(\zeta)} $, which implies
$$
0\leq D(T) \leq 2 \, \frac{\displaystyle \int_{\Bar{\varepsilon} > 0}\, \left(\frac{\Bar{\varepsilon}}{k_B T} \right)^2 \,  \text{exp}  \left(- \frac{\Bar{\varepsilon}(\zeta)}{k_B T} \right) \text{d}\mu(\zeta)}{\mu(\Bar{\varepsilon} = 0)} \leq \frac{16}{\mu(\Bar{\varepsilon} = 0)} \,  \int_{\Bar{\varepsilon} > 0}  \text{exp}  \left(- \frac{\Bar{\varepsilon}(\zeta)}{2 k_B T} \right) \text{d}\mu(\zeta).
$$
Now for all $\zeta \in \{ \Bar{\varepsilon} > 0\}$, $e^{-x \Bar{\varepsilon}(\zeta) } \underset{x \to \infty}{\longrightarrow} 0$. Since $(\zeta \mapsto e^{-x \Bar{\varepsilon}(\zeta)})_{x > 0}$ is a non-decreasing family of positive functions, we get by monotone convergence
$$
\int_{\Bar{\varepsilon} > 0}  e^{-x \Bar{\varepsilon}(\zeta)} \text{d}\mu(\zeta) \underset{x \to \infty}{\longrightarrow} 0.
$$
It follows that $\displaystyle \int_{\Bar{\varepsilon} > 0}  \text{exp}  \left(- \frac{\Bar{\varepsilon}(\zeta)}{2 k_B T} \right) \text{d}\mu(\zeta) \underset{T \to 0^+}{\longrightarrow} 0$, and thus, since $\mu(\bar{\varepsilon} = 0) > 0$,
$$
0 \leq \delta (T) \leq \frac{4}{\mu(\Bar{\varepsilon} = 0)} \,  \int_{\Bar{\varepsilon} > 0}  \text{exp}  \left(- \frac{\Bar{\varepsilon}(\zeta)}{2 k_B T} \right) \text{d}\mu(\zeta) \underset{T \to 0^+}{\longrightarrow} 0
$$
and
$$
0 \leq D (T) \leq \frac{16}{\mu(\Bar{\varepsilon} = 0)} \,  \int_{\Bar{\varepsilon} > 0}  \text{exp}  \left(- \frac{\Bar{\varepsilon}(\zeta)}{2 k_B T} \right) \text{d}\mu(\zeta) \underset{T \to 0^+}{\longrightarrow} 0,
$$
so that finally
$$
\delta(T) \underset{T \to 0^+}{\longrightarrow} 0 \quad \text{and} \quad D(T) \underset{T \to 0^+}{\longrightarrow} 0.
$$
\QED
\end{proof}

\begin{proposition} \label{proposition:constantvolume}
For all $T > 0$,
$$
\delta(T) = \frac{1}{T} \int_{0}^{T} D(T') \, \emph{d}T'.
$$
\end{proposition}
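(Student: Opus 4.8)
The plan is to read the identity as a consequence of the fundamental theorem of calculus applied to the function $T \mapsto T\delta(T)$, so that the only genuine content is the behaviour of this function as $T \to 0^+$.

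First I would record that, by Remark \ref{remark:deltaZ}, the map $T \mapsto T\delta(T)$ is $\mathcal{C}^{\infty}$ on $\R_+^*$ and that $D(T) = \frac{\mathrm{d}}{\mathrm{d}T}\big(T\delta(T)\big)$ by definition. Hence for any $0 < \epsilon < T$ the fundamental theorem of calculus gives
\[
\int_{\epsilon}^{T} D(T')\,\mathrm{d}T' = T\delta(T) - \epsilon\,\delta(\epsilon).
\]
Since $D \ge 0$ (as recalled in the proof of Proposition \ref{proposition:zeroenergy} and established in that of Proposition \ref{proposition:nbdegreesfreedom}), the left-hand side is non-decreasing as $\epsilon \to 0^+$ while remaining bounded above by $T\delta(T)$, so the improper integral $\int_{0}^{T} D(T')\,\mathrm{d}T'$ converges. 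The proposition therefore reduces to the single claim $\lim_{\epsilon \to 0^+} \epsilon\,\delta(\epsilon) = 0$, for then $\int_{0}^{T} D(T')\,\mathrm{d}T' = T\delta(T)$ and dividing by $T$ concludes.

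To prove this limit I would rewrite $T\delta(T)$ as a Gibbs average. Setting $\beta = 1/(k_B T)$, the definition \eqref{eq:averagefreedom} yields
\[
T\delta(T) = \frac{2}{k_B}\,\frac{\int_{\E} \Bar{\varepsilon}(\zeta)\, e^{-\beta \Bar{\varepsilon}(\zeta)}\,\mathrm{d}\mu(\zeta)}{\int_{\E} e^{-\beta \Bar{\varepsilon}(\zeta)}\,\mathrm{d}\mu(\zeta)},
\]
and $T \to 0^+$ corresponds to $\beta \to +\infty$. The essential infimum of $\Bar{\varepsilon}$ being $0$ means precisely that $\mu(\Bar{\varepsilon} < \eta) > 0$ for every $\eta > 0$, and this is the mechanism forcing the average to $0$. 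The key step is, fixing $\eta > 0$, to split the numerator over $\{\Bar{\varepsilon} < \eta\}$ and $\{\Bar{\varepsilon} \ge \eta\}$. On the first set I bound $\Bar{\varepsilon} \le \eta$, so this part contributes at most $\eta$ to the ratio. On the second set, choosing a fixed $\beta_0 \in (0,\beta)$ and writing $e^{-\beta\Bar{\varepsilon}} = e^{-\beta_0\Bar{\varepsilon}}\, e^{-(\beta-\beta_0)\Bar{\varepsilon}} \le e^{-(\beta-\beta_0)\eta}\, e^{-\beta_0\Bar{\varepsilon}}$, I bound the numerator by $e^{-(\beta-\beta_0)\eta}\,\big(-Z'(\beta_0)\big)$, which is finite by Proposition \ref{prop:ZisCinfty}. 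Bounding the denominator from below by $e^{-\beta\eta/2}\,\mu(\Bar{\varepsilon} < \eta/2) > 0$ shows that the second contribution is $\le C_\eta\, e^{-\beta\eta/2}$ and hence vanishes as $\beta \to \infty$.

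Combining the two estimates gives $\limsup_{\beta\to\infty} \frac{k_B}{2}\,T\delta(T) \le \eta$ for every $\eta > 0$, whence $T\delta(T) \to 0$ as $T \to 0^+$, which is exactly the reduction established above. The hard part is this last limit: everything else is the fundamental theorem of calculus. The delicate point is that the essential infimum of $\Bar{\varepsilon}$ is not assumed to be attained on a set of positive measure — otherwise one could invoke Proposition \ref{proposition:zeroenergy} directly — so the decay of the Gibbs average must be extracted purely from the positivity $\mu(\Bar{\varepsilon} < \eta) > 0$ together with the integrability furnished by the partition function $Z$.
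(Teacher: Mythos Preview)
Your argument is correct and follows the same overall strategy as the paper: reduce via the fundamental theorem of calculus to the single claim $T\delta(T)\to 0$ as $T\to 0^+$, then establish that limit by showing the Gibbs average of $\bar\varepsilon$ concentrates near~$0$.

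The difference lies in how the limit is proved. The paper argues by case distinction: if $\mu(\bar\varepsilon=0)>0$ it invokes Proposition~\ref{proposition:zeroenergy} directly, and otherwise it appeals to the appendix Lemma~\ref{lemma:equiv} to show that for each $r>0$ the full integrals are asymptotically equivalent to their restrictions to $\{\bar\varepsilon\le r\}$, whence the ratio is eventually $\le 2r$. Your route is unified and self-contained: you split the numerator at level $\eta$, bound the near part by $\eta$ times the denominator, and kill the far part by the factor $e^{-(\beta-\beta_0)\eta}$ against the lower bound $e^{-\beta\eta/2}\mu(\bar\varepsilon<\eta/2)$ for the denominator. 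This avoids both the case split and the auxiliary lemma, and it also makes explicit (via $D\ge 0$ and monotonicity) why the improper integral $\int_0^T D(T')\,\mathrm{d}T'$ converges, a point the paper leaves implicit. The paper's version, on the other hand, isolates the tail-negligibility mechanism into a reusable lemma, which is convenient if one needs the same estimate elsewhere.
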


\begin{proof}
We have for $T_1,T_2 >0$, $\displaystyle \int_{T_1}^{T_2} D(T') \, \text{d}T' = T_2 \delta(T_2) - T_1 \delta(T_1)$. We thus want to prove that $T \delta(T) \underset{T \to 0^+}{\longrightarrow} 0$. We proceed by disjunction of case. First, if $\mu(\varepsilon = \varepsilon^{0}) > 0$, then from Proposition \ref{proposition:zeroenergy}, $\delta(T) \underset{T \to 0^+}{\longrightarrow} 0$, so we also have $T \delta(T) \underset{T \to 0^+}{\longrightarrow} 0$. If on the other hand $\mu(\varepsilon = \varepsilon^{0}) = 0$, then since $\varepsilon^0 = \text{inf ess}_{\mu}\{ \varepsilon \}$, we have $\mu(0 < \Bar{\varepsilon} \leq r ) = \mu(\varepsilon^{0} < \varepsilon \leq \varepsilon^{0} + r ) > 0$ for all $r > 0$. Then from Lemma \ref{lemma:equiv} in Appendix, for all $r > 0$,
\begin{align*}
\int_{\E} \Bar{\varepsilon}(\zeta) e^{-x \Bar{\varepsilon}(\zeta)} \, \text{d}\mu(\zeta) &\underset{x \to \infty}{=} \int_{\Bar{\varepsilon} \leq r } \Bar{\varepsilon}(\zeta) e^{-x \Bar{\varepsilon}(\zeta)} \, \text{d}\mu(\zeta) +  o \left( \int_{\Bar{\varepsilon} \leq r} \Bar{\varepsilon}(\zeta) e^{-x \Bar{\varepsilon}(\zeta)} \, \text{d}\mu(\zeta) \right), \\
\text{and} \quad \int_{\E}  e^{-x \Bar{\varepsilon}(\zeta)} \, \text{d}\mu(\zeta) &\underset{x \to \infty}{=} \int_{\Bar{\varepsilon} \leq r } e^{-x \Bar{\varepsilon}(\zeta)} \, \text{d}\mu(\zeta) +  o \left( \int_{\Bar{\varepsilon} \leq r} e^{-x \Bar{\varepsilon}(\zeta)} \, \text{d}\mu(\zeta) \right).
\end{align*}
Since $\mu(\{ 0 < \Bar{\varepsilon} \leq r \}) > 0$ and $\Bar{\varepsilon} \geq 0$ $\mu$-a.e., $\int_{\Bar{\varepsilon} \leq r} \Bar{\varepsilon}(\zeta) e^{-x \Bar{\varepsilon}(\zeta)} \, \text{d}\mu(\zeta)  > 0$, so that
$$
\int_{\E} \Bar{\varepsilon}(\zeta) e^{-x \Bar{\varepsilon}(\zeta)} \, \text{d}\mu(\zeta) \underset{x \to \infty}{\sim} \int_{\Bar{\varepsilon} \leq r } \Bar{\varepsilon}(\zeta) e^{-x \Bar{\varepsilon}(\zeta)} \, \text{d}\mu(\zeta).
$$
Also, $\mu(\{ \Bar{\varepsilon} \leq r \}) > 0$ implies $\int_{\Bar{\varepsilon} \leq r} e^{-x \Bar{\varepsilon}(\zeta)} \, \text{d}\mu(\zeta)  > 0$, so that
$$
\int_{\E} e^{-x \Bar{\varepsilon}(\zeta)} \, \text{d}\mu(\zeta) \underset{x \to \infty}{\sim} \int_{\Bar{\varepsilon} \leq r } e^{-x \Bar{\varepsilon}(\zeta)} \, \text{d}\mu(\zeta).
$$
Hence
$$
\frac{\int_{\E} \Bar{\varepsilon}(\zeta) e^{-x \Bar{\varepsilon}(\zeta)} \, \text{d}\mu(\zeta)}{\int_{\E} e^{-x \Bar{\varepsilon}(\zeta)} \, \text{d}\mu(\zeta)} \underset{x \to \infty}{\sim} \frac{\int_{\Bar{\varepsilon} \leq r } \Bar{\varepsilon}(\zeta) e^{-x \Bar{\varepsilon}(\zeta)} \, \text{d}\mu(\zeta)}{\int_{\Bar{\varepsilon} \leq r } e^{-x \Bar{\varepsilon}(\zeta)} \, \text{d}\mu(\zeta)}.
$$
Thus there exists $x_r > 0$ such that for all $x \geq x_r$,
$$
0 \leq \frac{\int_{\E} \Bar{\varepsilon}(\zeta) e^{-x \Bar{\varepsilon}(\zeta)} \, \text{d}\mu(\zeta)}{\int_{\E} e^{-x \Bar{\varepsilon}(\zeta)} \, \text{d}\mu(\zeta)} \leq 2\, \frac{\int_{\Bar{\varepsilon} \leq r } \Bar{\varepsilon}(\zeta) e^{-x \Bar{\varepsilon}(\zeta)} \, \text{d}\mu(\zeta)}{\int_{\Bar{\varepsilon} \leq r } e^{-x \Bar{\varepsilon}(\zeta)} \, \text{d}\mu(\zeta)}
\leq 2 \, r.
$$
Since it is true for all $r>0$, we finally get
$$
\frac{\int_{\E} \Bar{\varepsilon}(\zeta) e^{-x \Bar{\varepsilon}(\zeta)} \, \text{d}\mu(\zeta)}{\int_{\E} e^{-x \Bar{\varepsilon}(\zeta)} \, \text{d}\mu(\zeta)} \underset{x \to \infty}{\longrightarrow} 0.
$$
It follows that in any case
$$
T \delta(T) = \frac{2}{k_B} \, \frac{\int_{\E} \Bar{\varepsilon}(\zeta) e^{-\frac{\Bar{\varepsilon}(\zeta)}{k_B T}} \, \text{d}\mu(\zeta)}{\int_{\E} e^{-\frac{\Bar{\varepsilon}(\zeta)}{k_B T}} \, \text{d}\mu(\zeta)} \underset{T \to 0^+}{\longrightarrow} 0.
$$

\QED

\end{proof}

\begin{remark} \label{remark:whyisZdefinedwithepsilonbar}
This Proposition \ref{proposition:constantvolume} along with Remark \ref{remark:deltaZ} justify our choice to define the partition function by $\displaystyle \int_{\E} e^{-\beta \Bar{\varepsilon}(\zeta)} \, \text{d}\mu(\zeta)$ (with $\Bar{\varepsilon}$) instead of simply $\displaystyle \int_{\E} e^{-\beta \varepsilon(\zeta)} \, \text{d}\mu(\zeta)$ (with $\varepsilon$). Indeed, if the second definition were taken, and $\delta$ were still defined by $- \frac{2}{k_B T} \, \big( \text{ln} ( Z ) \big)' \left( \frac{1}{k_B \, T} \right)$, then $\varepsilon^0 \neq 0$ would imply $\displaystyle \delta(T) \underset{T \to 0}{\sim} \frac{2 \, \varepsilon^0}{k_B T}$, which is non-physical. On the other hand, choosing $\bar{\varepsilon}$ in the definition always yields $T \delta(T) \underset{T \to 0}{\longrightarrow} 0$, and thus the formula $\displaystyle \delta(T) = \frac{1}{T} \int_{0}^{T} D(T') \, \text{d}T'$, which is physically coherent.
\end{remark}

\smallskip

\noindent As a consequence of Proposition \ref{proposition:constantvolume}, the total energy at the equilibrium may be cast as
$$
e^{eq}[u,T] =  \varepsilon^{0} + \frac{m}{2}  |u|^2 +  \int_{0}^{T} c_V(T') \, k_B \, \text{d}T'.
$$

\noindent
A probabilistic interpretation may also be provided for the functions $\delta(T)$ and $D(T)$.
Indeed, for $T>0$, we define the Gibbs (probability) measure $\nu_{T}$ on $(\E,\A)$ by
\begin{equation} \label{eq:gibbsmeasure}
\frac{\text{d}\nu_{T}}{\text{d}\mu}(\zeta) = Z \left(\frac{1}{k_B T} \right)^{-1} \,\text{exp} \left(- \frac{\Bar{\varepsilon}(\zeta)}{k_B T} \right),
\end{equation}
where $Z$ is the partition function defined by equation \eqref{eq:partitionfunc}; then for all $T >0$, $(\E,\A,\nu_{T})$ is a probability space. Since $\Bar{\varepsilon} : \E \to \R$ is $(\A,\text{Bor}(\R))$-measurable, it is a real random variable on $(\E,\A,\nu_{T})$.

\medskip

\begin{proposition} \label{proposition:nbdegreesfreedom}
For all $T > 0$,
\begin{equation} \label{eq:degreesexpectation}
\delta(T) = 2 \,  \mathbb{E}_{\nu_{T}} \left[ \frac{\Bar{\varepsilon}}{k_B T} \right],
\end{equation}
and
\begin{equation} \label{eq:degreesvariance}
D(T) = 2 \, \emph{Var}_{\nu_{T}} \left[ \frac{\Bar{\varepsilon}}{k_B T} \right],
\end{equation}
where $\mathbb{E}_{\nu_T}$ and $\emph{Var}_{\nu_{T}}$ are respectively the expectation and the variance under the probability $\nu_T$.
\end{proposition}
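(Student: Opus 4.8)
The plan is to recognize both identities as the standard thermodynamic relations expressing the mean and the variance of the energy observable $\Bar{\varepsilon}$ under the Gibbs measure $\nu_T$, and to derive them directly from the definitions via the partition function $Z$.

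For the first identity \eqref{eq:degreesexpectation}, I would start from the definition \eqref{eq:averagefreedom} of $\delta(T)$ and observe that its denominator is exactly $Z\left(\frac{1}{k_B T}\right)$ by \eqref{eq:partitionfunc}. Writing the expectation of $\Bar{\varepsilon}$ against the Gibbs measure \eqref{eq:gibbsmeasure},
$$
\mathbb{E}_{\nu_T}[\Bar{\varepsilon}] = \int_{\E} \Bar{\varepsilon}(\zeta) \, Z\left(\tfrac{1}{k_B T}\right)^{-1} \exp\left(-\tfrac{\Bar{\varepsilon}(\zeta)}{k_B T}\right) \, \text{d}\mu(\zeta),
$$
I immediately identify the ratio appearing in \eqref{eq:averagefreedom} with $\mathbb{E}_{\nu_T}[\Bar{\varepsilon}]$, so that $\delta(T) = \frac{2}{k_B T}\,\mathbb{E}_{\nu_T}[\Bar{\varepsilon}] = 2\,\mathbb{E}_{\nu_T}\!\left[\Bar{\varepsilon}/(k_B T)\right]$. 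This step is purely a matter of unwinding the definitions.

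For the second identity \eqref{eq:degreesvariance}, the idea is to route $D(T)$ through the partition function. Using Remark \ref{remark:deltaZ}, one has $T\delta(T) = -\frac{2}{k_B}(\text{ln}\,Z)'\!\left(\frac{1}{k_B T}\right)$, so that with $\beta = \frac{1}{k_B T}$ and the chain rule (noting $\text{d}\beta/\text{d}T = -k_B\beta^2$),
$$
D(T) = \frac{\text{d}}{\text{d}T}\big(T\delta(T)\big) = 2\beta^2\,(\text{ln}\,Z)''(\beta).
$$
It then remains to compute $(\text{ln}\,Z)''$. By Proposition \ref{prop:ZisCinfty}, $Z$ is $\C^{\infty}$ on $\R_+^*$ with $Z'(\beta) = -\int_{\E}\Bar{\varepsilon}\,e^{-\beta\Bar{\varepsilon}}\,\text{d}\mu$ and $Z''(\beta) = \int_{\E}\Bar{\varepsilon}^2\,e^{-\beta\Bar{\varepsilon}}\,\text{d}\mu$, which legitimizes all differentiations under the integral sign. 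Writing $(\text{ln}\,Z)'' = Z''/Z - (Z'/Z)^2$ and dividing each term by $Z(\beta)$ to recognize the Gibbs averages, I obtain $(\text{ln}\,Z)''(\beta) = \mathbb{E}_{\nu_T}[\Bar{\varepsilon}^2] - \mathbb{E}_{\nu_T}[\Bar{\varepsilon}]^2 = \text{Var}_{\nu_T}[\Bar{\varepsilon}]$. Substituting back and using $\text{Var}_{\nu_T}[\beta\Bar{\varepsilon}] = \beta^2\,\text{Var}_{\nu_T}[\Bar{\varepsilon}]$ yields $D(T) = 2\,\text{Var}_{\nu_T}\!\left[\Bar{\varepsilon}/(k_B T)\right]$.

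There is no genuine obstacle here: the only technical point is the justification of differentiation under the integral sign for $Z'$ and $Z''$, and this is already provided by Proposition \ref{prop:ZisCinfty}. As a byproduct, since a variance is non-negative and bounded above by the second moment, the identity \eqref{eq:degreesvariance} immediately gives $0 \leq D(T) \leq 2\,\mathbb{E}_{\nu_T}\!\left[(\Bar{\varepsilon}/(k_B T))^2\right]$, which is precisely the bound invoked earlier in the proof of Proposition \ref{proposition:zeroenergy}.
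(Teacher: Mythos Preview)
Your proof is correct and follows essentially the same route as the paper: both arguments unwind the definition of $\delta(T)$ to get the expectation identity, and both obtain the variance identity by differentiating through the partition function, recognizing $Z'/Z$ and $Z''/Z$ as the first and second moments of $\Bar{\varepsilon}$ under $\nu_T$. Your use of $(\ln Z)''$ is a slight notational shortcut compared to the paper's direct computation of $\frac{\mathrm{d}}{\mathrm{d}T}\!\left(-Z'/Z\right)$, but the substance is identical.
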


\begin{proof}
The following computations are possible thanks to Assumption \ref{assumption:integrability}. The first part of the proposition comes from
$$
\frac{\int_{\E} \Bar{\varepsilon}(\zeta) \, \text{exp} \left(- \frac{\Bar{\varepsilon}(\zeta)}{k_B T} \right) \text{d}\mu(\zeta)}{\int_{\E} \text{exp} \left(- \frac{\Bar{\varepsilon}(\zeta)}{k_B T} \right) \text{d}\mu(\zeta)} = \int_{\E} \Bar{\varepsilon}(\zeta) \,  Z \left(\frac{1}{k_B T} \right)^{-1} \,\text{exp} \left(- \frac{\Bar{\varepsilon}(\zeta)}{k_B T} \right) \text{d}\mu(\zeta) = \int_{\E} \Bar{\varepsilon}(\zeta) \, \text{d}\nu_{T}(\zeta) = \mathbb{E}_{\nu_{T}} [\Bar{\varepsilon}].
$$
For equation \eqref{eq:degreesvariance}, we remark that from Proposition \ref{prop:ZisCinfty},
$$
\mathbb{E}_{\nu_{T}} [\Bar{\varepsilon}] = \frac{\int_{\E} \Bar{\varepsilon}(\zeta) \, \text{exp} \left(- \frac{\Bar{\varepsilon}(\zeta)}{k_B T} \right) \text{d}\mu(\zeta)}{\int_{\E} \text{exp} \left(- \frac{\Bar{\varepsilon}(\zeta)}{k_B T} \right) \text{d}\mu(\zeta)} = \frac{- Z' \left(\frac{1}{k_B T} \right)}{Z \left(\frac{1}{k_B T} \right)}, \quad  \mathbb{E}_{\nu_{T}} \left[\Bar{\varepsilon}^2 \right] = \frac{\int_{\E} \Bar{\varepsilon}(\zeta)^2 \, \text{exp} \left(- \frac{\Bar{\varepsilon}(\zeta)}{k_B T} \right) \text{d}\mu(\zeta)}{\int_{\E} \text{exp} \left(- \frac{\Bar{\varepsilon}(\zeta)}{k_B T} \right) \text{d}\mu(\zeta)} = \frac{Z'' \left(\frac{1}{k_B T} \right)}{Z \left(\frac{1}{k_B T} \right)}.
$$
It follows that
\begin{align*}
D(T) &= 2 \, \frac{\text{d}}{k_B \, \text{d}T} \left(   \frac{- Z' \left(\frac{1}{k_B T} \right)}{Z \left(\frac{1}{k_B T} \right)}   \right) =  \frac{2}{(k_B T)^2} \left(   \frac{ Z'' \left(\frac{1}{k_B T} \right)}{Z \left(\frac{1}{k_B T} \right)} - \left(\frac{-Z' \left(\frac{1}{k_B T} \right)}{Z \left(\frac{1}{k_B T} \right)}\right)^2  \right) \\ &= \frac{2}{(k_B T)^2} \left(  \mathbb{E}_{\nu_{T}} \left[ \Bar{\varepsilon}^2 \right]  - (\mathbb{E}_{\nu_{T}} [\Bar{\varepsilon}])^2  \right) =\frac{2}{(k_B T)^2} \, \text{Var}_{\nu_{T}} \left[\Bar{\varepsilon} \right]
= 2 \, \text{Var}_{\nu_{T}} \left[ \frac{\Bar{\varepsilon}}{k_B T} \right].
\end{align*}
\QED
\end{proof}

\smallskip

\begin{corollary} \label{corollary:Dpositif}
For all $T>0$, $D(T) \geq 0$.
\end{corollary}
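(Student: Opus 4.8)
The plan is to invoke Proposition \ref{proposition:nbdegreesfreedom} directly, since the statement is an immediate consequence of the probabilistic interpretation of $D(T)$ established there. Specifically, equation \eqref{eq:degreesvariance} asserts that
$$
D(T) = 2 \, \text{Var}_{\nu_{T}} \left[ \frac{\Bar{\varepsilon}}{k_B T} \right],
$$
where $\nu_T$ is the Gibbs probability measure defined by \eqref{eq:gibbsmeasure} and $\Bar{\varepsilon}$ is a real random variable on the probability space $(\E,\A,\nu_T)$. I would first recall that the variance of any real-valued random variable is non-negative whenever it is well-defined, as it equals the expectation of the square of a centered variable, $\text{Var}_{\nu_T}[\Bar{\varepsilon}/(k_B T)] = \mathbb{E}_{\nu_T}\big[(\Bar{\varepsilon}/(k_B T) - \mathbb{E}_{\nu_T}[\Bar{\varepsilon}/(k_B T)])^2\big] \geq 0$. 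The well-definedness of the moments is guaranteed by Assumption \ref{assumption:integrability} together with Proposition \ref{prop:ZisCinfty}, exactly as used in the proof of Proposition \ref{proposition:nbdegreesfreedom}.

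Combining these two observations yields $D(T) = 2\,\text{Var}_{\nu_T}[\Bar{\varepsilon}/(k_B T)] \geq 0$ for all $T>0$, which is the claim. There is essentially no obstacle here: the entire content was already packaged into the variance formula \eqref{eq:degreesvariance}, so the corollary is a one-line deduction from the non-negativity of variances. The only point worth stating explicitly is that the relevant second moment is finite, so that the variance is genuinely a finite non-negative real number rather than a meaningless expression; this finiteness is precisely what Proposition \ref{prop:ZisCinfty} provides through the smoothness of $Z$ on $\R_+^*$.
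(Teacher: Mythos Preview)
Your proof is correct and follows exactly the same approach as the paper, which simply states that the result is an immediate consequence of Proposition~\ref{proposition:nbdegreesfreedom}. Your additional remarks on the non-negativity and finiteness of the variance are accurate elaborations of what the paper leaves implicit.
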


\begin{proof}
It is an immediate consequence of Proposition \ref{proposition:nbdegreesfreedom}.
\QED
\end{proof}

\begin{corollary} \label{corollary:boundedenergy}
If there exists $R \in \R$ such that $\varepsilon \leq R$ $\mu$-a.e., then
$$
\delta(T) \underset{T \to \infty}{\longrightarrow} 0, \quad \text{and} \quad D(T) \underset{T \to \infty}{\longrightarrow} 0.
$$
\end{corollary}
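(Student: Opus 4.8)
The plan is to read off the conclusion from the probabilistic representation of $\delta$ and $D$ established in Proposition \ref{proposition:nbdegreesfreedom}; under the boundedness hypothesis this representation immediately yields uniform-in-$T$ control of the relevant moments of $\Bar{\varepsilon}$, and the prefactors $1/T$ and $1/T^2$ do the rest.

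First I would translate the hypothesis into a two-sided bound on the grounded energy. Since $\varepsilon \leq R$ $\mu$-a.e.\ and $\Bar{\varepsilon} = \varepsilon - \varepsilon^0 \geq 0$ $\mu$-a.e.\ (recall $\varepsilon^0 = \text{inf ess}_{\mu}\{\varepsilon\}$), we obtain
\[
0 \leq \Bar{\varepsilon} \leq M \quad \mu\text{-a.e.,} \qquad M := R - \varepsilon^0 \geq 0 .
\]
Because for every $T>0$ the Gibbs measure $\nu_T$ of \eqref{eq:gibbsmeasure} is absolutely continuous with respect to $\mu$, this bound also holds $\nu_T$-a.e.; thus, viewed as a random variable on $(\E,\A,\nu_T)$, $\Bar{\varepsilon}$ is valued in $[0,M]$.

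Next I would invoke Proposition \ref{proposition:nbdegreesfreedom}, namely $\delta(T) = \tfrac{2}{k_B T}\,\mathbb{E}_{\nu_T}[\Bar{\varepsilon}]$ and $D(T) = \tfrac{2}{(k_B T)^2}\,\text{Var}_{\nu_T}[\Bar{\varepsilon}]$. From $0 \leq \Bar{\varepsilon} \leq M$ $\nu_T$-a.e.\ we get $0 \leq \mathbb{E}_{\nu_T}[\Bar{\varepsilon}] \leq M$ and $0 \leq \text{Var}_{\nu_T}[\Bar{\varepsilon}] \leq \mathbb{E}_{\nu_T}[\Bar{\varepsilon}^2] \leq M^2$, both uniformly in $T$. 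Hence
\[
0 \leq \delta(T) \leq \frac{2M}{k_B T}, \qquad 0 \leq D(T) \leq \frac{2M^2}{(k_B T)^2},
\]
and letting $T \to \infty$ forces both quantities to zero, which is exactly the assertion.

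There is essentially no hard step here: the argument is a uniform moment bound followed by division by a power of $T$. The only point deserving a moment of care is transferring the $\mu$-a.e.\ bound on $\Bar{\varepsilon}$ to a $\nu_T$-a.e.\ bound, which is immediate from the absolute continuity $\nu_T \ll \mu$ built into \eqref{eq:gibbsmeasure}; the nonnegativity of $\delta$ and $D$ is already recorded via Corollary \ref{corollary:Dpositif} and the remark that $\Bar{\varepsilon} \geq 0$.
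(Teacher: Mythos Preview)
Your proof is correct and follows essentially the same approach as the paper: both invoke the probabilistic representation of Proposition~\ref{proposition:nbdegreesfreedom}, bound $\Bar{\varepsilon}$ above by $R-\varepsilon^0$, and control $\mathbb{E}_{\nu_T}[\Bar{\varepsilon}]$ and $\text{Var}_{\nu_T}[\Bar{\varepsilon}]$ uniformly in $T$ to obtain the decay from the $1/T$ and $1/T^2$ prefactors. Your explicit mention of the absolute continuity $\nu_T \ll \mu$ to transfer the a.e.\ bound is a nice touch the paper leaves implicit.
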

\begin{proof}
Necessarily $R \geq \varepsilon^{0}$, and from Proposition \ref{proposition:nbdegreesfreedom}
$$
0 \leq \delta(T) = 2 \, \mathbb{E}_{\nu_{T}} \left[ \frac{\Bar{\varepsilon}}{k_B T} \right] \leq \frac{2 (R- \varepsilon^{0})}{k_B T} \underset{T \to \infty}{\longrightarrow} 0,
$$
and
$$
0 \leq D(T) = 2 \, \text{Var}_{\nu_{T}} \left[ \frac{\Bar{\varepsilon}}{k_B T} \right] \leq 2 \, \mathbb{E}_{\nu_{T}} \left[ \left(\frac{\Bar{\varepsilon}}{k_B T} \right)^2 \right] \leq \frac{2 (R- \varepsilon^{0})^2}{(k_B T)^2} \underset{T \to \infty}{\longrightarrow} 0.
$$
\QED
\end{proof}

\noindent
We define now the function $\Theta$, for all $T \in \R_+^*$, as
\begin{equation} \label{eqdef:theta}
\Theta(T) :=  \int_{0}^{T} c_V(T') \, k_B \, \text{d}T',
\end{equation}
such that
$$
e^{eq}[u,T] =  \varepsilon^{0} +  \frac{m}{2} |u|^2 + \Theta(T).
$$
\begin{proposition}
$\Theta$ is continuous on $\R_+^*$, can be extended by continuity to $\R_+$, setting $\Theta(0) = 0$, increasing on $\R_+$, with $\Theta(T) \underset{T \rightarrow \infty}{\longrightarrow} \infty$, and thus a bijection from $\R_+$ to $\R_+$.
\end{proposition}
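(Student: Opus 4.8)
The plan is to reduce every assertion about $\Theta$ to the already-understood behaviour of the map $T \mapsto T\delta(T)$, via a closed-form expression. First I would substitute the definition $c_V(T') = \frac{3+D(T')}{2}$ into the defining integral and split it, obtaining for all $T>0$
\begin{equation*}
\Theta(T) = \int_0^T \frac{3 + D(T')}{2}\, k_B\, \text{d}T' = \frac{3 k_B}{2}\, T + \frac{k_B}{2} \int_0^T D(T')\, \text{d}T'.
\end{equation*}
Proposition \ref{proposition:constantvolume} gives exactly $\int_0^T D(T')\,\text{d}T' = T\,\delta(T)$ (and, in passing, guarantees that this integral converges at the lower endpoint $0$, which is the only point where integrability could fail since $c_V$ is otherwise continuous on $\R_+^*$). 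Hence $\Theta(T) = \frac{k_B}{2}\big(3 + \delta(T)\big)\, T$, consistent with the expression already found for $e^{eq}[u,T]$. This identity is the crux: it turns all five claimed properties into elementary consequences of facts proved earlier.

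Next I would treat continuity and the extension to $\R_+$. By Remark \ref{remark:deltaZ} the function $T \mapsto T\,\delta(T)$ is $\mathcal{C}^\infty$ on $\R_+^*$, so the closed form shows $\Theta \in \mathcal{C}^\infty(\R_+^*)$; in particular it is continuous there. For the extension, I would invoke the limit $T\,\delta(T) \underset{T\to 0^+}{\longrightarrow} 0$ established within the proof of Proposition \ref{proposition:constantvolume}: since also $\frac{3k_B}{2}T \to 0$, we get $\Theta(T) \to 0$ as $T\to 0^+$, so the continuous extension with $\Theta(0) = 0$ is legitimate.

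Finally I would handle monotonicity, divergence, and the bijection. Because $\Theta$ is a primitive of the continuous function $k_B\, c_V$ on $\R_+^*$, the fundamental theorem of calculus gives $\Theta'(T) = k_B\, c_V(T) = \frac{k_B}{2}\big(3 + D(T)\big)$. Using $D(T) \geq 0$ from Corollary \ref{corollary:Dpositif}, we obtain $\Theta'(T) \geq \frac{3 k_B}{2} > 0$, so $\Theta$ is strictly increasing on $\R_+^*$, and together with $\Theta(0)=0$ and continuity it is increasing on all of $\R_+$. For the divergence, the bound $\delta \geq 0$ yields $\Theta(T) \geq \frac{3 k_B}{2}\, T \to \infty$. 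A continuous, strictly increasing function on $\R_+$ with $\Theta(0) = 0$ and $\Theta(T) \to \infty$ is, by the intermediate value theorem together with injectivity, a bijection from $\R_+$ onto $[0,\infty) = \R_+$, which concludes the proof. I do not expect a genuine obstacle here: the only delicate point, integrability of $c_V$ near $0$, is supplied directly by Proposition \ref{proposition:constantvolume}, so the argument is essentially a bookkeeping assembly of the preceding results.
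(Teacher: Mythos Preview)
Your proof is correct and follows essentially the same approach as the paper. The paper's proof is a single sentence: ``The result is immediate, since from Corollary~\ref{corollary:Dpositif} we have $c_V \geq \frac{3}{2}$,'' which encodes exactly the same ingredients you spell out---integrability near $0$ (already implicit in the derivation preceding the definition of $\Theta$, via Proposition~\ref{proposition:constantvolume}), strict monotonicity from $\Theta' = k_B c_V > 0$, and divergence from $\Theta(T) \geq \frac{3k_B}{2}T$; your explicit passage through the closed form $\Theta(T) = \frac{k_B}{2}(3+\delta(T))T$ is a minor detour but leads to the same place.
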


\begin{proof}
The result is immediate, since from Corollary \ref{corollary:Dpositif} we have $\displaystyle c_V \geq \frac{3}{2}$.
\QED
\end{proof}

\smallskip

\noindent We then have
$$
T =  \Theta^{-1} \left( \frac{ m}{\rho} \,  \int_{\E}  \int_{\R^3}\left( \frac{m}{2} |v- u|^2 + \Bar{\varepsilon}(\zeta) \right) \, f(v,\zeta) \, \text{d}v \, \text{d}\mu(\zeta) \right),
$$
which explains and rigorously justifies the definition of temperature stated in equation \eqref{eqdef:temperature}.

\section{Euler limit} \label{section:euler}

Let us assume that there exists $f$ solution to the Boltzmann equation \eqref{eq:boltzmann} such that for all $t,x$, $f(t,x,\cdot,\cdot)$ is a Maxwellian, that is
$$
f(t,x,v,\zeta) = \mathcal{M}[\rho(t,x),u(t,x),T(t,x)](v,\zeta).
$$
Since
$$
\partial_t f(t,x,v,\zeta) + v \cdot \nabla_x f(t,x,v,\zeta) = \B(f,f)(t,x,v,\zeta)
$$
and
$$
\int_{\E} \int_{\R^3} \begin{pmatrix}  m \\  m v \\ \frac{m}{2}|v|^2 + \varepsilon(\zeta) \end{pmatrix} \B(f,f)(t,x,v,\zeta) \, \text{d}v \, \text{d}\mu(\zeta) = 0,
$$
we have
\begin{align*}
\partial_t  \left( \int_{\E} \int_{\R^3}  m \, f(t,x,v,\zeta) \, \text{d}v \, \text{d}\mu(\zeta) \right) &+ \text{div}_x \left( \int_{\E} \int_{\R^3}  m \, v f(t,x,v,\zeta) \, \text{d}v \, \text{d}\mu(\zeta) \right)  = 0\\
\partial_t  \left( \int_{\E} \int_{\R^3}  m \, v f(t,x,v,\zeta) \, \text{d}v \, \text{d}\mu(\zeta) \right) &+ \text{div}_x \left(  \int_{\E} \int_{\R^3}  m \, v \otimes v f(t,x,v,\zeta) \, \text{d}v \, \text{d}\mu(\zeta) \right)  = 0\\
\partial_t  \Bigg(  \int_{\E} \int_{\R^3} \left(\frac{ m}{2} |v|^2 + \varepsilon(\zeta) \right)f(t,x,v,\zeta) \, \text{d}v \, \text{d}\mu(\zeta) \Bigg) &+  \text{div}_x \Bigg( \int_{\E} \int_{\R^3} v \left(\frac{ m}{2}  |v|^2 + \varepsilon(\zeta) \right) f(t,x,v,\zeta) \, \text{d}v \, \text{d}\mu(\zeta) \Bigg)  = 0.
\end{align*}
Since $f(t,x,v,\zeta) = \mathcal{M}[\rho(t,x),u(t,x),T(t,x)](v,\zeta)$, we have from the definitions of mass, momentum, and total energy
\begin{align*}
\partial_t \rho &+  \text{div}_x \left( \rho \,  u \right) = 0\\
\partial_t  \left( \rho \, u \right) &+ \text{div}_x \left(  \int_{\E} \int_{\R^3}  m \, v \otimes v \mathcal{M}[\rho,u,T](v,\zeta) \, \text{d}v \, \text{d}\mu(\zeta) \right)  = 0\\
\partial_t  \left(  \rho \, \varepsilon^{0} +  \frac{1}{2} \rho |u|^2 + \frac{\rho}{m} \, \frac{3 + \delta(T)}{2} \, k_B T \right) &+  \text{div}_x \Bigg( \int_{\E} \int_{\R^3} v \left(\frac{ m}{2}  |v|^2 + \varepsilon(\zeta) \right) \mathcal{M}[\rho,u,T](v,\zeta) \, \text{d}v \, \text{d}\mu(\zeta) \Bigg)  = 0.
\end{align*}
We are left with two terms to compute. First, by a classical argument (see \cite{bouchut2000kinetic}, Chapter 2),
\begin{align*}
\int_{\E} \int_{\R^3}  m \, v \otimes v \, \mathcal{M}[\rho,u,T](v,\zeta) \, \text{d}v \, \text{d}\mu(\zeta) &=   \rho \, m^{1/2} \,  (2 \pi k_B T)^{-3/2} \, \int_{\R^3}  m \, v \otimes v \,  \text{exp} \left( - \frac{ m |v-u|^2}{2 k_B T} \right) \, \text{d}v \\
&= \rho \, u \otimes u + \frac{\rho}{ m} \, k_B T \, \mathbf{I_d},
\end{align*}
where $\mathbf{I_d}$ is the identity matrix. Also by a classical argument,
\begin{align*}
&\int_{\E} \int_{\R^3} v \left(\frac{ m}{2}  |v|^2 + \varepsilon(\zeta) \right) \mathcal{M}[\rho,u,T](v,\zeta) \, \text{d}v \, \text{d}\mu(\zeta) \\
&=  \rho \, m^{1/2} \, (2 \pi k_B T)^{-3/2} \, \int_{\R^3} v \frac{ m}{2}  |v|^2 \, \text{exp} \left( - \frac{ m|v-u|^2}{2 k_B T} \right) \, \text{d}v + \int_{\E} \int_{\R^3} v \, \varepsilon(\zeta) \mathcal{M}[\rho,u,T](v,\zeta) \, \text{d}v \, \text{d}\mu(\zeta) \\
&= \frac12 \rho \, |u|^2 u + \frac{\rho}{ m} \, k_B T \, u + \frac{\rho}{ m} \, \frac{3}{2} k_B T \, u + \frac{\rho}{ m} \, \left(\frac{\int_{\E} \varepsilon(\zeta) \, \text{exp} \left(- \frac{\Bar{\varepsilon}(\zeta)}{k_B T} \right) \text{d}\mu(\zeta)}{\int_{\E} \text{exp} \left(- \frac{\Bar{\varepsilon}(\zeta)}{k_B T} \right) \text{d}\mu(\zeta)}  \right) u \\
&= \frac12 \rho \, |u|^2 u + \frac{\rho}{ m} \, k_B T \, u + \frac{\rho}{ m} \, \frac{3 + \delta(T)}{2} k_B T \, u +  \frac{\rho}{ m} \, \varepsilon^{0} \, u.
\end{align*}
We set
$$
p(t,x) := \frac{\rho}{ m} \, k_B T(t,x) \quad \text{and} \quad \theta(t,x) := 
\frac{3 + \delta(T(t,x))}{2} k_B T(t,x).
$$
Remarking that $\partial_t (\rho \, \varepsilon^0) +  \text{div}_x \left( \rho \, \varepsilon^0 \, u \right) = \varepsilon^0 \big(\partial_t \rho +  \text{div}_x \left( \rho \,  u \right) \big) = 0$,  we obtain the compressible Euler set of equations, with $\rho$ the density, $u$ the velocity, $p$ the pressure and $\theta$ the specific internal energy density.
\begin{equation} \label{eq:eulergeneral}
\begin{cases}
\displaystyle \hspace{62.4pt} \partial_t \rho +  \text{div}_x \left( \rho \,  u \right) = 0\\
\displaystyle \hspace{45.6pt} \partial_t  \left( \rho \, u \right) + \text{div}_x \left( \rho \, u \otimes u \right) + \nabla_x p  = 0\\
\displaystyle \partial_t  \left( \frac12 \rho |u|^2 + \frac{\rho}{ m} \, \theta \right) +  \text{div}_x \left( \frac12 \rho |u|^2 u + \frac{\rho}{ m} \, \theta\, u + p \, u \right)  = 0.
\end{cases}
\end{equation}

\section{Equipartition Theorem} \label{section:combine}

In this section we show how is possible to combine different measure spaces of internal states into a unique kinetic model for the considered polyatomic gas.

\begin{theorem} \textbf{Equipartition Theorem.} \label{theorem:equipartition}
Let $L \in \N^*$. Consider $(\E_l,\A_l,\mu_l)_{1 \leq l \leq L}$, and for all $1\leq l \leq L $, $\varepsilon_l : (\E_l, \A_l) \to (\R,\emph{Bor}(\R))$ measurable such that Assumptions \ref{assumption:minimum} and \ref{assumption:integrability} hold for all $(\mu_l,\varepsilon_l)$. Let us define
\begin{align*}
\E = \E_1 \times \dots \times \E_L,\qquad \A = \A_1 \otimes \dots \otimes \A_L, \qquad \mu = \mu_1 \otimes \dots \otimes \mu_L
\end{align*}
and
$$
\forall \zeta = (\zeta_1, \dots, \zeta_L) \in \E, \qquad \varepsilon(\zeta) = \sum_{l=1}^L  \varepsilon_l(\zeta_l).
$$
Then Assumptions \ref{assumption:minimum} and \ref{assumption:integrability} hold for $(\mu,\varepsilon)$, and for all $T > 0$, the Gibbs measure defined in \eqref{eq:gibbsmeasure} turns out to be
$$
\nu_{T} = \nu^1_{T} \otimes \dots \otimes \nu^L_{T}
$$
and moreover one has
$$
\delta(T) = \sum_{l=1}^L \delta_{l}(T), \qquad D(T) = \sum_{l=1}^L D_l(T).
$$
\end{theorem}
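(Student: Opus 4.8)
The plan is to proceed in three stages: verify Assumptions \ref{assumption:minimum} and \ref{assumption:integrability} for $(\mu,\varepsilon)$, factorise the Gibbs measure, and then read off the additivity of $\delta$ and $D$ from independence. The whole argument is driven by the single identity $\Bar{\varepsilon}(\zeta) = \sum_{l=1}^L \Bar{\varepsilon}_l(\zeta_l)$, which must be established first.

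First I would show that $\varepsilon^0 = \sum_{l=1}^L \varepsilon_l^0$; this is the only genuinely measure-theoretic point and I expect it to be the main obstacle. Since $\varepsilon_l \geq \varepsilon_l^0$ holds $\mu_l$-a.e.\ for each $l$, the product structure of $\mu$ gives $\varepsilon \geq \sum_l \varepsilon_l^0$ $\mu$-a.e., so $\varepsilon^0 \geq \sum_l \varepsilon_l^0$. For the reverse inequality, fix $\eta > 0$; by definition of $\varepsilon_l^0$ as an essential infimum the set $A_l := \{\Bar{\varepsilon}_l < \eta/L\}$ satisfies $\mu_l(A_l) > 0$, and Assumptions \ref{assumption:minimum} and \ref{assumption:integrability} (which force $\mu_l(\varepsilon_l < R) < \infty$) ensure $\mu_l(A_l) < \infty$. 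The rectangle $A = A_1 \times \dots \times A_L$ then has $\mu(A) = \prod_l \mu_l(A_l) \in (0,\infty)$, and on $A$ one has $\Bar{\varepsilon} = \sum_l \Bar{\varepsilon}_l < \eta$; letting $\eta \to 0$ gives $\text{inf ess}_\mu\{\Bar{\varepsilon}\} = 0$, i.e.\ $\varepsilon^0 = \sum_l \varepsilon_l^0 \in \R$. This yields at once the additive form of $\Bar{\varepsilon}$.

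Next I would verify Assumption \ref{assumption:integrability} by factorising the partition function. Writing $\beta = 1/(k_B T)$ and using $\Bar{\varepsilon} = \sum_l \Bar{\varepsilon}_l$, Tonelli's theorem (all integrands being non-negative) gives
\begin{equation*}
Z(\beta) = \int_\E e^{-\beta \sum_l \Bar{\varepsilon}_l(\zeta_l)} \, \text{d}\mu(\zeta) = \prod_{l=1}^L \int_{\E_l} e^{-\beta \Bar{\varepsilon}_l(\zeta_l)} \, \text{d}\mu_l(\zeta_l) = \prod_{l=1}^L Z_l(\beta),
\end{equation*}
which is finite since each $Z_l(\beta) < \infty$; this is exactly Assumption \ref{assumption:integrability} for $(\mu,\varepsilon)$. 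The factorisation of the Gibbs measure is then immediate, since the Radon--Nikodym density
$$
\frac{\text{d}\nu_T}{\text{d}\mu}(\zeta) = Z(\beta)^{-1} e^{-\beta\Bar{\varepsilon}(\zeta)} = \prod_{l=1}^L Z_l(\beta)^{-1} e^{-\beta \Bar{\varepsilon}_l(\zeta_l)} = \prod_{l=1}^L \frac{\text{d}\nu_T^l}{\text{d}\mu_l}(\zeta_l)
$$
is precisely the density of $\nu_T^1 \otimes \dots \otimes \nu_T^L$ with respect to $\mu$, whence $\nu_T = \nu_T^1 \otimes \dots \otimes \nu_T^L$.

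Finally I would obtain the additivity of $\delta$ and $D$ from the probabilistic formulas of Proposition \ref{proposition:nbdegreesfreedom}, namely $\delta(T) = 2\,\mathbb{E}_{\nu_T}[\Bar{\varepsilon}/(k_B T)]$ and $D(T) = 2\,\text{Var}_{\nu_T}[\Bar{\varepsilon}/(k_B T)]$. Under the product law $\nu_T = \nu_T^1 \otimes \dots \otimes \nu_T^L$, the coordinate functions $\Bar{\varepsilon}_l$ are independent real random variables on $(\E,\A,\nu_T)$, each with the same law as $\Bar{\varepsilon}_l$ on $(\E_l,\A_l,\nu_T^l)$. Linearity of expectation then gives $\mathbb{E}_{\nu_T}[\Bar{\varepsilon}] = \sum_l \mathbb{E}_{\nu_T^l}[\Bar{\varepsilon}_l]$, hence $\delta(T) = \sum_l \delta_l(T)$; and additivity of variance over independent summands gives $\text{Var}_{\nu_T}[\Bar{\varepsilon}] = \sum_l \text{Var}_{\nu_T^l}[\Bar{\varepsilon}_l]$, hence $D(T) = \sum_l D_l(T)$. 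Everything after the essential-infimum identification is routine Tonelli factorisation together with the elementary behaviour of expectation and variance under independence.
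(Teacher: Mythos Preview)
Your proposal is correct and follows essentially the same route as the paper: factorise the Gibbs measure via the additive decomposition $\Bar{\varepsilon}(\zeta)=\sum_l \Bar{\varepsilon}_l(\zeta_l)$, then use Proposition~\ref{proposition:nbdegreesfreedom} together with linearity of expectation and additivity of variance under independence. In fact you are more careful than the paper, which simply asserts $\Bar{\varepsilon}=\sum_l \Bar{\varepsilon}_l$ without justifying the essential-infimum identity $\varepsilon^0=\sum_l \varepsilon_l^0$ and does not spell out the verification of Assumptions~\ref{assumption:minimum} and~\ref{assumption:integrability}; your rectangle argument for the reverse inequality fills that gap cleanly.
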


\begin{proof}

Let $T > 0$. Let us first prove that $\nu_{T} = \nu^1_{T} \otimes \dots \otimes \nu^L_{T}$. We first remark that $\Bar{\varepsilon}(\zeta) = \sum_{l=1}^L  \Bar{\varepsilon}_l(\zeta_l)$. Let $A = (A_1,\dots,A_L) \in \A_1 \times \dots \times \A_L$. Then
\begin{align*}
\int_{A}  \text{exp} \left(- \frac{\Bar{\varepsilon}(\zeta)}{k_B T} \right) \text{d}\mu(\zeta) &= \int_{A_1} \dots \int_{A_L} \left(  \text{exp} \left( -  \frac{\sum_{l=1}^L \Bar{\varepsilon}_l(\zeta_l)}{k_B T} \right) \right) \, \text{d}\mu_1(\zeta_1) \dots \text{d}\mu_L(\zeta_L) \\
&= \int_{A_1} \dots \int_{A_L} \left( \prod_{l=1}^L \text{exp} \left(- \frac{\Bar{\varepsilon}_l(\zeta_l)}{k_B T} \right) \right) \, \text{d}\mu_1(\zeta_1) \dots \text{d}\mu_L(\zeta_L) \\
&= \prod_{l=1}^L \int_{A_l}   \text{exp} \left( -  \frac{ \Bar{\varepsilon}_l(\zeta_l)}{k_B T} \right)  \, \text{d}\mu_l(\zeta_l).
\end{align*}
Since this also holds for $A = \E = \E_1 \times \dots \times \E_L$, we then have
$$
\nu_T(A) = \prod_{l=1}^L \frac{ \int_{A_l}   \text{exp} \left( -  \frac{ \Bar{\varepsilon}_l(\zeta_l)}{k_B T} \right)  \, \text{d}\mu_l(\zeta_l)}{\int_{\E_l}   \text{exp} \left( -  \frac{ \Bar{\varepsilon}_l(\zeta_l)}{k_B T} \right)  \, \text{d}\mu_l(\zeta_l)} = \prod_{l=1}^L \nu^l_T(A_l),
$$
that is $\nu_T = \nu^1_T \otimes \dots \otimes \nu^L_T$.

\smallskip

\noindent Each $\varepsilon_l$ (thus also $\Bar{\varepsilon}_l$) can also be seen as a real random variable on $(\E,\A,\nu_T)$, by setting $\varepsilon_l(\zeta) \equiv \varepsilon_l(\zeta_l)$. It follows that
$$
\delta(T) = 2 \, \mathbb{E}_{\nu_{T}} \left[ \frac{\Bar{\varepsilon}}{k_B T} \right] =  2 \, \mathbb{E}_{\nu_{T}} \left[ \sum_{l=1}^L \frac{\Bar{\varepsilon}_l}{k_B T} \right] = 2 \sum_{l=1}^L \mathbb{E}_{\nu_{T}} \left[ \frac{\Bar{\varepsilon}_l}{k_B T} \right] = \sum_{l=1}^L \delta_l(T).
$$
Finally, since $\varepsilon_1, \dots, \varepsilon_L$ are independent under $\nu_T$ ($\nu_T$ is tensorized and they each depend on a different variable),
$$
D(T) = 2 \, \text{Var}_{\nu_{T}} \left[ \frac{\Bar{\varepsilon}}{k_B T} \right] =  2 \, \text{Var}_{\nu_{T}} \left[ \sum_{l=1}^L \frac{\Bar{\varepsilon}_l}{k_B T} \right] = 2 \sum_{l=1}^L \text{Var}_{\nu_{T}} \left[ \frac{\Bar{\varepsilon}_l}{k_B T} \right] =  \sum_{l=1}^L D_l(T).
$$
\QED
\end{proof}

\smallskip
We show now the results relevant to some particular options.

\begin{proposition} \label{proposition:alphabetadelta}
Let $C_1,C_2 > 0$, $\beta > 0$ and $\alpha > - 1$. Denoting by $\emph{Bor}(\R)$ the set Borelians of $\R$, we consider
$$
(\E, \A, \emph{d}\mu(\zeta)) = \left( \R, \emph{Bor}(\R), C_1 |z|^\alpha \, \emph{d}z \right), \quad \varepsilon(z) = C_2 |z|^\beta.
$$
Then for all $T > 0$,
$$
\delta(T) = D(T) = \frac{2 (\alpha + 1)}{\beta}.
$$
\end{proposition}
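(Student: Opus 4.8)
The plan is to reduce everything to the probabilistic formulas of Proposition~\ref{proposition:nbdegreesfreedom}, namely $\delta(T) = 2\,\mathbb{E}_{\nu_T}[\Bar{\varepsilon}/(k_B T)]$ and $D(T) = 2\,\text{Var}_{\nu_T}[\Bar{\varepsilon}/(k_B T)]$, and then to evaluate the moments of $\Bar{\varepsilon}$ under the Gibbs measure by a single Gamma-function computation. First I would identify the ground energy: since $\varepsilon(z) = C_2|z|^\beta \geq 0$ with equality at $z=0$ and $C_2,\beta > 0$, the essential infimum is $\varepsilon^0 = 0$, hence $\Bar{\varepsilon} = \varepsilon = C_2|z|^\beta$. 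I would also note that Assumptions~\ref{assumption:minimum} and~\ref{assumption:integrability} hold precisely because the condition $\alpha > -1$ (with $\beta > 0$) is what makes the relevant integrals converge.

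Next, writing $a = C_2/(k_B T)$ and using the evenness of the integrands to restrict to $z > 0$, I would establish the elementary formula $\int_0^\infty e^{-a z^\beta} z^{s}\,\text{d}z = \frac{1}{\beta}\,a^{-(s+1)/\beta}\,\Gamma\!\left(\frac{s+1}{\beta}\right)$ via the substitution $u = a z^\beta$, valid whenever $s+1 > 0$. Applying this with $s = \alpha$ (for $Z$), $s = \alpha + \beta$ (for $\mathbb{E}_{\nu_T}[\Bar{\varepsilon}]$) and $s = \alpha + 2\beta$ (for $\mathbb{E}_{\nu_T}[\Bar{\varepsilon}^2]$), the constants $C_1$ cancel in every normalized ratio, and the functional equation $\Gamma(x+1) = x\,\Gamma(x)$ with $x = \frac{\alpha+1}{\beta}$ does the rest. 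The decisive feature is that all dependence on $a$ collapses to clean powers of $k_B T$: since $C_2 a^{-1} = k_B T$, one finds $\mathbb{E}_{\nu_T}[\Bar{\varepsilon}] = k_B T\,\frac{\alpha+1}{\beta}$, so that $\delta(T) = \frac{2(\alpha+1)}{\beta}$ is manifestly independent of $T$.

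For $D(T)$ there are two equivalent routes. The direct one computes $\text{Var}_{\nu_T}[\Bar{\varepsilon}] = \mathbb{E}_{\nu_T}[\Bar{\varepsilon}^2] - (\mathbb{E}_{\nu_T}[\Bar{\varepsilon}])^2$: the ratio $\Gamma\!\left(\frac{\alpha+1}{\beta}+2\right)/\Gamma\!\left(\frac{\alpha+1}{\beta}\right) = \frac{\alpha+1}{\beta}\left(\frac{\alpha+1}{\beta}+1\right)$ yields a perfect cancellation leaving $\text{Var}_{\nu_T}[\Bar{\varepsilon}] = (k_B T)^2\,\frac{\alpha+1}{\beta}$, whence $D(T) = \frac{2(\alpha+1)}{\beta}$. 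Alternatively, and more cheaply, once $\delta$ is known to be the constant $\frac{2(\alpha+1)}{\beta}$, the definition $D(T) = \frac{\text{d}(T\delta(T))}{\text{d}T}$ immediately gives $D(T) = \delta(T)$, so the two quantities coincide without any further integration.

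I do not expect a genuine obstacle here: the weight $|z|^\alpha\,\text{d}z$ against $\varepsilon = C_2|z|^\beta$ is scale invariant of Gamma/Weibull type, so temperature enters only through the scaling parameter $a$ and factors out of every normalized moment. The single point requiring a little care is bookkeeping the convergence of the three Gamma integrals, i.e.\ checking $\frac{\alpha+1}{\beta} > 0$ and its shifts by $1$ and $2$ remain positive; this is guaranteed by the hypotheses $\alpha > -1$ and $\beta > 0$, which are exactly the ones encoded in Assumption~\ref{assumption:integrability}.
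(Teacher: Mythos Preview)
Your proposal is correct and follows essentially the same route as the paper: both reduce to the substitution $u = a z^\beta$ (the paper writes it as $x = C_2|z|^\beta$ followed by $y = x/(k_B T)$), apply the Gamma functional equation $\Gamma(x+1) = x\,\Gamma(x)$ with $x = (\alpha+1)/\beta$ to the resulting ratio, and then infer $D(T) = \delta(T)$ from the constancy of $\delta$. Your framing through Proposition~\ref{proposition:nbdegreesfreedom} and the explicit naming of the Gamma function are cosmetic differences only; the additional direct variance computation you offer is a nice check but not needed.
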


\begin{proof}
Set $T > 0$ and $\alpha > - 1$. Performing the change of variables $x = C_2 |z|^{\beta}$, and then $y = x/(k_B T)$, we get
\begin{align*}
\int_{\R} C_2 |z|^{\beta} \, \text{exp} \left( -\frac{C_2 |z|^{\beta}}{k_B T} \right) \, C_1 |z|^{\alpha}\, \text{d}z &= \frac{2 C_1 C_2^{\frac{\alpha - 1}{\beta}}}{\beta} \int_0^{\infty} x \, \text{exp} \left( -\frac{x}{k_B T} \right) \, x^{\frac{\alpha}{\beta}} \,  x^{\frac{1- \beta}{\beta}} \text{d}x \\
&= \frac{2 C_1 C_2^{\frac{\alpha - 1}{\beta}}}{\beta} \int_0^{\infty} x^{\frac{\alpha + 1}{\beta} } \, \text{exp} \left( -\frac{x}{k_B T} \right) \,  \text{d}x \\
 &= \frac{2 C_1 C_2^{\frac{\alpha - 1}{\beta}}}{\beta} (k_B T)^{\frac{\alpha  + 1}{\beta} + 1} \int_0^{\infty} y^{\frac{\alpha + 1}{\beta}} \, e^{-y} \,  \text{d}y \\
 &= \frac{2 C_1 C_2^{\frac{\alpha - 1}{\beta}}}{\beta} (k_B T)^{\frac{\alpha  + 1}{\beta} + 1} \left(\frac{\alpha + 1}{\beta} \right) \int_0^{\infty} y^{\frac{\alpha + 1}{\beta} - 1} \, e^{-y} \,  \text{d}y.
\end{align*}
Analogously, performing the same changes of variables
\begin{align*}
\int_{\R} \text{exp} \left( -\frac{C_2 |z|^{\beta}}{k_B T} \right) \, C_1 |z|^{\alpha}\, \text{d}z &= \frac{2 C_1 C_2^{\frac{\alpha - 1}{\beta}}}{\beta} \int_0^{\infty}  \text{exp} \left( -\frac{x}{k_B T} \right) \, x^{\frac{\alpha}{\beta}} \, x^{\frac{1- \beta}{\beta}} \text{d}x \\
&= \frac{2 C_1 C_2^{\frac{\alpha - 1}{\beta}}}{\beta} \int_0^{\infty} x^{\frac{\alpha + 1}{\beta} -1} \, \text{exp} \left( -\frac{x}{k_B T} \right) \,  \text{d}x \\
&= \frac{2 C_1 C_2^{\frac{\alpha - 1}{\beta}}}{\beta} (k_B T)^{\frac{\alpha  + 1}{\beta}} \int_0^{\infty} y^{\frac{\alpha + 1}{\beta}-1} \, e^{-y} \,  \text{d}y.
\end{align*}
We deduce that
$$
\delta(T) = \frac{2}{k_B T} \, \frac{\int_{\R} C_2 |z|^{\beta} \, \text{exp} \left( -\frac{C_2 |z|^{\beta}}{k_B T} \right) \, C_1 |z|^{\alpha}\, \text{d}z}{\int_{\R} \text{exp} \left( -\frac{C_2 |z|^{\beta}}{k_B T} \right) \, C_1 |z|^{\alpha}\, \text{d}z} = \frac{2(\alpha + 1)}{\beta}.
$$
Since $\delta$ does not depend on $T$, we have $\displaystyle D(T) = \delta(T) = \frac{2(\alpha + 1)}{\beta}$.
\QED
\end{proof}

\smallskip

\begin{corollary} \label{corollary:equipartition} 
We consider $a_1,\dots,a_d > 0$ and
$$
(\E, \A, \emph{d}\mu(\zeta)) = \left( \R^d, \emph{Bor}(\R^d), \emph{d}z \right), \qquad \varepsilon(z) = \sum_{l = 1}^d a_l \, z_l^2.
$$
Then for all $T > 0$,
$$
\delta(T) = D(T) = d.
$$
\end{corollary}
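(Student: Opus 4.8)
The plan is to read this corollary as a direct combination of the single-variable computation in Proposition \ref{proposition:alphabetadelta} with the tensorization provided by the Equipartition Theorem (Theorem \ref{theorem:equipartition}). First I would factorize the phase space: write $(\E,\A,\mu) = (\R,\text{Bor}(\R),\text{d}z_1) \otimes \dots \otimes (\R,\text{Bor}(\R),\text{d}z_d)$, using the standard measure-theoretic identifications $\text{Bor}(\R^d) = \text{Bor}(\R)^{\otimes d}$ and $\text{d}z = \text{d}z_1 \otimes \dots \otimes \text{d}z_d$ (Lebesgue measure on $\R^d$ is the product of the one-dimensional ones). Correspondingly, the energy splits as a sum, $\varepsilon(z) = \sum_{l=1}^d \varepsilon_l(z_l)$ with $\varepsilon_l(z_l) = a_l z_l^2$, which is exactly the structure required to invoke Theorem \ref{theorem:equipartition}.

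Next I would verify that each factor $(\mu_l,\varepsilon_l)$ satisfies the two standing hypotheses. Assumption \ref{assumption:minimum} holds because $a_l>0$ forces $\varepsilon_l \geq 0$ with $\text{inf ess}_{\mu_l}\{\varepsilon_l\} = 0 \in \R$; Assumption \ref{assumption:integrability} holds because the associated partition function is the Gaussian integral $Z_l(\beta) = \int_{\R} e^{-\beta a_l z_l^2}\,\text{d}z_l = \sqrt{\pi/(\beta a_l)} < \infty$ for every $\beta>0$. This also confirms that each factor fits the framework of Proposition \ref{proposition:alphabetadelta} with the parameter choice $C_1 = 1$, $\alpha = 0$, $C_2 = a_l$ and $\beta = 2$ (so that indeed $\alpha > -1$ and $\beta > 0$).

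Then I would simply apply Proposition \ref{proposition:alphabetadelta} to each factor to get, for every $T>0$,
$$
\delta_l(T) = D_l(T) = \frac{2(\alpha+1)}{\beta} = \frac{2(0+1)}{2} = 1,
$$
independently of $l$ and of the constant $a_l$. Finally, the Equipartition Theorem yields
$$
\delta(T) = \sum_{l=1}^d \delta_l(T) = d, \qquad D(T) = \sum_{l=1}^d D_l(T) = d,
$$
which is the claim. The argument has essentially no obstacle of substance: the only point deserving care is the identification of the Borel/Lebesgue structure on $\R^d$ with the product structure on $\R^{\otimes d}$, so that Theorem \ref{theorem:equipartition} genuinely applies; once that is granted, everything reduces to the two results already proved.
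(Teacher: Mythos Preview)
Your proof is correct and follows exactly the same approach as the paper: factorize $(\R^d,\text{Bor}(\R^d),\text{d}z)$ into $d$ copies of $(\R,\text{Bor}(\R),\text{d}z)$ with $\varepsilon_l(z)=a_l z^2$, apply Proposition~\ref{proposition:alphabetadelta} with $\alpha=0$, $\beta=2$ to each factor, and then sum via Theorem~\ref{theorem:equipartition}. The paper's proof is just a one-line citation of these two results, whereas you spell out the verification of Assumptions~\ref{assumption:minimum} and~\ref{assumption:integrability} for each factor, but the logical content is identical.
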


\begin{proof}
This Corollary is a consequence of Proposition \ref{proposition:alphabetadelta} with $\alpha=0$, $\beta = 2$, and Proposition \ref{theorem:equipartition} with $(\E_l,\A_l,\text{d}\mu_l) = \left( \R, \text{Bor}(\R), \text{d}z \right)$  and $\varepsilon_l(z) = a_l \, z^2$ for $1\leq l \leq d$.
\QED
\end{proof}

\section{Models within the framework} \label{section:application}

\subsection{Existing models} \label{subsection:existingmodels}
In this subsection, we show that our framework encapsulates various existing models. The parameters to choose in order to build the model are $(\E,\A,\mu)$ and $\varepsilon$.

\subsubsection*{The monoatomic gas}
The model for the monoatomic gas is the classical Boltzmann model, which considers elastic collisions only. We refer the reader to \cite{bouchut2000kinetic}. To recover this model in our framework, simply set $\E = \{0\}$, $\A = \mathcal{P}(\E)$, $\mu(\{0\}) = 1$ and $\varepsilon(0) = \varepsilon^{0} \in \R$. Then the state of a molecule is described by $(v,\zeta) \in \R^3 \times \{ 0 \}$, thus simply by $v \in \R^3$. The conservation laws become
\begin{equation*}
    \begin{cases}
   \hspace{68pt}  m v +  m v_* =   m v' +  m v'_* \\
    \displaystyle  \frac{m}{2} |v|^2 + \varepsilon^{0} + \frac{m}{2} |v_*|^2 + \varepsilon^{0} = \frac{m}{2} |v'|^2 + \varepsilon^{0}  + \frac{m}{2} |v'_*|^2 + \varepsilon^{0},
    \end{cases}
\end{equation*}
which simplify into the conservation laws of an elastic collision
\begin{equation*}
    \begin{cases}
   \hspace{20.5pt} v + v_* =  v' + v'_* \\
    \displaystyle |v|^2  + |v_*|^2 = |v'|^2 +  |v'_*|^2.
    \end{cases}
\end{equation*}
From there, the classical post-collision velocities
\begin{equation*}
\begin{cases}
v' &= \displaystyle \frac{v + v_*}{2} + \frac{|v-v_*|}{2} \, T_{\omega}\left[ \frac{v - v_*}{|v-v_*|} \right] \\
v'_* &= \displaystyle \frac{v + v_*}{2} - \frac{|v-v_*|}{2} \,T_{\omega}\left[ \frac{v - v_*}{|v-v_*|} \right]
\end{cases}
\end{equation*}
are recovered. The number of internal degrees of freedom and heat capacity at constant volume associated with this model are, as expected,
$$
\delta = 0 \quad \text{and} \quad c_V = \frac{3}{2},
$$
reproducing that fact that an atom has only the three translational degrees of freedom.

\subsubsection*{The weighted model with one continuous variable}
This model was originally proposed by Borgnakke et al. in \cite{borgnakke1975statistical} and completed with the introduction of a weight $\varphi$ of integration by Desvillettes et al. in \cite{desvillettes1997modele,desvillettes2005kinetic} in order to accurately describe polyatomic gases with the introduction of a single parameter $I$ related to the internal energy of the molecule. The state of a molecule is described by $(v,I) \in \R^3 \times \R_+$, and the associated total energy is $\displaystyle \frac{m}{2} |v|^2 + I$.

\smallskip

\noindent To recover this model in our framework, set $\E = \R_+$, $\A$ the set of Borelians of $\R_+$, $\text{d}\mu(I) = \varphi(I) \text{d}I$, and for $I \in \R_+$, $\varepsilon(I) = I$. The conservation laws of a collision write, when the collision is possible, as

\begin{equation*}
    \begin{cases}
   \hspace{71.5pt}  m v +  m v_* =   m v' +  m v'_* \\
   \displaystyle \frac{m}{2} |v|^2 + I + \frac{m}{2} |v_*|^2 + I_* = \frac{m}{2} |v'|^2 + I'  + \frac{m}{2} |v'_*|^2 + I'_*.
    \end{cases}
\end{equation*}
Since $I$ is a continuous parameter, in \cite{desvillettes2005kinetic}, Desvillettes et al. introduce $\displaystyle e = \frac{m}{4} |v-v_*|^2 + I + I_*$, $\displaystyle r = \frac{I'}{I' + I'_*}$ and $\displaystyle R = \frac{\frac{m}{4} |v'-v'_*|^2}{e}$. Using these parameters, the post-collision velocities write in their formulation
\begin{equation*}
\begin{cases}
v' &= \displaystyle  \frac{v + v_*}{2} + \sqrt{Re} \, T_{\omega}\left[ \frac{v - v_*}{|v-v_*|} \right] \\
v'_* &= \displaystyle \frac{v + v_*}{2} - \sqrt{Re} \,T_{\omega}\left[ \frac{v - v_*}{|v-v_*|} \right]
\end{cases}
\end{equation*}
Our formula for post-collision velocities is equivalent and writes, for allowed collisions,
\begin{equation*}
\begin{cases}
v' &= \displaystyle \frac{v + v_*}{2} + \sqrt{\frac14 |v-v_*|^2 + \frac{1}{m}(I + I_* - I' - I'_*)} \, T_{\omega}\left[ \frac{v - v_*}{|v-v_*|} \right] \\
v'_* &= \displaystyle \frac{v + v_*}{2} - \sqrt{\frac14 |v-v_*|^2 + \frac{1}{m}(I + I_* - I' - I'_*)} \,T_{\omega}\left[ \frac{v - v_*}{|v-v_*|} \right]
\end{cases}
\end{equation*}
The main technical difference between the two approaches is that the authors in \cite{desvillettes2005kinetic} fix all the pre--collision parameters (velocities, internal energies, $r$ and $R$) and study then the transformation
$$
\Tilde{S}_{\omega} : (v,v_*,I,I_*,r,R) \mapsto (v',v'_*,I',I'_*,r',R')
$$
defined on $\R^3 \times \R^3 \times \R_+ \times \R_+ \times [0,1] \times [0,1]$, while, in our framework, we fix pre--collision velocities and all pre-- and post--interaction internal energies and we study the transformation
$$
S_{\omega}[I,I_*,I',I'_*] : (v,v_*) \mapsto (v',v'_*)
$$
defined on $E[I,I_*,I',I'_*]$.

We are able to prove that the two methods are equivalent. Indeed, we denote by $\Phi$ the bijection
$$
\Phi : \begin{cases}
\hspace{63pt} \Tilde{\Omega} \to \Omega \\
(v,v_*,I,I_*,r,R) \mapsto (v,v_*,I,I_*,I',I'_*),
\end{cases}
$$
where $\Tilde{\Omega} = (\R^3)^2 \times (\R_+)^2  \times [0,1]^2 $ and $\Omega = \left\{ (v,v_*,I,I_*,I',I'_*) \in (\R^3)^2 \times (\R_+)^4, \; (v,v_*) \in E[I,I_*,I',I'_*] \right\}$. Its Jacobian can be computed and is equal to
$$
J[\Phi](v,v_*,I,I_*,r,R) = (1 - R) e^2,
$$
where we recall that $\displaystyle e = \frac{m}{4} |v-v_*|^2 + I + I_* = \frac{m}{4} |v'-v'_*|^2 + I' + I'_* = e'$. Then
\begin{align*}
S_{\omega}[I,I_*,I',I'_*] (v,v_*) &= (v',v'_*)\\
\Phi \circ \Tilde{S}_{\omega} \circ \Phi^{-1}(v,v_*,I,I_*,I',I'_*) &= (v',v'_*,I',I'_*,I,I_*),
\end{align*}
and we remark that (see Lemma 1. in \cite{desvillettes2005kinetic} and equation \eqref{eq:jacobian})
\begin{align*}
J\left[ \Phi \circ \Tilde{S}_{\omega} \circ \Phi^{-1} \right](v,v_*,I,I_*,I',I'_*) &= (1-R')(e')^2\, \frac{(1-R) |v'-v'_*|}{(1-R') |v-v_*|} \, \frac{1}{(1-R)e^2} \\
&= \frac{|v'-v'_*|}{|v-v_*|} = J\left[S_{\omega}[I,I_*,I',I'_*] \right](v,v_*).
\end{align*}
The authors in \cite{desvillettes2005kinetic} consider a kernel $B :\Tilde{\Omega} \times \mathbb{S}^2 \to \R_+$ with the micro-reversibility conditions
\begin{align*}
B(v,v_*,I,I_*,R,r,\omega) &= B(v_*,v,I_*,I,R,1-r,\omega)\\
B(v,v_*,I,I_*,R,r,\omega) &= B(v',v'_*,I',I'_*,R',r',\omega).
\end{align*}
We define on $\Omega \times \mathbb{S}^2$
$$
b(v,v_*,I,I_*,I',I'_*,\omega) = \frac{1}{e^2} |v-v_*|^{-1} \, \frac{B(\cdot,\omega) \circ \Phi^{-1}(v,v_*,I,I_*,I',I'_*)}{\varphi(I) \varphi(I_*) \varphi(I') \varphi(I'_*)},
$$
and set $b$ to zero elsewhere. Then, since $e=e'$,
\begin{align*}
b(v,v_*,I,I_*,I',I'_*,\omega) &= b(v_*,v,I_*,I,I'_*,I',\omega)\\
|v-v_*| \, b(v,v_*,I,I_*,I',I'_*,\omega) &= |v'-v'_*| \, b(v',v'_*,I',I'_*,I,I_*,\omega),
\end{align*}
which are the symmetry and micro-reversibility conditions in our framework. It follows that the difference of approach is purely technical, and does not lead to theoretical discrepancies. The same distribution $f$ and macroscopic quantities are recovered. Indeed, in this case, with $\zeta = I$,
\begin{align*}
    &\B(f,f)(v,\zeta) = \iiint_{\E^3}  \int_{\R^3}  \int_{\mathbb{S}^2} \, \Big(f(v',\zeta')f(v'_*,\zeta'_*) - f(v,\zeta)f(v_*,\zeta_*) \Big) \; b(\cdot)  \; \text{d}\omega  \, \text{d}v_* \,  \text{d}\mu^{\otimes 3}(\zeta_*,\zeta',\zeta'_*) \\
    &= \iiint_{(\R_+)^3} \int_{\R^3} \int_{\mathbb{S}^2}  \, \Big(f(v',I')f(v'_*,I'_*) - f(v,I)f(v_*,I_*) \Big) \; b(\cdot)  \;\text{d}\omega  \,  \text{d}v_* \,  \varphi(I_*) \varphi(I') \varphi(I'_*) \, \text{d}I_* \, \text{d}I' \, \text{d}I'_*\\
    &= \int_{\mathbb{S}^2}  \int_{\Omega[v,I]} \, \Big(f(v',I')f(v'_*,I'_*) - f(v,I)f(v_*,I_*) \Big) \; \frac{1}{e^2}|v-v_*|^{-1} \, B(\cdot,\omega) \circ \Phi^{-1}(\cdot) \, \frac{1}{\varphi(I)}  \;   \text{d}v_* \, \text{d}I_* \, \text{d}I' \, \text{d}I'_* \, \text{d}\omega\\
    &=  \int_{\mathbb{S}^2} \int_{\tilde{\Omega}[v,I]}  \, \Big(f(v',I')f(v'_*,I'_*) - f(v,I)f(v_*,I_*) \Big) \; |v-v_*|^{-1}  \, B(\cdot,\omega) \frac{J[\Phi](\cdot)}{e^2} \, \frac{1}{\varphi(I)}  \;  \text{d}v_*  \, \text{d}I_* \, \text{d}R \, \text{d}r \, \text{d}\omega  \, \\
    &= \iint_{[0,1]^2} \int_{\R_+} \int_{\mathbb{S}^2}  \, \Big(f(v',I')f(v'_*,I'_*) - f(v,I)f(v_*,I_*) \Big) \; |v-v_*|^{-1} \, (1-R) \, B(\cdot,\omega)  \, \frac{1}{\varphi(I)}  \; \text{d}\omega  \, \text{d}v_*  \, \text{d}I_* \, \text{d}R \, \text{d}r\,,
\end{align*}
where we recall that $J[\Phi](v,v_*,I,I_*,r,R) = (1-R) e^2$ is the Jacobian of $\Phi$. We denoted by $\Omega[v,I]$ the set of $(v_*,I_*,I',I'_*)$ such that $(v,v_*,I,I_*,I',I'_*) \in \Omega$ and the same for $\Tilde{\Omega}$. We recover the same macroscopic quantities as well, for example the density
$$
\rho(t,x) = \int_{\E} \int_{\R^3}  m f(t,x,v,\zeta) \, \text{d}v \, \text{d}\mu(\zeta) = \int_{\R_+} \int_{\R^3}  m f(t,x,v,I) \, \text{d}v \, \varphi(I) \, \text{d}I.
$$
\begin{remark}
The approach in \cite{borgnakke1975statistical,desvillettes2005kinetic} consists in distributing the energy of the incoming molecules simultaneously to the post-collision relative kinetic energy and to the internal states, whereas, in our framework, we fix the pre and post-collision internal states and consider afterwards the set of pre-collision velocity pairs for which this collision is possible.
\end{remark}

\noindent This model is generally used with $\varphi(I) = I^{\alpha}$, with $\alpha > -1$. From Proposition \ref{proposition:alphabetadelta}, we recover that in this case ($\beta=1$ in the proposition), the number of internal degrees of freedom and heat capacity at constant volume are
$$
\delta = 2 (\alpha + 1) \quad \text{and} \quad c_V = \frac{5}{2} + \alpha.
$$
\subsubsection*{The model with discrete energy levels}
In order to accurately describe the vibration inside molecules, a model with a discrete-energy levels description is proposed by Groppi and Spiga in \cite{groppi1999kinetic}. The authors consider a finite set of internal energy levels $(\varepsilon_n)_{n \in \llbracket 0,N \rrbracket} \in \R^{N+1}$. The state of a molecule is then $(v,n) \in \R^3 \times \llbracket 0,N \rrbracket$, and the associated total energy is $\displaystyle \frac{m}{2} |v|^2 + \varepsilon_n$. The authors define $\delta_{ij}^{kl} = \frac{4}{m}(\varepsilon_k + \varepsilon_l - \varepsilon_i - \varepsilon_j)$, $g = |v-v_*|$ and $g' = |v'-v'_*|$. A collision is allowed when $g^2 - \delta_{ij}^{kl} \geq 0$, with then $g' = \sqrt{g^2 - \delta_{ij}^{kl}}$.

\smallskip

\noindent To recover this model in our framework, we set $\E = \llbracket 0,N \rrbracket$, $\A = \mathcal{P}(\llbracket 0,N \rrbracket)$, $\mu$ the counting measure on $\llbracket 0,N \rrbracket$ and for $n \in \llbracket 0,N \rrbracket$, $\varepsilon(n) = \varepsilon_n$. Indeed, note that
$$
g^2 - \delta_{ij}^{kl} \geq 0 \iff \frac14 |v-v_*|^2 + \frac{1}{m} (\varepsilon_i +  \varepsilon_j -  \varepsilon_k -  \varepsilon_l) \geq 0 \iff \Delta (v,v_*,i,j,k,l) \geq 0,
$$
where $\Delta$ is defined in equation \eqref{eq:def:delta}. Moreover, the authors give (formula 2.6 in \cite{groppi1999kinetic})
$$
\text{d}v' \, \text{d}v'_* \, \text{d}\Omega' = \mathbf{1}_{g^2 \geq \delta_{ij}^{kl}} \, \frac{g'}{g} \, \text{d}v \, \text{d}v_* \, \text{d}\Omega,
$$
which corresponds in our framework to the Jacobian of the transformation $(v,v_*) \mapsto (v',v'_*)$, see formula \eqref{eq:jacobian}. The authors in \cite{groppi1999kinetic} give the following post-collision velocities
\begin{align*}
    v' &= \frac{v+v_*}{2} + \frac12 \sqrt{g^2 - \delta_{ij}^{kl}} \, \Omega \\
    v'_* &=  \frac{v+v_*}{2} - \frac12 \sqrt{g^2 - \delta_{ij}^{kl}} \, \Omega,
\end{align*}
where $\Omega \in \mathbb{S}^2$. Now remark that $\frac12 \sqrt{g^2 - \delta_{ij}^{kl}} = \sqrt{\Delta (v,v_*,i,j,k,l)}$ and there exists (assuming $v \neq v_*$) $\omega \in \mathbb{S}^2$ such that $\Omega = T_{\omega}\left[ \frac{v-v_*}{|v-v_*|} \right]$. We then have, like in our framework,
\begin{align*}
    v' &= \frac{v+v_*}{2} + \sqrt{\Delta(v,v_*,i,j,k,l)} \,\, T_{\omega}\left[ \frac{v-v_*}{|v-v_*|} \right] \\
    v'_* &=  \frac{v+v_*}{2} - \sqrt{\Delta(v,v_*,i,j,k,l)}\, \, T_{\omega}\left[ \frac{v-v_*}{|v-v_*|} \right].
\end{align*}
The authors in \cite{groppi1999kinetic} define the cross-section $\sigma_{ij}^{kl}$, with the symmetry and micro-reversibility conditions
\begin{align*}
\sigma_{ij}^{kl}(v,v_*,\omega) &= \sigma_{ji}^{lk}(v_*,v,\omega) \\
|v-v_*|^2 \sigma_{ij}^{kl}(v,v_*,\omega) &= |v'-v'_*|^2 \sigma^{kl}_{ij}(v',v'_*,\omega).
\end{align*}
By defining the collision kernel
$$
b(v,v_*,i,j,k,l,\omega) =  \mathbf{1}_{g^2 \geq \delta_{ij}^{kl}} \, |v-v_*| \sigma_{ij}^{kl}(v,v_*,\omega),
$$
$b$ verifies the positivity assumption (if $\sigma_{ij}^{kl}$ is properly defined) and from the properties assumed for $\sigma_{ij}^{kl}$ we recover
\begin{align*}
b(v,v_*,i,j,k,l,\omega) &= b(v_*,v,j,i,l,k,\omega)\\
|v-v_*| b(v,v_*,i,j,k,l,\omega) &= |v'-v'_*| b(v',v'_*,k,l,i,j,\omega).
\end{align*}
Finally, with $(\zeta,\zeta_*,\zeta',\zeta'_*) = (i,j,k,l)$,
\begin{align*}
\B(f,f)(v,\zeta) &= \iiint_{\E^3}  \int_{\R^3} \int_{\mathbb{S}^2}  \, \Big(f(v',\zeta')f(v'_*,\zeta'_*) - f(v,\zeta)f(v_*,\zeta_*) \Big) \; b(\cdot)  \; \text{d}\omega  \, \text{d}v_* \, \text{d}\mu^{\otimes 3}(\zeta_*,\zeta',\zeta'_*) \\
&= \sum_{j,k,l \in \llbracket 0,N \rrbracket}  \int_{\R^3} \int_{\mathbb{S}^2}  \, \Big(f(v',k)f(v'_*,l) - f(v,i)f(v_*,j) \Big) \; b(v,v_*,i,j,k,l,\omega)  \; \text{d}\omega  \, \text{d}v_*  \\
&= \sum_{j,k,l \in \llbracket 0,N \rrbracket}  \int_{\R^3} \int_{\mathbb{S}^2} \mathbf{1}_{g^2 \geq \delta_{ij}^{kl}} \, |v-v_*| \, \sigma_{ij}^{kl}(v,v_*,\omega) \, \Big(f_k(v')f_l(v'_*) - f_i(v)f_j(v_*) \Big)  \; \text{d}\omega  \, \text{d}v_*,
\end{align*}
with the notation $f_i(v) = f(v,i)$. The same macroscopic quantities are recovered as well, for instance the density
$$
\rho(t,x) = \int_{\E} \int_{\R^3}  m f(t,x,v,\zeta) \, \text{d}v \, \text{d}\mu(\zeta) = \sum_{i = 0}^N \int_{\R^3}  m  f_i(t,x,v) \, \text{d}v.
$$

\subsection{Combination of the continuous and discrete models}

The continuous and discrete models can be combined in our framework. Let us consider a finite set of internal energy levels $(\epsilon_n)_{n \in \llbracket 0,N \rrbracket} \in \R^{N+1}$ and a weight function $\varphi$ on $\R_+$. We can build the following model
\begin{equation}
(\E,\A,\text{d}\mu(\zeta)) = \left( \R_+ \times \llbracket 0, N \rrbracket, \, \text{Bor}(\R_+) \otimes \mathcal{P}(\llbracket 0, N \rrbracket), \, \varphi(I) \, \text{d}I \times 1 \right), \quad \quad \varepsilon(I,n) = I + \epsilon_n,
\end{equation}
where $1$ stands for the counting measure. The Boltzmann operator writes
\begin{align*}
&\B(f,f)(v,I,i) =  \\&\sum_{j,k,l \in \llbracket 0,N \rrbracket} \iiint_{(\R_+)^3} \int_{\R^3} \int_{\mathbb{S}^2}  \, \Big(f_k(v',I')f_l(v'_*,I'_*) - f_i(v,I)f_j(v_*,I_*) \Big) \; b(\cdot)  \; \text{d}\omega  \, \text{d}v_* \, \varphi(I_*)\, \text{d} I_* \, \varphi(I')\, \text{d} I' \, \varphi(I'_*)\, \text{d} I'_*\,.
\end{align*}

\noindent Let us set $\epsilon^0 = \underset{0 \leq n \leq N}{\text{min}} \{\epsilon_n \}$. From Theorem \ref{theorem:equipartition}, the Maxwellian writes
$$
\mathcal{M}[\rho,u,T](v,I,i) = \rho \, m^{1/2}  \, (2 \pi k_B T)^{-3/2} \, Z_c \left( \frac{1}{k_B T} \right)^{-1} \, Z_d \left( \frac{1}{k_B T} \right)^{-1} \, \text{exp} \left( - \frac{ m|v - u|^2}{2 k_B T} - \frac{I}{k_B T} - \frac{\epsilon_i - \epsilon^0}{k_B T} \right),
$$
where $Z_c$ and $Z_d$ are the partition functions respectively associated to the continuous and discrete parts, for all $\beta > 0$
$$
Z_c (\beta) = \int_{\R_+} e^{-\beta  I} \, \varphi(I) \, \text{d}I, \quad \quad Z_d(\beta) = \sum_{n = 0}^N e^{-\beta  (\epsilon_n - \epsilon^0)}.
$$
Finally, again from Theorem \ref{theorem:equipartition}, the number of internal degrees of freedom writes
$$
\delta(T) = \delta_c(T) + \delta_d (T) = \frac{2}{k_B T} \, \frac{ \int_{\R_+} I \, \text{exp} \left(-\frac{I}{k_B T}\right) \, \varphi(I) \, \text{d}I}{ \int_{\R_+} \text{exp} \left(-\frac{I}{k_B T}\right) \, \varphi(I) \, \text{d}I} + \frac{2}{k_B T} \, \frac{ \sum_{n = 0}^N  (\epsilon_n - \epsilon^0) \, \text{exp} \left(-\frac{ (\epsilon_n - \epsilon^0)}{k_B T}\right)}{ \sum_{n = 0}^N  \text{exp} \left(-\frac{ (\epsilon_n - \epsilon^0)}{k_B T}\right)}.
$$

\noindent
As seen previously, if $\varphi(I) = I^{\alpha}$ with $\alpha > -1$, then $\delta_c = 2 (\alpha + 1)$. This model with internal energy described by two different variables, a continuous and a discrete one, may be used to separate the vibrational energy of a molecule from the rotational part. As suggested in \cite{Herzberg}, the rotational energy may be modelled by means of a continuous variable, while for the vibrational energy a discrete approximation is more suitable. Note that the number of internal degrees of freedom turns out to be, as expected, the sum of the vibrational and the rotational ones.

\subsection{Building a model within the framework} \label{subsection:buildmodel}

In this subsection, we explain how to build a model for a polyatomic gas within our framework, with an example. Building an accurate model corresponds to giving an accurate description of the internal states of a given molecule. On this subject, the main field to rely on is Molecular Mechanics/Quantum Chemistry, see for example \cite{hehre2003guide}. There are various phenomena to be taken into account, typically rotation, vibration, electronic excitation and nuclei spin, that could also be correlated. For the sake of simplicity of the mathematical model, it could be enough to take only rotation and vibration into account, and assume them to be uncorrelated, but there are physical problems where this kind of description is too simplistic. The most appropriate model in a given situation typically depends on the considered regime and the desired degree of complexity and accuracy. This is the reason why we wish not to propose one model, but to give examples and insights on how to build realistic models inside our framework.

\subsubsection*{Quantum description of the internal structure}

To our knowledge, the currently best internal description of a molecule is given by quantum mechanics. Suppose to have a Hamiltonian operator $\hat{H}$. Let $\{\lambda_q\}_{q\in Q}$ be the set of its (real) eigenvalues, where $Q$ is a discrete set (e.g. $\N, \N^2, \dots$). We denote by $r_q$ the dimension of the eigenvector space associated to the eigenvalue $\lambda_q$. The values $\lambda_q$ are the possible measurable energy levels associated to the operator $\hat{H}$, and $r_q$ is the so--called degeneracy of the $q^{\text{th}}$ energy level. We then consider a subset $\Tilde{Q} \subset Q$, the set of energy levels of the \emph{bound states} (the physically admissible energy levels), and build the following model
$$
(\E,\A,\mu) = (\Tilde{Q},\mathcal{P}(\Tilde{Q}),(r_q)_q), \qquad \varepsilon(q) = \lambda_q\,,
$$
where the notation $\mu = (r_q)_q$ means that $\mu(\{ q\}) = r_q$. As we explained in subsection \ref{subsection:degeneracy}, degeneracy is indeed included in the model via the measure $\mu$.

\noindent
We remark that if the set of bound states $\Tilde{Q} = Q$ and \text{inf}$\{\lambda_q\}_{q\in Q} = 0$, then we recover the formula of statistical mechanics for the partition function (see \cite{huang1963statistical}, Chapter 9)
$$
Z (\beta) = \sum_{q \in Q} e^{-\beta \lambda_q} \, r_q = \text{Tr}(e^{- \beta \hat{H}}),
$$
where $\text{Tr}(e^{- \beta \hat{H}})$ is the trace of the operator $e^{- \beta \hat{H}}$.

\begin{remark}
 The framework we present in this paper, with the Boltzmann equation \ref{eq:boltzmann} is "at most" semi-classical, in the sense that the velocity part is described by classical mechanics. A full quantum mechanical Boltzmann equation was derived by Waldmann \cite{Wa57} and Snider~\cite{Sn60}. In our framework, we assume that the gas is isotropic (no polarization effects), and other quantum considerations would only appear in the internal structure $(\E,\A,\mu), \, \varepsilon$ of the molecule.
\end{remark}

\smallskip

\subsubsection*{Semi-classical description of the internal structure}
A possible approach is to consider only rotation and vibration, describing independently rotation with classical mechanics and vibration with quantum mechanics. The description of vibration is thus exactly the model of the previous paragraph, with a Hamiltonian describing only vibration,
$$
(\E_{vib},\A_{vib},\mu_{vib}) = (\Tilde{Q},\mathcal{P}(\Tilde{Q}),(r_q)_q), \qquad \varepsilon_{vib}(q) = \lambda_q.
$$
We assume the molecule to be a rigid rotor, that is, in the description of rotation we assume that no deformation is induced in the molecule. The rotation-related internal state we consider is thus the angular velocity of the molecule in a coordinate system attached to the molecule. In general, this angular velocity lives in $\R^3$, however when the molecule is linear, being symmetric by rotation around its own axis, the contribution of the angular momentum's coordinate along this axis can be assumed to be 0 (this approximation can be found for example in \cite{van1951coupling}, section 7). Thus, in the case of a linear molecule, the angular momentum, in a coordinate system attached to the molecule with one axis being the molecule's axis, would live in $\R^2$. Moreover, in the linear case, the moment of inertia is the same in both directions. This way of modelling leads to the following framework for the description of rotation
$$
(\E_{rot}, \A_{rot}, \text{d}\mu_{rot}(z)) = \left( \R^{d}, \text{Bor}(\R^{d}), \text{d}z \right), \qquad \varepsilon_{rot}(z) = \frac12 \sum_{i=1}^{d} \mathcal{I}_i \, z_i^2,
$$
where $\mathcal{I}_i$ is the moment of inertia along the $i$-th axis, $z$ is the angular velocity in a coordinate system attached to the molecule, $d = 2$ and $\mathcal{I}_1 = \mathcal{I}_2 = \mathcal{I}$ if the molecule is linear, $d=3$ if the molecule is non-linear. The model taking rotation and vibration into account then writes
$$
(\E, \A, \text{d} \mu(\zeta)) = \big(\R^{d} \times \Tilde{Q}, \, \text{Bor}(\R^{d}) \otimes \mathcal{P}(\Tilde{Q}), \, \text{d}z \, r_q \big), \quad \quad \varepsilon(z,q) = \frac12 \sum_{i=1}^{d} \mathcal{I}_i \, z_i^2 + \lambda_q,
$$
From Theorem \ref{theorem:equipartition}, for all $T > 0$,
$$
\text{d} \nu_T(z,q) = \text{d} \nu^{rot}_T(z) \, \text{d} \nu^{vib}_T (q) = \prod_{i=1}^d  \frac{\text{exp} \left(- \frac{\mathcal{I}_i \, z_i^2}{ 2k_B T} \right)}{\left(\frac{2 \pi k_B T}{ \mathcal{I}_i} \right)^{1/2} } \, \text{d}z\,\, \frac{\text{exp} \left(- \frac{ (\lambda_q - \lambda^0)}{k_B T} \right)}{\sum_{q' \in \Tilde{Q}}  \text{exp} \left(-\frac{ (\lambda_{q'} - \lambda^0)}{k_B T}\right) \, r_{q'} } \, r_q.
$$
Moreover, setting $\displaystyle \lambda^0 = \underset{q \in \Tilde{Q}}{\text{inf}}  \{ \lambda_q \}$ ($\in \R$ by assumption) and using Theorem \ref{theorem:equipartition} and Proposition \ref{proposition:alphabetadelta}, the number of internal degrees of freedom turns out to be
$$
\delta(T) = \delta_{rot}(T) + \delta_{vib}(T) = d + 2 \, \frac{ \sum_{q \in \Tilde{Q}} \frac{(\lambda_q - \lambda^0)}{k_B T}  \, \text{exp} \left(-\frac{ (\lambda_q - \lambda^0)}{k_B T}\right) \, r_q}{ \sum_{q \in \Tilde{Q}}  \text{exp} \left(-\frac{ (\lambda_q - \lambda^0)}{k_B T}\right) \, r_q}.
$$

%
%
%

\subsubsection*{Example: the  $\prescript{1}{}{\emph{H}}\prescript{19}{}{\emph{F}}$ gas}

As an example, we propose four models within our framework for the Hydrogen Fluoride ($\prescript{1}{}{\text{H}}\prescript{19}{}{\text{F}}$) gas. We focus only on rotation and vibration. We propose two semi-classical models and two quantum models. For the formulas of eigenvalues presented hereafter, we refer to the Chapter {\it Spectroscopy Constants of Diatomic Molecules} in \cite{crchandbook}.

\bigskip

\noindent \textsf{1. Harmonic semi-classical model.} We describe the internal states of $\prescript{1}{}{\text{H}}\prescript{19}{}{\text{F}}$ with a semi-classical approach. Since this molecule is diatomic, it is linear, so that $d = 2$ and $\mathcal{I}_1 = \mathcal{I}_2$ in the description of rotation, and has only one mode of vibration. In this simplified model, the Hamiltonian is assumed to include the harmonic potential (model of the quantum harmonic oscillator). This leads to the model
$$
(\E,\A,\text{d}\mu(\zeta)) = \left(\R^2 \times \N,\, \text{Bor}(\R^2) \otimes \mathcal{P}(\N),\, \text{d}z \times 1 \right), \qquad \varepsilon(z,n) =  \underset{\text{rotation}}{\underbrace{\frac12 \mathcal{I} \, |z|^2}} + \underset{\text{vibration}}{\underbrace{hc \, \nu_e \left( n + \frac12 \right)}},
$$
where $\mathcal{I}$ is the moment of inertia associated with $\prescript{1}{}{\text{H}}\prescript{19}{}{\text{F}}$, $h$ is the Planck constant (expressed in SI units), $c$ the speed of light (expressed in $\text{cm}.s^{-1}$) and $\nu_e$ the wavenumber (expressed in $\text{cm}^{-1}$). Defining $T_{vib} = h c \, \nu_e /k_B$, the number of internal degrees of freedom writes
$$
\delta_1(T) = \delta^{rot}_1(T) + \delta^{vib}_1(T) = 2 + 2 \, \frac{T_{vib} / T}{\text{exp}(T_{vib} / T) - 1}\,.
$$

\bigskip

\noindent \textsf{2. Anharmonic semi-classical model.} Instead of using the harmonic potential, the Morse potential is a useful approximation of the actual internuclear potential, since it allows anharmonicity and an explicit computation of the eigenvalues of the Hamiltonian. The family of eigenvalues is, up to a constant (see Morse \cite{morse1929diatomic}),
$$
\left( hc \, \nu_e \left( n + \frac12 \right) - hc \, \nu_e x_e \left( n + \frac12 \right)^2 \right)_{n \in \N}
$$
where $x_e$ represents anharmonicity. Not all eigenvalues correspond to bound states. We thus restrict the family to the set $\llbracket 0, N_{max} \rrbracket$, where $\displaystyle N_{max} = \left\lfloor \frac{1}{2 x_e} \right\rfloor - 1$. With this potential, the model becomes
\begin{align*}
(\E,\A,\text{d}\mu(\zeta)) &= \left(\R^2 \times \llbracket 0, N_{max} \rrbracket,\, \text{Bor}(\R^2) \otimes \mathcal{P}(\llbracket 0, N_{max} \rrbracket),\, \text{d}z \times 1 \right), \\
\varepsilon(z,n) &=  \frac12 \mathcal{I} \, |z|^2 +  hc \, \nu_e \left( n + \frac12 \right) -  \underset{\text{anharmonicity}}{\underbrace{hc \, \nu_e x_e \left( n + \frac12 \right)^2}}.
\end{align*}
The number of internal degrees of freedom writes

$$
\delta_2(T) = \delta^{rot}_2(T) + \delta^{vib}_2(T) = 2 + \delta^{vib}_2(T).
$$

\bigskip

\noindent \textsf{3. Simplified quantum model.} Here we use a quantum description for both rotation and vibration. To simplify, we describe independently rotation and vibration, using the rigid-rotor assumption for rotation and the harmonic potential for vibration. Since $\prescript{1}{}{\text{H}}\prescript{19}{}{\text{F}}$ is a diatomic molecule, there is only one mode of vibration. This leads to the model
$$
(\E,\A,\mu) = \left(\N \times \N,\, \mathcal{P}(\N) \otimes \mathcal{P}(\N),\, (2J + 1)_{(J,n)\in \N^2} \right), \qquad \varepsilon(J,n) =   \underset{\text{rotation}}{\underbrace{B J (J+1)}} + \underset{\text{vibration}}{\underbrace{hc \, \nu_e \left( n + \frac12 \right)}},
$$
where $B$ is the rotational constant associated with  $\prescript{1}{}{\text{H}}\prescript{19}{}{\text{F}}$. Defining $T_{vib} = h c \, \nu_e /k_B$, the number of internal degrees of freedom writes
$$
\delta_3(T) = \delta^{rot}_3(T) + \delta^{vib}_3(T) = \delta^{rot}_3(T) + 2 \, \frac{T_{vib} / T}{\text{exp}(T_{vib} / T) - 1}.
$$

\bigskip

\noindent \textsf{4. Improved quantum model.} For a better description, we can also take into account the correlation of rotation and vibration, considering non-rigid rotation, and using the Morse potential for vibration to allow for anharmonicity. This would lead to the model
\begin{align*}
(\E,\A,\mu) &= \left(\Tilde{Q},\, \mathcal{P}(\Tilde{Q}),\, (2J + 1)_{(J,n)\in \Tilde{Q}} \right), \\
\varepsilon(J,n) &= \underset{\text{rot-vib coupling}}{\underbrace{\left(B - \alpha \left( n + \frac12 \right) \right)}} J (J+1) - \underset{\text{centrifugal distortion}}{\underbrace{D[J (J+1)]^2}} + hc \, \nu_e \left( n + \frac12 \right) - \underset{\text{anharmonicity}}{\underbrace{hc \, \nu_e x_e \left( n + \frac12 \right)^2}},
\end{align*}
where $\Tilde{Q}$ is the set of bound states, defined by
$$
\Tilde{Q} = \left\{  (J,n) \in \N^2, \; \varepsilon(J,n) \geq \varepsilon(J-1,n) \; \text{ and } \; \varepsilon(J,n-1) \geq \varepsilon(J,n) \right\} \subset \llbracket 0,J_{max} \rrbracket \times \llbracket 0,N_{max} \rrbracket.
$$
We denote by $\delta_4$ the number of internal degrees of freedom in this improved case, which cannot be written as the sum of rotational and vibrational parts due to the coupling.

\medskip

\noindent
From \cite{crchandbook} we know that reasonable data for $\prescript{1}{}{\text{H}}\prescript{19}{}{\text{F}}$ are $\nu_e = 4138.39 \, \text{cm}^{-1}$, $\nu_e x_e = 89.94 \, \text{cm}^{-1}$, $B / h c = 20.95 \, \text{cm}^{-1}$, $\alpha / hc = 0.793 \, \text{cm}^{-1}$ and $D / hc = 0.00215 \, \text{cm}^{-1}$. The value of $\mathcal{I}$ does not matter in the computation of the number of internal degrees of freedom. Just for the sake of comparison of our proposed models, we plot on Fig. \ref{fig:deltas} the various numbers of internal degrees of freedom $\delta_1$, $\delta_2$, $\delta_3$ and $\delta_4$ corresponding to these data, as functions of the temperature (in log-scale), expressed in Kelvin (K). The temperature ranges from 10 to 10 000 K. We see that in this example vibration is negligible for $T \lesssim 1000K$, and becomes important around $2000\sim3000K$. The direct computation of $\delta$ from the choice of the model can be useful to quickly check the validity of an approximation in a given regime.

\begin{figure}[ht]
    \centering
    \includegraphics[width=.7\linewidth]{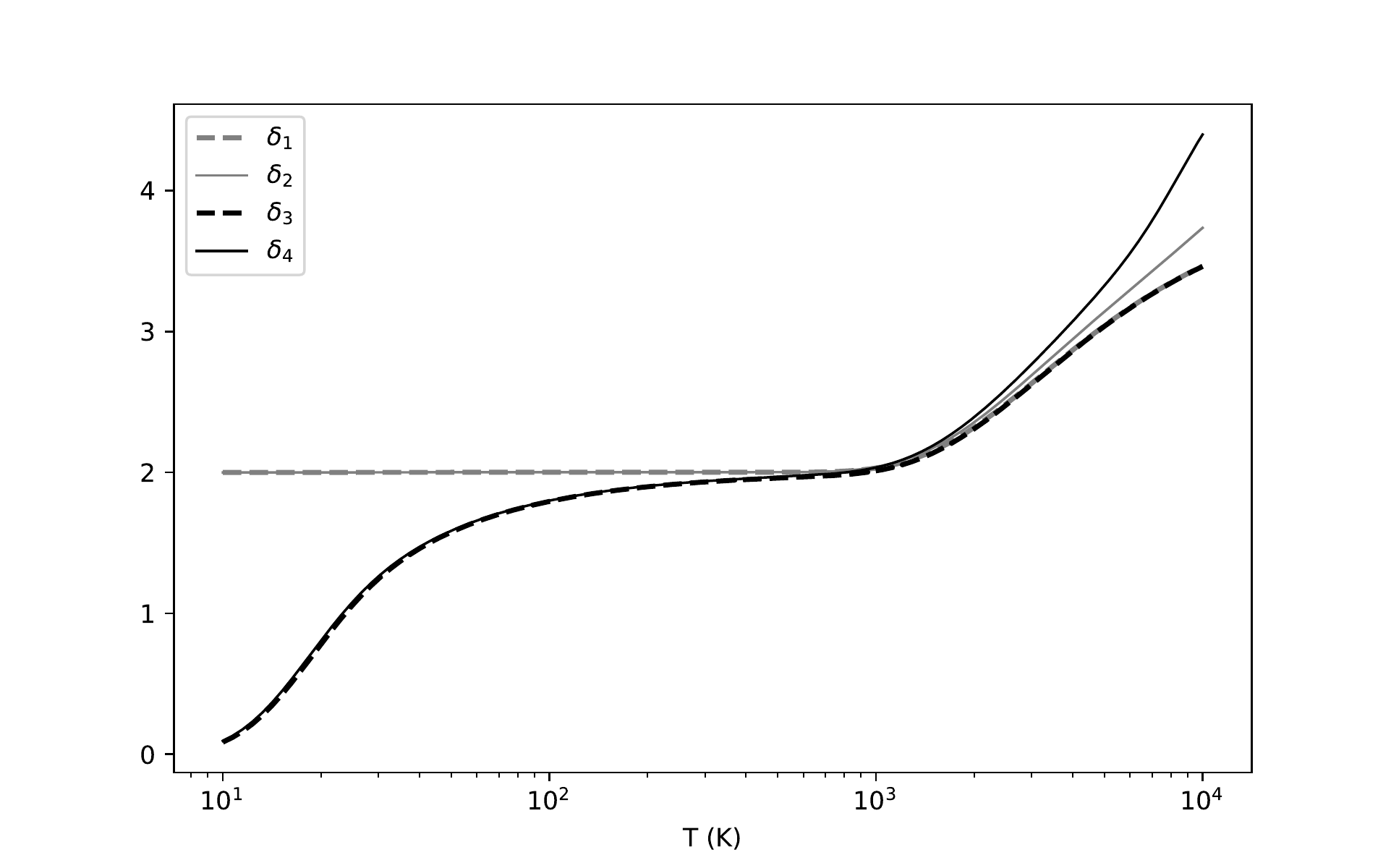}
    \caption{Number of internal degrees of freedom in the various models as functions of temperature, with log-scale $x$-axis.}
    \label{fig:deltas}
\end{figure}

\noindent
We plot on Fig. \ref{fig:deltashighT} the same numbers of internal degrees of freedom as on Fig. \ref{fig:deltas}, for temperature ranging from $10^3$ to $10^6 K$. The decrease of $\delta_2$ and $\delta_4$ after $T \sim 10^4K$ corresponds to the result of Corollary \ref{corollary:boundedenergy}. While the number of internal degrees of freedom is expected to increase with temperature, this plot is an illustration of possible limits of validity of the framework. Indeed, high-temperatures considerations such as bond-breaking are not taken into account in our setting.

\begin{figure}[ht]
    \centering
    \includegraphics[width=.7\linewidth]{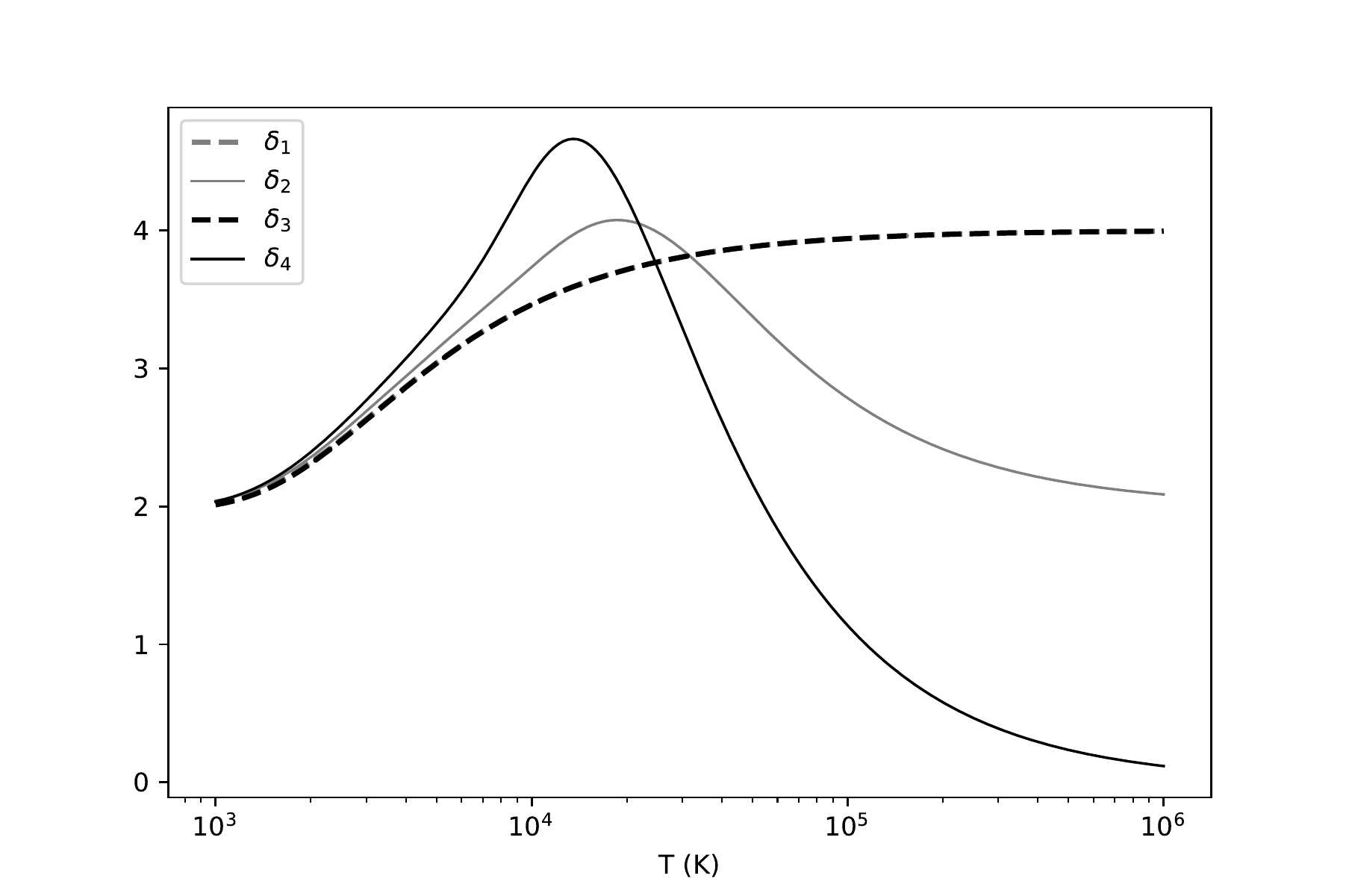}
    \caption{Number of internal degrees of freedom in the various models at high temperatures.}
    \label{fig:deltashighT}
\end{figure}

\section{Reduction to one real variable and comparison with Borgnakke-Larsen} \label{section8}
In the previous section, we saw that our general framework allows to build a whole range of models that can be of great complexity. This possibility can be a strength in regards of the precision it allows in the description of the internal structure. However, too much complexity can be a weakness in regards of numerical aspects. Notably, the strength of the model proposed by Borgnakke and Larsen \cite{borgnakke1975statistical} is that the internal structure is described by a single real parameter $I \in \R_+$, which is highly desirable for numerical simulations. In this section, we show that any model fitting our framework can, under a condition on the collision kernel, be reduced to a one-real-parameter model. This means that all the complexity can be concentrated in a suitable measure on $\R_+$. If this measure has a density, then it can be written as $\varphi(I) \, \text{d}I$ and we recover the model with one continuous variable proposed by Desvillettes et al. \cite{desvillettes1997modele,desvillettes2005kinetic}, which is suited for numerical applications. While this weight $\varphi$ in the original approach in \cite{desvillettes1997modele,desvillettes2005kinetic}  is a parameter chosen a posteriori in order to recover desired macroscopic properties (e.g. the number of internal degrees of freedom), in our case it is computed from the physical structure of the molecule, through the reduction process that we present in this section.

\begin{definition}
We define $\mu^{\Bar{\varepsilon}}$ as the image measure of $\mu$ by $\Bar{\varepsilon}$ on $\R_+$, for all $B \in \text{Bor}(\R_+)$,
$$
\mu^{\Bar{\varepsilon}} \left( B \right) := \mu \left( \Bar{\varepsilon}^{\, -1} (B) \right).
$$
\end{definition}

\begin{theorem} \label{theorem:reduction}
If the collision kernel $b$ depends on the internal states only through their associated energy (see Remark \ref{remark:bdependsonenergy} below), then the general model
$$
\big( \E,\A,\mu \big), \qquad \varepsilon
$$
can be reduced to the model with one non-negative variable
$$
\big(\R_+,\emph{Bor}(\R_+),\mu^{\Bar{\varepsilon}} \big), \qquad I \mapsto I + \varepsilon^0.
$$

\end{theorem}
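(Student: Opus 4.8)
The plan is to reduce everything to a single tool: the transfer theorem for image measures, which states that for any non-negative measurable $g : \R_+ \to \R$ one has $\int_\E g(\Bar{\varepsilon}(\zeta)) \, \text{d}\mu(\zeta) = \int_{\R_+} g(I) \, \text{d}\mu^{\Bar{\varepsilon}}(I)$. Since the collision rule, and hence all the geometric objects of the model, already depend on the internal states only through their energies, the whole reduction will follow from applying this identity to the $\E$-integrations appearing in the Boltzmann operator and in the macroscopic quantities.

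First I would verify that the reduced model fits the framework. Because $\text{inf ess}_\mu\{\Bar{\varepsilon}\} = 0$, the essential infimum of $I \mapsto I$ under the image measure $\mu^{\Bar{\varepsilon}}$ is $0$, so the essential infimum of $I \mapsto I + \varepsilon^0$ is $\varepsilon^0 \in \R$ (Assumption \ref{assumption:minimum}) and the grounded energy of the reduced model is exactly $I$. Applying the transfer theorem with $g(I) = e^{-\beta I}$ gives $\int_{\R_+} e^{-\beta I} \, \text{d}\mu^{\Bar{\varepsilon}}(I) = \int_\E e^{-\beta \Bar{\varepsilon}(\zeta)} \, \text{d}\mu(\zeta) = Z(\beta) < \infty$, so Assumption \ref{assumption:integrability} holds and the two models share the same partition function. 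Consequently $\delta$, $D$ and $c_V$ coincide, and the Gibbs measure of the reduced model is the image of $\nu_T$ under $\Bar{\varepsilon}$.

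The structural heart of the matter is that $\Delta$ in \eqref{eq:def:delta} involves $\zeta, \zeta_*, \zeta', \zeta'_*$ only through $\Bar{\varepsilon}(\zeta), \Bar{\varepsilon}(\zeta_*), \Bar{\varepsilon}(\zeta'), \Bar{\varepsilon}(\zeta'_*)$; hence the same is true of $E[\cdot]$, of $S_\omega[\cdot]$ and its Jacobian \eqref{eq:jacobian}, and of $\mathbb{S}^2[\cdot]$. Using the hypothesis (Remark \ref{remark:bdependsonenergy}) that $b$ also factors through the energies, I would define the reduced kernel $\hat{b}$ on $\R^3 \times \R^3 \times (\R_+)^4 \times \mathbb{S}^2$ as the corresponding function of $(v, v_*, I, I_*, I', I'_*, \omega)$ and check, by pure inheritance, that it satisfies the symmetry, micro-reversibility and positivity assumptions. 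The core computation is then the identity $\B(f,f)(v, \zeta) = \hat{\B}(\hat{f}, \hat{f})(v, \Bar{\varepsilon}(\zeta))$, valid for every $f$ of the form $f(v, \zeta) = \hat{f}(v, \Bar{\varepsilon}(\zeta))$, where $\hat{\B}$ is the collision operator of the reduced model: writing $\B(f,f)$ as in \eqref{eq:fullterm}, the integrand depends on $(\zeta_*, \zeta', \zeta'_*)$ only through their energies, so applying the transfer theorem to each of the three $\E$-integrations converts $\text{d}\mu^{\otimes 3}(\zeta_*, \zeta', \zeta'_*)$ into $\text{d}(\mu^{\Bar{\varepsilon}})^{\otimes 3}(I_*, I', I'_*)$ and produces exactly $\hat{\B}(\hat{f}, \hat{f})(v, \Bar{\varepsilon}(\zeta))$. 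Since the Maxwellians \eqref{eq:generalmaxwell} are of this form and the macroscopic observables transfer in the same way, the reduced model reproduces the same equilibria, conservation laws, H-theorem and Euler equations.

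The main obstacle I anticipate is the bookkeeping of negligible sets. The factorizations of $b$, $\Delta$, $S_\omega$ and $\mathbb{S}^2[\cdot]$ through the energies, together with the symmetry, micro-reversibility and positivity properties, hold only $\mu$-almost everywhere; one must check that such a.e. statements on $\E$ descend to $\mu^{\Bar{\varepsilon}}$-a.e. statements on $\R_+$, using that a $\mu$-null set of the form $\Bar{\varepsilon}^{-1}(N)$ corresponds to the $\mu^{\Bar{\varepsilon}}$-null set $N$, so that $\hat{b}$ is well-defined $\mu^{\Bar{\varepsilon}}$-a.e. and the reduced assumptions genuinely hold. Once this is settled, everything reduces to the mechanical application of the transfer theorem described above.
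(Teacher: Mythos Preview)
Your proposal is correct and follows essentially the same approach as the paper: both rely on the observation that $\Delta$, $E[\cdot]$, $S_\omega[\cdot]$ and hence $\B(f,f)$ depend on the internal states only through their energies, and then apply the transfer (change-of-variable) theorem for the image measure $\mu^{\Bar{\varepsilon}}$ to convert all $\E$-integrals into $\R_+$-integrals. Your write-up is in fact more thorough than the paper's very brief proof, since you explicitly verify that the reduced model satisfies Assumptions~\ref{assumption:minimum} and~\ref{assumption:integrability} and flag the a.e.\ bookkeeping issue, neither of which the paper addresses.
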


\begin{proof}
First, remark that the set $E[\zeta,\zeta_*,\zeta',\zeta'_*]$ and the transformation $S_{\omega}[\zeta,\zeta_*,\zeta',\zeta'_*]$ depend on the internal states $\zeta,\zeta_*,\zeta',\zeta'_*$ only through their energy, cf. equations \eqref{eq:setofcoll} and \eqref{eq:transformoutvelocities}. It follows that if $b$ and $f$ depend on the internal states only through their energy, the same holds for $\B(f,f)$. Thus, in the deep structure of our framework, all dependence on $\zeta$ is in fact a dependence on $\varepsilon(\zeta)$ when it is the case for $b$. From the change of variable formula, for any measurable function $\phi : \R \to \R$ such that one of the following integrals makes sense,
$$
\int_{\E} \phi \big(\varepsilon  (\zeta) \big) \, \text{d}\mu(\zeta) = \int_{\E} \big(\phi  (\cdot + \varepsilon^0) \circ \Bar{\varepsilon} \big) (\zeta) \, \text{d}\mu(\zeta) = \int_{\R_+} \phi (I + \varepsilon^0) \, \text{d}\mu^{\Bar{\varepsilon}}(I).
$$
Applying this change of variable to all integrals in this paper ends the proof.

\QED
\end{proof}

\begin{remark} \label{remark:bdependsonenergy}
We say that $b$ depends on the internal states only through their energy if
$$
b(v,v_*,\zeta,\zeta_*,\zeta',\zeta'_*,\omega) \equiv \tilde{b}(v,v_*,\varepsilon(\zeta),\varepsilon(\zeta_*),\varepsilon(\zeta'),\varepsilon(\zeta'_*),\omega).
$$
This trivially holds true for all existing kernels constructed within the existing models presented in Subsection \ref{subsection:existingmodels}, since for the discrete model there is a bijection between the set of indexes and the set of energy levels, and for the continuous model with weight, the internal state \emph{is} the energy: in this case $\zeta = I = \varepsilon(\zeta)$. To the knowledge of the authors, all existing kernel for polyatomic gases considered up to now have this property.
\end{remark}

\medskip

\noindent Theorem \ref{theorem:reduction} establishes a link between two points of view. The first, the one of the present paper, is \emph{state-based}, considering a space of internal (physical) states $\E$. The second, the one of the reduced model, which is an extension of the model with continuous energy $I$ and weight $\varphi$ proposed by Desvillettes et al. \cite{desvillettes1997modele,desvillettes2005kinetic} (when $\mu^{\Bar{\varepsilon}}$ is absolutely continuous with respect to the Lebesgue measure, there exists $\varphi \in L^1(\R_+)$ such that $\text{d} \mu^{\Bar{\varepsilon}}(I) = \varphi(I) \, \text{d} I$), is \emph{energy-based}, since the variable of interest is directly the internal energy of the molecule. Theorem \ref{theorem:reduction} shows that to any states-based model corresponds an energy-based one (that is, a model with one continuous variable $I$ and measure $\mu^{\Bar{\varepsilon}}$/ weight $\varphi$). Since the measure $\mu^{\Bar{\varepsilon}}$/ weight $\varphi$ corresponds to the image measure of $\mu$ by $\Bar{\varepsilon}$, it can easily be computed \emph{from the state-based model of the molecule}, which is a major difference of approach of that can be found in \cite{desvillettes1997modele,desvillettes2005kinetic} where $\varphi$ is computed with \emph{a posteriori} knowledge of the number of degrees of freedom of the gas, moreover usually assuming it to be temperature-independant. In the following propositions we show some examples of this reduction for some physically meaningful models presented in the previous section, the link with existing models known to be accurate at describing rotation, and the computation of a weight $\varphi$ describing rotation \emph{and} vibration, which is, as expected, \emph{not} of the form $I^{\alpha}$.

\noindent

\begin{proposition} \label{prop:reducedimension}
Let $d \in \N^*$. The model
$$
\big( \R^d,\emph{Bor}(\R^d),\emph{d}z \big), \qquad \varepsilon(z) = \frac12 \mathcal{I} \, |z|^2
$$
can be reduced to
$$
\left(\R_+,\emph{Bor}(\R_+), C_d(\mathcal{I}) \, I^{\frac{d}{2} - 1} \, \emph{d}I \right), \qquad I \mapsto I,
$$
with $C_d(\mathcal{I}) = 2^{\frac{d}{2}-1} \, \mathcal{I}^{-\frac{d}{2}} \, \emph{Leb}_{d-1}(\mathbb{S}^{d-1})$, where $\emph{Leb}_{d-1}$ is the Lebesgue measure of dimension $d-1$.
\end{proposition}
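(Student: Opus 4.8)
The plan is to invoke Theorem \ref{theorem:reduction} and then to compute the image measure $\mu^{\bar\varepsilon}$ explicitly. First I would observe that here $\mu$ is the Lebesgue measure on $\R^d$ and $\varepsilon(z) = \frac12\mathcal{I}|z|^2 \geq 0$, taking values arbitrarily close to $0$ near the origin, so that $\varepsilon^0 = \text{inf ess}_\mu\{\varepsilon\} = 0$ and hence $\bar\varepsilon = \varepsilon$. Theorem \ref{theorem:reduction} then asserts that the model reduces to $(\R_+, \text{Bor}(\R_+), \mu^{\bar\varepsilon})$ with energy map $I \mapsto I + \varepsilon^0 = I$; it therefore only remains to identify $\mu^{\bar\varepsilon}$, the pushforward of $\text{d}z$ by the map $z \mapsto \frac12\mathcal{I}|z|^2$.

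To exhibit its density, I would test $\mu^{\bar\varepsilon}$ against an arbitrary nonnegative measurable $\phi : \R_+ \to \R_+$ via the change-of-variable formula for image measures,
$$
\int_{\R_+} \phi(I)\,\text{d}\mu^{\bar\varepsilon}(I) = \int_{\R^d}\phi\!\left(\tfrac12\mathcal{I}|z|^2\right)\text{d}z,
$$
and pass to polar coordinates $\text{d}z = r^{d-1}\,\text{d}r\,\text{d}\sigma$, where $\text{d}\sigma$ is the surface measure on $\mathbb{S}^{d-1}$ of total mass $\text{Leb}_{d-1}(\mathbb{S}^{d-1})$. Since the integrand depends on $z$ only through $r = |z|$, the angular integration factors out, leaving
$$
\text{Leb}_{d-1}(\mathbb{S}^{d-1}) \int_0^\infty \phi\!\left(\tfrac12\mathcal{I}r^2\right) r^{d-1}\,\text{d}r.
$$

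The final step is the radial substitution $I = \frac12\mathcal{I}r^2$, i.e. $r = (2I/\mathcal{I})^{1/2}$ and $\text{d}I = \mathcal{I}r\,\text{d}r$. Collecting the powers of $r$ gives $r^{d-1}\,\text{d}r = r^{d-2}\cdot r\,\text{d}r = (2I/\mathcal{I})^{(d-2)/2}\,\mathcal{I}^{-1}\,\text{d}I$, so the prefactor of $\phi(I)$ becomes $\text{Leb}_{d-1}(\mathbb{S}^{d-1})\,2^{d/2-1}\mathcal{I}^{-d/2}\,I^{d/2-1}$. As this identity holds for every such $\phi$, I would conclude $\text{d}\mu^{\bar\varepsilon}(I) = C_d(\mathcal{I})\,I^{d/2-1}\,\text{d}I$ with $C_d(\mathcal{I}) = 2^{d/2-1}\mathcal{I}^{-d/2}\text{Leb}_{d-1}(\mathbb{S}^{d-1})$, which is exactly the claimed reduced model. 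There is no genuine obstacle here beyond careful bookkeeping of the exponents in the radial change of variables; the only conceptual point worth verifying is the identification $\varepsilon^0 = 0$, which ensures the reduced energy map is $I \mapsto I$ rather than a shifted one.
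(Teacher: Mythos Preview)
Your proof is correct and follows essentially the same approach as the paper: compute the pushforward $\mu^{\bar\varepsilon}$ by passing to polar coordinates and then performing the radial substitution $I = \tfrac12\mathcal{I}r^2$. The only cosmetic differences are that the paper tests against indicator functions $\mathbf{1}_B$ rather than arbitrary $\phi$, and that you make the verification $\varepsilon^0 = 0$ and the invocation of Theorem~\ref{theorem:reduction} explicit, whereas the paper leaves these implicit.
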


\begin{proof}
We compute $\mu^{\Bar{\varepsilon}}$. Let $B \in \text{Bor}(\R_+)$.
$$
\mu^{\Bar{\varepsilon}}(B) = \int_{\R^d} \mathbf{1}_{\frac12 \mathcal{I}|z|^2 \in B} \, \text{d}z = \text{Leb}_{d-1}(\mathbb{S}^{d-1}) \int_{\R_+}  \mathbf{1}_{\frac12 \mathcal{I} r^2 \in B} \, r^{d - 1} \text{d}r = \int_{\R_+}  \mathbf{1}_{I \in B} \, C_d(\mathcal{I}) \, I^{\frac{d}{2} - 1} \text{d}I,
$$
with $C_d(\mathcal{I}) = 2^{\frac{d}{2}-1} \, \mathcal{I}^{-\frac{d}{2}} \, \text{Leb}_{d-1}(\mathbb{S}^{d-1})$. It follows that
$$
\text{d} \mu^{\Bar{\varepsilon}}(I) = C_d(\mathcal{I}) \, I^{\frac{d}{2}-1} \, \text{d}I.
$$
\QED
\end{proof}

\begin{remark}
We saw in subsection \ref{subsection:buildmodel} that the value of the moment of inertia of the molecule $\mathcal{I}$ has no importance at equilibrium, however Proposition \ref{prop:reducedimension} shows that it should be taken into account outside equilibrium.

\noindent
We also remark that the reduced version of classical model for the rotation of a linear molecule presented in subsection \ref{subsection:buildmodel} is, with $C_2(\mathcal{I}) =  \frac{2 \pi}{\mathcal{I}}$,
$$
\left(\R_+,\text{Bor}(\R_+), C_2(\mathcal{I}) \, \text{d}I \right), \quad I \mapsto I.
$$
This model, being the the model with continuous variable $I$ and weight $\varphi(I) = C_2(\mathcal{I})$, that is the one of Borgnakke-Larsen \cite{borgnakke1975statistical}, is indeed well-known to be accurate in describing rotation for diatomic molecules. On the other hand, the reduction of the classical model for the rotation of a non-linear molecule presented in subsection \ref{subsection:buildmodel} can be harder in general, because the computation of the measure relies on the computation of the surface of a triaxial ellipsoid. Nevertheless in this case, assuming $\mathcal{I}_i = \mathcal{I}_j = \mathcal{I}$ and with $C_3(\mathcal{I}) = \frac{4 \sqrt{2} \pi}{\mathcal{I}^{3/2}}$, the model reduces to
$$
\left(\R_+,\text{Bor}(\R_+), C_3 (\mathcal{I})\, \sqrt{I} \, \text{d}I \right), \quad I \mapsto I,
$$
which is the model with continuous variable $I$ and weight $\varphi(I) = C_3 (\mathcal{I})\, \sqrt{I}$, well-known for the description of rotation for non-linear molecules.
\end{remark}

\begin{proposition} \label{prop:reductiondiscrete}
Let $Q$ be a discrete set, $(\epsilon_q)_{q \in Q} \in \R^Q$ and $(r_q)_{q \in Q} \in (\R^*_+)^Q$. We denote by $\epsilon^0 = \emph{inf}\{ \epsilon_q\}_{q \in Q}$ and we assume $\epsilon^0 \in \R$ and that for all $\beta > 0$ one has $\sum_{q \in Q} e^{-\beta (\epsilon_q - \epsilon^0)} \, r_q < \infty$. Then the model
$$
\big( Q, \mathcal{P}(Q), (r_q)_{q \in Q} \big), \qquad \varepsilon(q) = \epsilon_q
$$
can be reduced to
$$
\left(\R_+,\emph{Bor}(\R_+), \sum_{q \in Q} \, r_q \, \chi_{\epsilon_q - \epsilon^0} \right), \qquad I \mapsto I + \epsilon^0,
$$
where $\chi_{\epsilon_q - \epsilon^0}$ is the Dirac mass at $\epsilon_q - \epsilon^0$.
\end{proposition}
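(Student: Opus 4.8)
The plan is to apply Theorem \ref{theorem:reduction} directly; the only model-specific work is then to compute the image measure $\mu^{\Bar{\varepsilon}}$ explicitly.

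First I would confirm that the framework hypotheses hold for this discrete model, so that Theorem \ref{theorem:reduction} is applicable. The key observation is that, since $\mu$ places the strictly positive mass $r_q$ on each singleton $\{q\}$, a set $\{\varepsilon < R\}$ is $\mu$-null if and only if it is empty. Consequently the measure-theoretic essential infimum collapses to the ordinary infimum,
$$
\varepsilon^0 = \text{inf ess}_{\mu}\{\varepsilon\} = \sup\{R \in \R : \{q : \epsilon_q < R\} = \emptyset\} = \inf_{q\in Q}\{\epsilon_q\} = \epsilon^0,
$$
which is finite by assumption, so Assumption \ref{assumption:minimum} holds; and since $\Bar{\varepsilon}(q) = \epsilon_q - \epsilon^0$, the partition function is $Z(\beta) = \sum_{q\in Q} e^{-\beta(\epsilon_q - \epsilon^0)}\,r_q$, finite by hypothesis, so Assumption \ref{assumption:integrability} holds as well. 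Theorem \ref{theorem:reduction} then yields the reduced model $(\R_+, \text{Bor}(\R_+), \mu^{\Bar{\varepsilon}})$ with energy function $I \mapsto I + \varepsilon^0 = I + \epsilon^0$, matching the claimed shift.

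It then remains to identify $\mu^{\Bar{\varepsilon}}$ from the definition of an image measure. For any $B \in \text{Bor}(\R_+)$ I would write
$$
\mu^{\Bar{\varepsilon}}(B) = \mu\big(\Bar{\varepsilon}^{\,-1}(B)\big) = \mu\big(\{q \in Q : \epsilon_q - \epsilon^0 \in B\}\big) = \sum_{q\in Q} r_q\,\mathbf{1}_{\epsilon_q - \epsilon^0 \in B} = \Big(\sum_{q\in Q} r_q\,\chi_{\epsilon_q - \epsilon^0}\Big)(B),
$$
using that $\mu = \sum_{q\in Q} r_q\,\chi_q$ on $Q$ and that a Dirac mass satisfies $\chi_a(B) = \mathbf{1}_{a\in B}$. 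This gives $\mu^{\Bar{\varepsilon}} = \sum_{q\in Q} r_q\,\chi_{\epsilon_q - \epsilon^0}$ and finishes the argument.

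I do not anticipate a genuine obstacle: Theorem \ref{theorem:reduction} supplies all the structural content, and the remaining computation is routine. The only point deserving care is the first step, where one must notice that strict positivity of every weight $r_q$ forces the essential infimum to equal the pointwise infimum $\epsilon^0$; this is what legitimizes identifying the framework's $\varepsilon^0$ with the proposition's $\epsilon^0$, and hence the appearance of the translation $I \mapsto I + \epsilon^0$ in the reduced energy function.
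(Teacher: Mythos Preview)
Your proposal is correct and follows essentially the same approach as the paper: both arguments invoke Theorem~\ref{theorem:reduction} and then compute the image measure $\mu^{\Bar{\varepsilon}}$ via the chain $\mu^{\Bar{\varepsilon}}(B) = \mu(\Bar{\varepsilon}^{\,-1}(B)) = \sum_{q} r_q\,\mathbf{1}_{\epsilon_q - \epsilon^0 \in B} = \sum_q r_q\,\chi_{\epsilon_q - \epsilon^0}(B)$. The paper's proof is literally that one line; your additional verification that Assumptions~\ref{assumption:minimum} and~\ref{assumption:integrability} hold and that the framework's $\varepsilon^0$ coincides with the proposition's $\epsilon^0$ is extra care rather than a different method.
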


\begin{proof}
For $B \in \text{Bor}(\R_+)$, we have $\displaystyle \mu^{\Bar{\varepsilon}}(B) = \mu(\Bar{\varepsilon}^{\, -1} (B)) = \sum_{q \in Q} r_q \, \mathbf{1}_{\Bar{\varepsilon}(q) \in B}= \sum_{q \in Q} r_q \, \mathbf{1}_{\epsilon_q - \epsilon^0 \in B} = \sum_{q \in Q} \, r_q \, \chi_{\epsilon_q - \epsilon^0}(B)$.

\QED
\end{proof}

\begin{proposition} \label{prop:reductionofcombi}
\textbf{Reduction of a combination}. Let $(\E_1,\A_1,\mu_1), \, \varepsilon_1$ and $(\E_2,\A_2,\mu_2), \, \varepsilon_2$ be two general models. Then the combined model
$$
(\E,\A,\mu) = (\E_1 \times \E_2,\A_1 \otimes \A_2,\mu_1 \otimes \mu_2), \qquad \varepsilon(\zeta_1,\zeta_2) = \varepsilon_1 (\zeta_1) + \varepsilon_2 (\zeta_2)
$$
can be reduced to
$$
\left(\R_+,\emph{Bor}(\R_+), \mu^{\Bar{\varepsilon}_1}_1 \ast \mu^{\Bar{\varepsilon}_2}_2 \right), \qquad I \mapsto I + \varepsilon^0_1 + \varepsilon^0_2,
$$
where $\ast$ stands for the convolution of measures and we recall that $\mu^{\Bar{\varepsilon}_i}_i$, $i=1,2$ is the image measure of $\mu_i$ by $\bar{\varepsilon}_i$.
\end{proposition}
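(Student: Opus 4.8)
The plan is to reduce the computation to a single application of Theorem \ref{theorem:reduction}, followed by the identification of the resulting image measure as a convolution. First I would invoke the Equipartition Theorem (Theorem \ref{theorem:equipartition}), which guarantees that Assumptions \ref{assumption:minimum} and \ref{assumption:integrability} hold for the combined pair $(\mu,\varepsilon)$, so that the combined model is a legitimate general model to which Theorem \ref{theorem:reduction} applies (under the standing hypothesis of this section that the kernel depends on the internal states only through their energy). Theorem \ref{theorem:reduction} then reduces it to $\big(\R_+,\text{Bor}(\R_+),\mu^{\Bar{\varepsilon}}\big)$ with map $I \mapsto I + \varepsilon^0$, and it remains only to (a) identify $\varepsilon^0$ and (b) compute $\mu^{\Bar{\varepsilon}}$.

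For (a), I would show that the essential infimum splits across the two independent factors, i.e. $\varepsilon^0 = \varepsilon_1^0 + \varepsilon_2^0$. On the one hand, combining the $\mu_i$-a.e. lower bounds $\varepsilon_i \geq \varepsilon_i^0$ gives $\varepsilon = \varepsilon_1 + \varepsilon_2 \geq \varepsilon_1^0 + \varepsilon_2^0$ for $\mu$-a.e. $(\zeta_1,\zeta_2)$, so $\varepsilon^0 \geq \varepsilon_1^0 + \varepsilon_2^0$. On the other hand, for any $R_i > \varepsilon_i^0$ the sets $\{\varepsilon_i < R_i\}$ have positive $\mu_i$-measure, hence their product has positive $\mu$-measure and carries $\varepsilon < R_1 + R_2$, forcing $\varepsilon^0 \leq \varepsilon_1^0 + \varepsilon_2^0$. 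Consequently $\Bar{\varepsilon} = \varepsilon - \varepsilon^0 = \Bar{\varepsilon}_1 + \Bar{\varepsilon}_2$, i.e. $\Bar{\varepsilon}(\zeta_1,\zeta_2) = \Bar{\varepsilon}_1(\zeta_1) + \Bar{\varepsilon}_2(\zeta_2)$ — exactly the splitting already recorded at the start of the proof of Theorem \ref{theorem:equipartition} — which pins down the energy offset as $\varepsilon^0 = \varepsilon_1^0 + \varepsilon_2^0$, as claimed.

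For (b), which is the heart of the argument, I would unwind the definition of the image measure and recognize a convolution. For $B \in \text{Bor}(\R_+)$,
\begin{align*}
\mu^{\Bar{\varepsilon}}(B) &= (\mu_1 \otimes \mu_2)\big(\{(\zeta_1,\zeta_2) : \Bar{\varepsilon}_1(\zeta_1) + \Bar{\varepsilon}_2(\zeta_2) \in B\}\big) \\
&= \int_{\E_1} \int_{\E_2} \mathbf{1}_{\Bar{\varepsilon}_1(\zeta_1) + \Bar{\varepsilon}_2(\zeta_2) \in B} \, \text{d}\mu_2(\zeta_2) \, \text{d}\mu_1(\zeta_1),
\end{align*}
where Tonelli is justified by the $\sigma$-finiteness of $\mu_1,\mu_2$ (which holds by the remark following Assumption \ref{assumption:integrability}). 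Applying the change-of-variables/image-measure formula to the inner and then the outer integral in turn transfers each to an integral over $\R_+$ against $\mu_2^{\Bar{\varepsilon}_2}$ and $\mu_1^{\Bar{\varepsilon}_1}$, giving
$$
\mu^{\Bar{\varepsilon}}(B) = \int_{\R_+} \int_{\R_+} \mathbf{1}_{I_1 + I_2 \in B} \, \text{d}\mu_2^{\Bar{\varepsilon}_2}(I_2) \, \text{d}\mu_1^{\Bar{\varepsilon}_1}(I_1),
$$
which is precisely the pushforward of $\mu_1^{\Bar{\varepsilon}_1} \otimes \mu_2^{\Bar{\varepsilon}_2}$ under addition, i.e. the convolution $\big(\mu_1^{\Bar{\varepsilon}_1} \ast \mu_2^{\Bar{\varepsilon}_2}\big)(B)$. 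Combining (a) and (b) with Theorem \ref{theorem:reduction} yields the stated reduced model.

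I expect the only genuinely delicate points to be bookkeeping ones: checking $\sigma$-finiteness so that Tonelli applies and the convolution is well-defined, and verifying the splitting of the essential infimum; the convolution identification is then just its definition read off the double integral. The main potential pitfall is the temptation to pushforward directly under $\Bar{\varepsilon}$ in a single step — instead one should first pushforward the product measure under $(\zeta_1,\zeta_2) \mapsto (\Bar{\varepsilon}_1(\zeta_1),\Bar{\varepsilon}_2(\zeta_2))$ to obtain $\mu_1^{\Bar{\varepsilon}_1} \otimes \mu_2^{\Bar{\varepsilon}_2}$ on $\R_+^2$, and only then apply the addition map, making the convolution structure transparent.
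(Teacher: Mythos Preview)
Your proposal is correct and follows essentially the same approach as the paper: the paper's proof simply asserts $\Bar{\varepsilon}(\zeta_1,\zeta_2) = \Bar{\varepsilon}_1(\zeta_1) + \Bar{\varepsilon}_2(\zeta_2)$ and then performs exactly your computation (b), writing $\mu^{\Bar{\varepsilon}}(B)$ as the double integral over $\E_1 \times \E_2$, pushing forward to $\R_+ \times \R_+$, and reading off the convolution. Your version is more careful than the paper's in that you explicitly justify the splitting of the essential infimum (your part (a)), invoke Theorem~\ref{theorem:equipartition} and Theorem~\ref{theorem:reduction}, and note the $\sigma$-finiteness needed for Tonelli --- all points the paper leaves implicit.
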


\begin{proof}
First of all, note that $\Bar{\varepsilon}(\zeta_1,\zeta_2) = \Bar{\varepsilon}_1 (\zeta_1) + \Bar{\varepsilon}_2 (\zeta_2)$. Thus, denoting by $\mu^{\Bar{\varepsilon}}$ the image measure of $\mu$ by $\Bar{\varepsilon}$, we have for $B \in \text{Bor}(\R_+)$
$$
\mu^{\Bar{\varepsilon}}(B) = \int_{\E_1} \int_{\E_2} \mathbf{1}_{\Bar{\varepsilon}_1(\zeta_1) + \Bar{\varepsilon}_2(\zeta_2) \in B} \; \text{d}\mu_1(\zeta_1) \, \text{d}\mu_2(\zeta_2) = \int_{\R_+} \int_{\R_+} \mathbf{1}_{I_1 + I_2 \in B} \; \text{d}\mu_1^{\Bar{\varepsilon}_1}(I_1) \, \text{d}\mu_2^{\Bar{\varepsilon}_2}(I_2) = \mu_1^{\Bar{\varepsilon}_1} \ast \mu_2^{\Bar{\varepsilon}_2} (B).
$$
\end{proof}

\begin{corollary} \label{cor:reduction1H19F}
The Harmonic semi-classical model for the $\prescript{1}{}{\emph{H}}\prescript{19}{}{\emph{F}}$ gas presented in Subsection \ref{subsection:buildmodel},
$$
(\E,\A,\emph{d}\mu(\zeta)) = \left(\R^2 \times \N,\, \emph{Bor}(\R^2) \otimes \mathcal{P}(\N),\, \emph{d}z \times 1 \right), \qquad \varepsilon(z,n) =  \frac12 \mathcal{I} \, |z|^2 + hc \, \nu_e \left( n + \frac12 \right),
$$
can be reduced to
$$
\left(\R_+,\emph{Bor}(\R_+), \varphi_{HF}(I) \, \emph{d} I \right), \qquad I \mapsto I + \frac12 hc \, \nu_e,
$$
with
\begin{equation}
\varphi_{HF}(I) = C_2(\mathcal{I}) \left\lceil \frac{I}{hc \, \nu_e} \right\rceil.
\end{equation}
\end{corollary}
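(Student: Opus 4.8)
The plan is to recognise the Harmonic semi-classical model as the combination of two elementary models and then invoke the reduction results already established. Concretely, I would set
$$(\E_1,\A_1,\mu_1) = \big(\R^2,\text{Bor}(\R^2),\text{d}z\big), \quad \varepsilon_1(z) = \tfrac12\,\mathcal{I}\,|z|^2,$$
$$(\E_2,\A_2,\mu_2) = \big(\N,\mathcal{P}(\N),1\big), \quad \varepsilon_2(n) = hc\,\nu_e\left(n + \tfrac12\right),$$
so that $\varepsilon(z,n) = \varepsilon_1(z) + \varepsilon_2(n)$ and the HF model is exactly their combination in the sense of Proposition \ref{prop:reductionofcombi}. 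That proposition reduces it to $\big(\R_+,\text{Bor}(\R_+),\mu_1^{\Bar{\varepsilon}_1}\ast\mu_2^{\Bar{\varepsilon}_2}\big)$ with the shift $I \mapsto I + \varepsilon_1^0 + \varepsilon_2^0$, so the work reduces to identifying the two image measures and their convolution.

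For the rotational factor I would use $\varepsilon_1 \geq 0$, giving $\varepsilon_1^0 = 0$ and $\Bar{\varepsilon}_1 = \varepsilon_1$, and then apply Proposition \ref{prop:reducedimension} with $d = 2$; since the exponent $\tfrac d2 - 1$ then vanishes, this yields $\text{d}\mu_1^{\Bar{\varepsilon}_1}(I) = C_2(\mathcal{I})\,\text{d}I$. For the vibrational factor the minimum is attained at $n = 0$, so $\varepsilon_2^0 = \tfrac12 hc\,\nu_e$ and $\Bar{\varepsilon}_2(n) = hc\,\nu_e\,n$; Proposition \ref{prop:reductiondiscrete} then gives $\mu_2^{\Bar{\varepsilon}_2} = \sum_{n \geq 0}\chi_{n\,hc\,\nu_e}$. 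In particular the total shift is $\varepsilon_1^0 + \varepsilon_2^0 = \tfrac12 hc\,\nu_e$, matching the stated offset.

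It then remains to carry out the convolution. Convolving the density $C_2(\mathcal{I})\,\mathbf{1}_{I \geq 0}$ with a Dirac mass $\chi_a$ merely translates it to $C_2(\mathcal{I})\,\mathbf{1}_{I \geq a}$, so summing over the vibrational levels $a = n\,hc\,\nu_e$ produces
$$\text{d}\big(\mu_1^{\Bar{\varepsilon}_1}\ast\mu_2^{\Bar{\varepsilon}_2}\big)(I) = C_2(\mathcal{I})\left(\sum_{n \geq 0}\mathbf{1}_{I \geq n\,hc\,\nu_e}\right)\text{d}I.$$
For fixed $I > 0$, the number of indices $n \in \N$ with $n\,hc\,\nu_e \leq I$ equals $\lfloor I/(hc\,\nu_e)\rfloor + 1$, which coincides with $\lceil I/(hc\,\nu_e)\rceil$ precisely when $I/(hc\,\nu_e) \notin \N$.

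The only genuinely delicate point will be this last identification: the honest counting function is $\lfloor \cdot \rfloor + 1$, and it equals the claimed ceiling only after one discards the energies lying exactly on a vibrational level. Since that set is a Lebesgue-null union of points $\{n\,hc\,\nu_e\}$ and densities are defined up to almost-everywhere equality, the discrepancy is immaterial, and one concludes $\varphi_{HF}(I) = C_2(\mathcal{I})\lceil I/(hc\,\nu_e)\rceil$ as required.
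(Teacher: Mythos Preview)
Your proof is correct and follows exactly the same strategy as the paper: invoke Proposition~\ref{prop:reducedimension} with $d=2$ for the rotational factor, Proposition~\ref{prop:reductiondiscrete} for the vibrational factor, and Proposition~\ref{prop:reductionofcombi} to combine them via convolution. If anything, you are more careful than the paper, which leaves the explicit convolution computation and the $\lfloor\cdot\rfloor+1$ versus $\lceil\cdot\rceil$ null-set issue to the reader.
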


\begin{proof}
This is a consequence of Proposition \ref{prop:reducedimension} with $d = 2$, Proposition \ref{prop:reductiondiscrete} with $Q = \N$, $r_n = 1$ and $\epsilon_n = hc \, \nu_e \left( n + \frac12 \right)$, and Proposition \ref{prop:reductionofcombi} with $(\E_1,\A_1,\text{d}\mu_1(\zeta_1)) = (\R^2,\, \text{Bor}(\R^2),\, \text{d}z)$, $\varepsilon_1(z) = \frac12 \mathcal{I} \, |z|^2$ and $(\E_2,\A_2,\text{d}\mu_2(\zeta_2)) = (\N,\,  \mathcal{P}(\N),\,  1)$, $\varepsilon_2(n) = hc \, \nu_e \left( n + \frac12 \right)$.
\QED
\end{proof}

\noindent This Corollary \ref{cor:reduction1H19F} illustrates the use of the reduction process: the model with continuous variable $I$ and weight $\varphi_{HF}$ corresponds to the model (classical for rotation, Harmonic quantum for vibration) for the $\prescript{1}{}{\text{H}}\prescript{19}{}{\text{F}}$ gas. The reader shall note that $\varphi_{HF}$ is \emph{not of the form} $I^{\alpha}$. As a matter of illustration, we plot on Fig. \ref{fig:varphis} the function $\varphi_{HF}$.

\begin{figure}[ht]
    \centering
    \includegraphics[width=.7\linewidth]{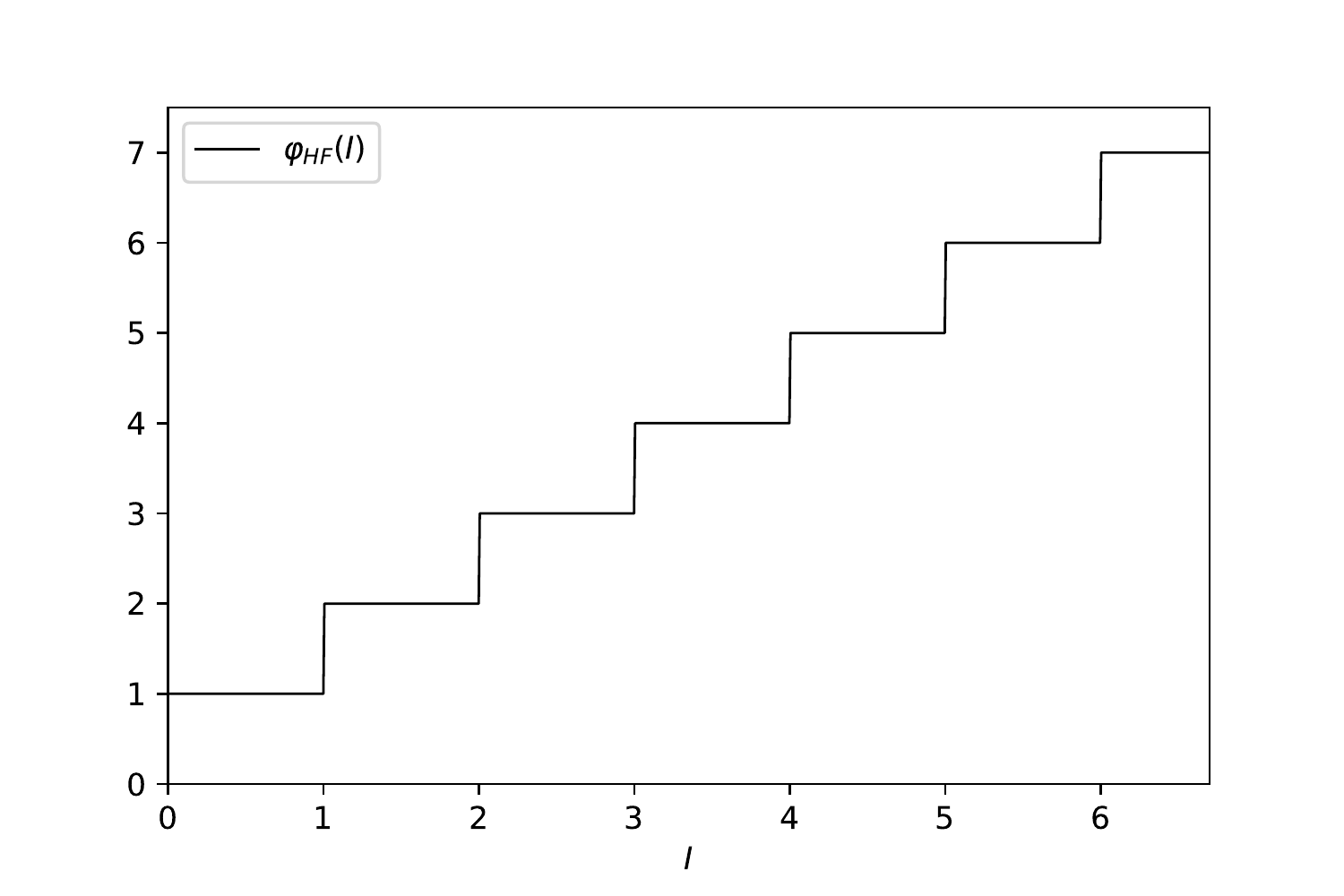}
    \caption{Plot of $\varphi_{HF}$, re-scaled with $C_2(\mathcal{I}) \equiv 1$ and $hc \, \nu_e \equiv 1$.}
    \label{fig:varphis}
\end{figure}

\begin{corollary}
\textbf{Reduction of the combination of the continuous and discrete models.} More generally, let $\varphi_0 \in L^1_{loc}(\R_+,\emph{d}I)$ be non-negative and non identically zero,  $Q$ a discrete set and $(r_q)_{q \in Q} \in (\R^*_+)^Q$. We denote by $\epsilon^0 = \emph{inf}\{ \epsilon_q\}_{q \in Q}$ and we assume $\epsilon^0 \in \R$ and that for all $\beta > 0$ one has $\sum_{q \in Q} e^{-\beta (\epsilon_q - \epsilon^0)} \, r_q < \infty$. Then the model
$$
\big( \R_+ \times Q, \, \emph{Bor}(\R_+) \otimes \mathcal{P}(Q), \, \varphi_0(I) \, \emph{d}I \, r_q \big), \qquad \varepsilon(I, q) = I + \epsilon_q
$$
can be reduced to
$$
\left(\R_+, \, \emph{Bor}(\R_+), \, \varphi(I) \, \emph{d} I \right), \qquad I \mapsto I + \epsilon^0,
$$
with
\begin{equation} \label{eq:formulavarphi}
\varphi(I) = \sum_{q \in Q}\mathbf{1}_{\epsilon_q \leq I + \epsilon^0} \; \; r_q \; \varphi_0\left( I + \epsilon^0  - \epsilon_q  \right).
\end{equation}

\end{corollary}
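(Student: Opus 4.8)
The plan is to read this corollary as a direct specialization of Proposition~\ref{prop:reductionofcombi} to the pair formed by the continuous weighted model and the discrete model, followed by an explicit evaluation of the resulting convolution of image measures. First I would introduce the two component models: on one side $(\E_1,\A_1,\mu_1) = (\R_+,\text{Bor}(\R_+),\varphi_0(I)\,\text{d}I)$ with $\varepsilon_1(I)=I$, and on the other $(\E_2,\A_2,\mu_2)=(Q,\mathcal{P}(Q),(r_q)_q)$ with $\varepsilon_2(q)=\epsilon_q$. One checks that their combination in the sense of Theorem~\ref{theorem:equipartition} is exactly the model in the statement, that Assumptions~\ref{assumption:minimum} and~\ref{assumption:integrability} hold for each component (the discrete part from the summability hypothesis $\sum_{q} e^{-\beta(\epsilon_q-\epsilon^0)}\,r_q<\infty$, the continuous part from the integrability of $\varphi_0$), and---crucially for the stated energy shift---that $\varepsilon_1^0 = \text{inf ess}_{\mu_1}\{\varepsilon_1\} = 0$, so that $\Bar{\varepsilon}_1 = \varepsilon_1 = \text{id}$, while $\varepsilon_2^0 = \epsilon^0$.

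With these identifications, Proposition~\ref{prop:reductionofcombi} reduces the combined model to $(\R_+,\text{Bor}(\R_+),\mu_1^{\Bar{\varepsilon}_1}\ast\mu_2^{\Bar{\varepsilon}_2})$ with energy function $I\mapsto I+\varepsilon_1^0+\varepsilon_2^0 = I+\epsilon^0$, matching the claimed shift. Next I would identify the two factors of the convolution: since $\Bar{\varepsilon}_1$ is the identity, $\mu_1^{\Bar{\varepsilon}_1}=\varphi_0(I)\,\text{d}I$; and by Proposition~\ref{prop:reductiondiscrete}, $\mu_2^{\Bar{\varepsilon}_2}=\sum_{q\in Q} r_q\,\chi_{\epsilon_q-\epsilon^0}$.

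It then remains to compute the convolution explicitly. Testing against a Borel set $B$ and integrating out the atomic factor first (all quantities are non-negative, so Tonelli applies),
\[
\big(\mu_1^{\Bar{\varepsilon}_1}\ast\mu_2^{\Bar{\varepsilon}_2}\big)(B) = \sum_{q\in Q} r_q \int_{\R_+} \mathbf{1}_{I_1+(\epsilon_q-\epsilon^0)\in B}\,\varphi_0(I_1)\,\text{d}I_1.
\]
Performing in each summand the substitution $I=I_1+\epsilon_q-\epsilon^0$, the constraint $I_1\geq 0$ becomes $\epsilon_q\leq I+\epsilon^0$ and $\varphi_0(I_1)=\varphi_0(I+\epsilon^0-\epsilon_q)$, yielding
\[
\big(\mu_1^{\Bar{\varepsilon}_1}\ast\mu_2^{\Bar{\varepsilon}_2}\big)(B) = \int_B \left( \sum_{q\in Q} \mathbf{1}_{\epsilon_q\leq I+\epsilon^0}\, r_q\,\varphi_0(I+\epsilon^0-\epsilon_q) \right)\text{d}I,
\]
so the image measure admits the density $\varphi$ of \eqref{eq:formulavarphi}, which is precisely the claim.

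The computation is essentially bookkeeping, so I do not expect a genuine obstacle; the only points requiring care are the verification that $\varepsilon_1^0=0$ (so the shift reads $I+\epsilon^0$ rather than $I+\epsilon^0+\varepsilon_1^0$), and the precise tracking of supports in the translation of $\varphi_0$, which is what produces the indicator $\mathbf{1}_{\epsilon_q\leq I+\epsilon^0}$ together with the convention that $\varphi_0$ is extended by zero outside $\R_+$. The interchange of summation and integration is legitimate by non-negativity via Tonelli's theorem.
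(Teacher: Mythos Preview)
Your proposal is correct and follows exactly the route the paper indicates: the paper's proof is the single sentence ``this result directly follows from Propositions~\ref{prop:reductiondiscrete} and~\ref{prop:reductionofcombi}'', and you have simply unpacked that reference by identifying the two component models, applying Proposition~\ref{prop:reductionofcombi}, and then explicitly computing the convolution $\mu_1^{\Bar{\varepsilon}_1}\ast\mu_2^{\Bar{\varepsilon}_2}$ using the atomic form of $\mu_2^{\Bar{\varepsilon}_2}$ from Proposition~\ref{prop:reductiondiscrete}. Your handling of the shift $\varepsilon_1^0=0$ and of the indicator arising from the support of $\varphi_0$ is exactly what is needed to recover \eqref{eq:formulavarphi}.
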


\begin{proof}
Again, this result directly follows from Propositions \ref{prop:reductiondiscrete} and \ref{prop:reductionofcombi}. \QED
\end{proof}

\begin{remark}
Still from Proposition \ref{prop:reductionofcombi}, the model with two continuous variables with weights 
$$
\big( \R_+ \times \R_+, \, \text{Bor}(\R_+) \otimes \text{Bor}(\R_+), \, \varphi_1(I_1) \, \text{d}I_1 \, \varphi_2 (I_2) \, \text{d}I_2  \big), \qquad \varepsilon(I_1,I_2) = I_1 + I_2,
$$
can be reduced to the continuous model with weight $\varphi_1 \ast \varphi_2$, where $\displaystyle \varphi_1 \ast \varphi_2(I) = \int_0^I \varphi_1(I') \, \varphi_2(I-I') \, \text{d}I'$.
\end{remark}

\begin{remark}
If we denote by $\delta_{\mu^{\Bar{\varepsilon}}}$ the number of internal degrees of freedom associated with the general model $(\E,\A,\mu)$, $\varepsilon$, we get $\displaystyle  \delta_{\mu_1^{\Bar{\varepsilon}_1} \ast \mu_2^{\Bar{\varepsilon}_2}} = \delta_{\mu_1^{\Bar{\varepsilon}_1}} + \delta_{\mu_2^{\Bar{\varepsilon}_2}}$ from the Equipartition Theorem \ref{theorem:equipartition}. Thus $\mu^{\Bar{\varepsilon}} \mapsto \delta_{\mu^{\Bar{\varepsilon}}}$ is a morphism.
\end{remark}

\medskip

\subsubsection*{A comment on numerical aspects}

\noindent Not only the process of model reduction links the \emph{state}-based and \emph{energy}-based points of view, it is also useful for numerical considerations. For instance, the space of internal states of the semi-classical model for $\prescript{1}{}{\text{H}}\prescript{19}{}{\text{F}}$ is $\R^2 \times \N$, whereas it is $\R_+$ for any reduced model, which is far more efficient to be used for numerical simulations. In order to construct an approximate model suited and efficient for simulation, one wishes to approximate the measure $\mu^{\Bar{\varepsilon}}$ by a discrete measure, a finite sum of Diracs, of the form $\displaystyle \sum_{k=1}^N r_k \, \chi_{\Bar{\epsilon}_k}$. Then the numerically-suited approximation of the general model
$$
(\E,\A,\mu),\qquad \varepsilon
$$
is thus the discrete model
$$
(\llbracket 1, N \rrbracket, \mathcal{P}(\llbracket 1, N \rrbracket), (r_k)_{1 \leq k \leq N}), \qquad (\Bar{\epsilon}_k + \varepsilon^0)_{1\leq k \leq N},
$$
At fixed $N$, we suggest to choose the $(r_k,\bar{\epsilon}_k)_{1 \leq k \leq N}$ that minimizes a distance (for instance the 1 or 2-Wassertein distance, explicit in 1D) between the Gibbs probability measures associated with the measures $\mu^{\Bar{\varepsilon}}$ and $\displaystyle \sum_{k=1}^N r_k \, \chi_{\Bar{\epsilon}_k}$ at a temperature coherent with the considered problem. In \cite{magin2012coarse}, Magin et al. propose to simplify a discrete model for dinitrogen ($N_2$) composed of 9390 rotation-vibration energy-levels by creating energy bins, and replace the first model with an approximate discrete one composed of only 500 energy levels, which corresponds to approximating a sum a 9390 Diracs by a sum of 500 Diracs. This binning method can be extended to the general case. Like in section 2.3. of \cite{magin2012coarse}, consider a family of disjoint compact intervals of $\R_+$, $(R_k)_{1 \leq k \leq N}$, and define the average energy of the $k$-th bin by
\begin{equation} \label{eq:binenergy}
\epsilon_k := \varepsilon^0 + \frac{1}{r_k} \, \int_{R_k} I \, \text{d} \mu^{\bar{\varepsilon}}(I),
\end{equation}
where $r_k := \mu^{\bar{\varepsilon}}(R_k)$ is the degeneracy associated with this energy level, with $R_k$ such that $r_k$ is not equal to zero. The choice of $(R_k)_{1 \leq k \leq N}$ can be arbitrary, or made by minimizing a distance, like we suggested earlier.

\section{Conclusions} \label{concl}

In this paper we have built up a general framework for the kinetic modelling of non-relativistic mono and polyatomic gases. It is based on a set of allowed internal states $\mathcal{E}$ endowed with a suitable measure $\mu$. Each particle is characterized by a proper state $\zeta \in \mathcal{E}$ with associated energy $\varepsilon(\zeta)$. Owing only to conservations of momentum and total energy, we are able to define the collision rule and the corresponding Boltzmann operator leaving $\mathcal{E}$, $\mu$ and $\varepsilon(\zeta)$ generic (not explicit). The Boltzmann H-Theorem has been proved in this general setting, and Maxwellian equilibria (depending also on internal energy) have been explicitly recovered. Also the number of internal degrees of freedom has been investigated, and the fluid--dynamic Euler equations have been derived.
We have shown that usual models such as the monoatomic gas, the continuous internal energy description with weights \cite{desvillettes2005kinetic}, and the discrete energy levels description \cite{groppi1999kinetic,Giovangigli} fit into our framework. Moreover, several different models may be built within the present general framework, as semi-classical models and quantum description (as main example, we have proposed four models for the hydrogen fluoride).

The main advantage of this general setting is to be able to construct a model from direct physical considerations, since we consider \emph{internal states} and no longer directly \emph{energy}. For instance, instead of considering from the beginning the rotational energy of a molecule, we may start from the angular velocity and construct the energy by the laws of classical mechanics (owing to inertia tensor). Moreover, thanks to the Equipartition Theorem, we are allowed to combine any pair of different internal states spaces $\mathcal{E}_1 \times \mathcal{E}_2$ and the corresponding measures as $\mu_1 \otimes \mu_2$ thus, as an important consequence, we have the possibility to keep separate the vibrational and the rotational parts of the internal energy of polyatomic molecules. Indeed, setting the space of internal states as $\mathcal{E} = \mathbb{R}_+ \times \mathbb{N}$ and $\varepsilon(I,n) = I + e_n$, we are able to describe the rotational energy by means of a continuous variable, and the vibrational one by a discrete energy variable, as suggested in \cite{Herzberg}.
Analogously, also the options of keeping both kinds of internal energies continuous (as in Extended Thermodynamics) or discrete are admissible.

Finally we have also shown that, generically, all models that fit our framework can be reduced to a one-real-parameter model, similar to the continuous internal energy model with weights, at the price of suitably changing, in a rigorous way, the integration measure. In this reduction procedure the considerations on states turn out to be summed up into the energy (the reduced model considers the internal energy directly). From theory to simulation, given a molecule to study, we suggest to first construct the \emph{state-based} (general) model of the molecule from its physical description, then compute the associated energy law $\mu^{\Bar{\varepsilon}}$ by performing the reduction process and finally define the numerically-suited approximate discrete model as detailed in Section \ref{section8}. Also for the investigation of other interesting mathematical properties of the Boltzmann operator in this general frame, as the validity of the Fredholm alternative for the linearized operator, and the corresponding rigorous Chapman-Enskog asymptotic expansion up to Navier-Stokes equations, the original internal states formulation could be more intuitive to use.
We finally remark that even the combination of the continuous and the discrete energy model finally turns out to be, after the reduction, a continuous model with a weight explicitly computed in the paper, and this is a quite surprising result.

Of course it would be desirable to include in the kinetic model even more physical features of polyatomic particles. For instance, the quantum mechanical Boltzmann equation derived by Waldmann \cite{Wa57} and Snider~\cite{Sn60} is able to describe also the polarizations resulting from the effect of external fields on polyatomic gases \cite{MBKK90}. Such model admits two vectorial collision invariants, corresponding to momentum and angular momentum, bearing in mind that polyatomic particles are generally non spherical, and also the corresponding macroscopic equations include a proper angular momentum conservation equation \cite{MS64,FK72,MBKK90}. It is well known \cite{MS64} that in absence of polarization effects, namely for isotropic gases, the quantum mechanical theory yields the same formal results as classical or semi-classical approaches considered in this paper, but a general kinetic framework able to include also possible polarization (and therefore additional collision invariants) could be an interesting further step in kinetic investigation of polyatomic gases.

We aim also at extending our general way of modelling to mixtures of polyatomic gases, possibly undergoing chemical reactions. Of course collision rules, and consequently some technical parts of the proofs, would be much more complicated due to the presence of mass ratios and of the amount of energy produced or consumed by chemical reactions. The investigation of a suitable general framework for gas mixtures will be the subject of a future work.

\section*{Appendix}

\begin{lemma} \label{lemma:equiv}
Let $r>0$, assume that $\mu(\{ 0 < \Bar{\varepsilon} \leq r \}) > 0$. Then
\begin{align*}
\int_{\E} \Bar{\varepsilon}(\zeta) e^{-x \Bar{\varepsilon}(\zeta)} \, \emph{d}\mu(\zeta) &\underset{x \to \infty}{=} \int_{\Bar{\varepsilon} \leq r } \Bar{\varepsilon}(\zeta) e^{-x \Bar{\varepsilon}(\zeta)} \, \emph{d}\mu(\zeta) +  o \left( \int_{\Bar{\varepsilon} \leq r} \Bar{\varepsilon}(\zeta) e^{-x \Bar{\varepsilon}(\zeta)} \, \emph{d}\mu(\zeta) \right) \\
\text{and} \quad \int_{\E}  e^{-x \Bar{\varepsilon}(\zeta)} \, \emph{d}\mu(\zeta) &\underset{x \to \infty}{=} \int_{\Bar{\varepsilon} \leq r } e^{-x \Bar{\varepsilon}(\zeta)} \, \emph{d}\mu(\zeta) +  o \left( \int_{\Bar{\varepsilon} \leq r} e^{-x \Bar{\varepsilon}(\zeta)} \, \emph{d}\mu(\zeta) \right).
\end{align*}
\end{lemma}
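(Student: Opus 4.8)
The plan is to recast both assertions as statements about the negligibility of the tail integrals over $\{\Bar{\varepsilon}>r\}$. Writing $\int_{\E}=\int_{\Bar{\varepsilon}\le r}+\int_{\Bar{\varepsilon}>r}$ in each case, it suffices to establish, as $x\to\infty$,
\begin{gather*}
\int_{\Bar{\varepsilon}>r}\Bar{\varepsilon}\,e^{-x\Bar{\varepsilon}}\,\text{d}\mu = o\!\left(\int_{\Bar{\varepsilon}\le r}\Bar{\varepsilon}\,e^{-x\Bar{\varepsilon}}\,\text{d}\mu\right) \\
\text{and}\quad \int_{\Bar{\varepsilon}>r}e^{-x\Bar{\varepsilon}}\,\text{d}\mu = o\!\left(\int_{\Bar{\varepsilon}\le r}e^{-x\Bar{\varepsilon}}\,\text{d}\mu\right).
\end{gather*}
The key device is to rescale every integral by $e^{xr}$, splitting the exponent as $e^{-x\Bar{\varepsilon}}=e^{-xr}e^{-x(\Bar{\varepsilon}-r)}$, so that the sign of $\Bar{\varepsilon}-r$ becomes the decisive feature on each region.

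On the tail $\{\Bar{\varepsilon}>r\}$ the exponent $\Bar{\varepsilon}-r$ is \emph{strictly} positive, so $e^{xr}\Bar{\varepsilon}e^{-x\Bar{\varepsilon}}=\Bar{\varepsilon}e^{-x(\Bar{\varepsilon}-r)}\to 0$ pointwise, and for $x\ge 1$ it is dominated by $e^{r}\Bar{\varepsilon}e^{-\Bar{\varepsilon}}$, which is $\mu$-integrable since $\int_{\E}\Bar{\varepsilon}e^{-\Bar{\varepsilon}}\,\text{d}\mu=-Z'(1)<\infty$ by Proposition \ref{prop:ZisCinfty}. Dominated convergence then gives $e^{xr}\int_{\Bar{\varepsilon}>r}\Bar{\varepsilon}e^{-x\Bar{\varepsilon}}\,\text{d}\mu\to 0$; the same argument (dominating $e^{-x(\Bar{\varepsilon}-r)}$ by $e^{r}e^{-\Bar{\varepsilon}}$ and using $Z(1)<\infty$) yields $e^{xr}\int_{\Bar{\varepsilon}>r}e^{-x\Bar{\varepsilon}}\,\text{d}\mu\to 0$. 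On the region $\{\Bar{\varepsilon}\le r\}$, by contrast, $\Bar{\varepsilon}-r\le 0$, so $e^{x(r-\Bar{\varepsilon})}\ge 1$ and hence $e^{xr}\Bar{\varepsilon}e^{-x\Bar{\varepsilon}}=\Bar{\varepsilon}e^{x(r-\Bar{\varepsilon})}\ge\Bar{\varepsilon}$ there; integrating gives the uniform-in-$x$ lower bound $e^{xr}\int_{\Bar{\varepsilon}\le r}\Bar{\varepsilon}e^{-x\Bar{\varepsilon}}\,\text{d}\mu\ge\int_{0<\Bar{\varepsilon}\le r}\Bar{\varepsilon}\,\text{d}\mu>0$, the strict positivity being exactly where the hypothesis $\mu(\{0<\Bar{\varepsilon}\le r\})>0$ enters (together with $\Bar{\varepsilon}>0$ on that set). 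Identically, $e^{xr}\int_{\Bar{\varepsilon}\le r}e^{-x\Bar{\varepsilon}}\,\text{d}\mu\ge\mu(\{\Bar{\varepsilon}\le r\})>0$.

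Combining the two halves for each statement, the rescaled tail integral tends to $0$ while the rescaled main integral is bounded below by a fixed positive constant; dividing, the quotient tail/main tends to $0$, which is precisely the asserted $o(\cdot)$ relation. The crux of the argument — and the only point needing care — is the interplay between the \emph{strict} inequality $\Bar{\varepsilon}>r$ on the tail, which forces the rescaled tail to vanish (rather than merely stay bounded), and the strict positivity of the rescaled main integral's lower bound; the former rests on the integrability furnished by $Z(\beta)<\infty$ (Assumption \ref{assumption:integrability} and Proposition \ref{prop:ZisCinfty}), the latter on the standing hypothesis $\mu(\{0<\Bar{\varepsilon}\le r\})>0$.
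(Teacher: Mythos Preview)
Your proof is correct and follows essentially the same strategy as the paper: split $\int_{\E}$ into $\{\Bar{\varepsilon}\le r\}$ and $\{\Bar{\varepsilon}>r\}$, rescale by $e^{xr}$, show the rescaled tail tends to $0$ while the rescaled main part is bounded below by a positive constant, and conclude. The only minor technical variation is that you invoke dominated convergence (with the integrable majorants $e^{r}\Bar{\varepsilon}e^{-\Bar{\varepsilon}}$ and $e^{r}e^{-\Bar{\varepsilon}}$, justified via $Z(1)$ and $-Z'(1)$ finite) to kill the tail, whereas the paper uses monotone convergence on the non-increasing family $\zeta\mapsto e^{-x(\Bar{\varepsilon}(\zeta)-r)}$ directly; both are valid and equally short.
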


\bigskip

\begin{proof}
First, we have that
\begin{align*}
\int_{\E} \Bar{\varepsilon}(\zeta) e^{-x \Bar{\varepsilon}(\zeta)} \, \text{d}\mu(\zeta) &= \int_{\Bar{\varepsilon} \leq r } \Bar{\varepsilon}(\zeta) e^{-x \Bar{\varepsilon}(\zeta)} \, \text{d}\mu(\zeta) +  \int_{\Bar{\varepsilon} > r } \Bar{\varepsilon}(\zeta) e^{-x \Bar{\varepsilon}(\zeta)} \, \text{d}\mu(\zeta), \\
\text{and} \quad \int_{\E}  e^{-x \Bar{\varepsilon}(\zeta)} \, \text{d}\mu(\zeta) &= \int_{\Bar{\varepsilon} \leq r } e^{-x \Bar{\varepsilon}(\zeta)} \, \text{d}\mu(\zeta) +  \int_{\Bar{\varepsilon} > r } e^{-x \Bar{\varepsilon}(\zeta)} \, \text{d}\mu(\zeta).
\end{align*}
Thus we must prove
\begin{align*}
\int_{\Bar{\varepsilon} > r } \Bar{\varepsilon}(\zeta) e^{-x \Bar{\varepsilon}(\zeta)} \, \text{d}\mu(\zeta) &\underset{x \to \infty}{=}  o \left( \int_{\Bar{\varepsilon} \leq r} \Bar{\varepsilon}(\zeta) e^{-x \Bar{\varepsilon}(\zeta)} \, \text{d}\mu(\zeta) \right), \\
\int_{\Bar{\varepsilon} > r } e^{-x \Bar{\varepsilon}(\zeta)} \, \text{d}\mu(\zeta)&\underset{x \to \infty}{=}  o \left( \int_{\Bar{\varepsilon} \leq r} e^{-x \Bar{\varepsilon}(\zeta)} \, \text{d}\mu(\zeta) \right).
\end{align*}
First, note that
$$
\int_{\Bar{\varepsilon} \leq r} e^{-x \Bar{\varepsilon}(\zeta)} \, \text{d}\mu(\zeta) \geq \mu(\{\Bar{\varepsilon} \leq r\}) \, e^{-x r}.
$$
Since $\mu(\{\Bar{\varepsilon} \leq r\}) > 0$, we deduce that $\int_{\Bar{\varepsilon} \leq r} e^{-x \Bar{\varepsilon}(\zeta)} \, \text{d}\mu(\zeta) > 0$ and
$$
0 \leq \frac{\int_{\Bar{\varepsilon} > r } e^{-x \Bar{\varepsilon}(\zeta)} \, \text{d}\mu(\zeta)}{\int_{\Bar{\varepsilon} \leq r} e^{-x \Bar{\varepsilon}(\zeta)} \, \text{d}\mu(\zeta)} \leq \frac{e^{x r}}{\mu(\{\Bar{\varepsilon} \leq r\})} \, \int_{\Bar{\varepsilon} > r } e^{-x \Bar{\varepsilon}(\zeta)} \, \text{d}\mu(\zeta) = \frac{1}{\mu(\{\Bar{\varepsilon} \leq r\})} \, \int_{\Bar{\varepsilon} > r } e^{-x (\Bar{\varepsilon}(\zeta)-r)} \, \text{d}\mu(\zeta).
$$
Now note that, on the set $\{\Bar{\varepsilon} > r\}$,
$$
e^{-x (\Bar{\varepsilon}(\zeta)-r)} \underset{x \to \infty}{\longrightarrow} 0.
$$
Also, $(\zeta \mapsto e^{-x (\Bar{\varepsilon}(\zeta)-r)})_{x > 0}$ is a non-increasing family of non-negative functions on the set $\{\Bar{\varepsilon} > r\}$. By monotone convergence theorem,
$$
\int_{\Bar{\varepsilon} > r} e^{-x (\Bar{\varepsilon}(\zeta)-r)} \, \text{d}\mu(\zeta) \underset{x \to \infty}{\longrightarrow} 0.
$$
We deduce that
$$
\frac{\int_{\Bar{\varepsilon} > r } e^{-x \Bar{\varepsilon}(\zeta)} \, \text{d}\mu(\zeta)}{\int_{\Bar{\varepsilon} \leq r} e^{-x \Bar{\varepsilon}(\zeta)} \, \text{d}\mu(\zeta)}\underset{x \to \infty}{\longrightarrow} 0,
$$
that is our goal
$$
\int_{\Bar{\varepsilon} > r } e^{-x \Bar{\varepsilon}(\zeta)} \, \text{d}\mu(\zeta) \underset{x \to \infty}{=} o \left( \int_{\Bar{\varepsilon} \leq r} e^{-x \Bar{\varepsilon}(\zeta)} \, \text{d}\mu(\zeta) \right).
$$
Concerning the other goal, it obviously holds
$$
\int_{\Bar{\varepsilon} \leq r} \Bar{\varepsilon}(\zeta) e^{-x \Bar{\varepsilon}(\zeta)} \, \text{d}\mu(\zeta) \geq e^{-x r} \, \int_{\Bar{\varepsilon} \leq r} \Bar{\varepsilon}(\zeta) \, \text{d}\mu(\zeta),
$$
and, since $\mu(\{0 < \Bar{\varepsilon} \leq r\}) > 0$, we deduce that $\int_{\Bar{\varepsilon} \leq r} \Bar{\varepsilon}(\zeta) \, \text{d}\mu(\zeta) > 0$. Moreover,
$$
\int_{\Bar{\varepsilon} \leq r} \Bar{\varepsilon}(\zeta) \, \text{d}\mu(\zeta) \leq r \mu(\{ \Bar{\varepsilon} \leq r \}) < \infty
$$
(last inequality has been shown in Section \ref{section:presentationmodel}). It follows that
$$
0 \leq \frac{\int_{\Bar{\varepsilon} > r } \Bar{\varepsilon}(\zeta) e^{-x \Bar{\varepsilon}(\zeta)}  \text{d}\mu(\zeta)}{\int_{\Bar{\varepsilon} \leq r} \Bar{\varepsilon}(\zeta) e^{-x \Bar{\varepsilon}(\zeta)}  \text{d}\mu(\zeta)} \leq \frac{e^{x r}}{\int_{\Bar{\varepsilon} \leq r} \Bar{\varepsilon}(\zeta) \, \text{d}\mu(\zeta)} \, \int_{\Bar{\varepsilon} > r } \Bar{\varepsilon}(\zeta) e^{-x \Bar{\varepsilon}(\zeta)} \, \text{d}\mu(\zeta) = \frac{\int_{\Bar{\varepsilon} > r } \Bar{\varepsilon}(\zeta) e^{-x (\Bar{\varepsilon}(\zeta)-r)} \, \text{d}\mu(\zeta)}{\int_{\Bar{\varepsilon} \leq r} \Bar{\varepsilon}(\zeta) \, \text{d}\mu(\zeta)}.
$$
Now note that, on the set $\{\Bar{\varepsilon} > r\}$,
$$
\Bar{\varepsilon}(\zeta) e^{-x (\Bar{\varepsilon}(\zeta)-r)} \underset{x \to \infty}{\longrightarrow} 0.
$$
Also, $(\zeta \mapsto \Bar{\varepsilon}(\zeta) e^{-x (\Bar{\varepsilon}(\zeta)-r)})_{x > 0}$ is a non-increasing family of non-negative functions on the set $\{\Bar{\varepsilon} > r\}$. By monotone convergence theorem,
$$
\int_{\Bar{\varepsilon} > r} \Bar{\varepsilon}(\zeta) e^{-x (\Bar{\varepsilon}(\zeta)-r)} \, \text{d}\mu(\zeta) \underset{x \to \infty}{\longrightarrow} 0.
$$
We deduce that
$$
\frac{\int_{\Bar{\varepsilon} > r } \Bar{\varepsilon}(\zeta) e^{-x \Bar{\varepsilon}(\zeta)} \, \text{d}\mu(\zeta)}{\int_{\Bar{\varepsilon} \leq r} \Bar{\varepsilon}(\zeta) e^{-x \Bar{\varepsilon}(\zeta)} \, \text{d}\mu(\zeta)}\underset{x \to \infty}{\longrightarrow} 0,
$$
that is
$$
\int_{\Bar{\varepsilon} > r } \Bar{\varepsilon}(\zeta) e^{-x \Bar{\varepsilon}(\zeta)} \, \text{d}\mu(\zeta) \underset{x \to \infty}{=} o \left( \int_{\Bar{\varepsilon} \leq r} \Bar{\varepsilon}(\zeta) e^{-x \Bar{\varepsilon}(\zeta)} \, \text{d}\mu(\zeta) \right).
$$
\QED
\end{proof}

\end{document}